%% file: main.tex
\documentclass[acmsmall]{acmart} 
\usepackage{amsmath}
\usepackage{amsthm,thm-restate}
\usepackage{graphicx}
\usepackage{wrapfig}
\usepackage{caption}
\usepackage{subcaption}
\usepackage{enumitem}
\usepackage{bbm}
\usepackage{xspace}

\input{header}

\title{SPLIT: SymPathy for Large jobs Improves Tail latency}

\author{Zhouzi Li}
\email{zhouzil@andrew.cmu.edu}
\affiliation{
  \institution{Carnegie Mellon University}
  \department{Computer Science Department}
  \country{United States}
}

\author{Mor Harchol-Balter}
\email{harchol@cs.cmu.edu}
\affiliation{
  \institution{Carnegie Mellon University}
  \department{Computer Science Department}
  \country{United States}
}

\author{Alan Scheller-Wolf}
\email{awolf@andrew.cmu.edu}
\affiliation{
  \institution{Carnegie Mellon University}
  \department{Tepper Business School}
  \country{United States}
}

\begin{document}

\input{0-abstract}
\maketitle

\input{1-Intro}
\input{2-prior}
\input{3-setting}
\input{4-SPLIT}
\input{5-Thresh-SPLIT}
\input{6-TAG-SPLIT}
\input{7-simulation}
\input{8-discussion}
\input{9-conclusion}

\section*{Acknowledgement}
This work was supported by NSF-CIF-2403194, NSF-III-2322973, and NSF-CMMI-2307008.

\bibliographystyle{ACM-Reference-Format}
\bibliography{bib}

\appendix
\input{Appendix}

\end{document}

%% file: header.tex
\newcommand{\E}[1]{\mathrm{\bf E}\left[#1\right]}
\def\P#1{\mathrm{\mathbb{P}}\left[#1\right]}

\usepackage{xparse}
\usepackage{xspace}

\newcommand{\threshsplit}[1]{\mbox{Thresh-SPLIT}(\ensuremath{#1})\xspace}
\newcommand{\TAGsplit}[1]{\mbox{TAGS-SPLIT}(\ensuremath{#1})\xspace}

\newcommand{\SPLIT}{SPLIT\xspace}
\newcommand{\WA}[2]{W_{A_{#1}}(#2)}
\newcommand{\Wsuper}[1]{W_{#1}^{super}}
\newcommand{\Wstar}[1]{W_{#1}^{SRPT-1}}
\newcommand{\WAstar}[2]{W_{A_{#1}}^{SRPT-1}(#2)}

\newtheorem{approximation}{Approximation}

%% file: 0-abstract.tex
\begin{abstract}
We study the asymptotic response time tail in the M/G/$n$ multi-server queue with heavy-tailed (regularly varying) job sizes, a setting representative of modern computing workloads. For single-server systems, tail optimization is well understood: under heavy-tailed job sizes, policies such as SRPT that strictly prioritize short jobs are strongly tail optimal, and giving any priority to large jobs is harmful. For multi-server systems, the question has been almost entirely open.

This paper gives the first strongly tail-optimal scheduling policies for the M/G/$n$ queue with heavy-tailed job sizes. Our central finding is that the multi-server case is intrinsically different from the single-server case: giving a small amount of ``sympathy'' to large jobs is essential for strong tail optimality. We establish strong (or arbitrarily close to strong) tail optimality across the full stability region, both with and without knowledge of job sizes.
\end{abstract}

%% file: 1-Intro.tex
\section{Introduction}

Optimal scheduling has been an important topic in the performance modeling community for many decades. In the simplest setting, where there is just a single server and the goal is to minimize mean response time, the optimal scheduling policy is well understood: at all times, preemptively serve the job with the Shortest Remaining Processing Time (SRPT) \cite{schrage1968proof}. The intuition is that a single large job occupying the server forces many smaller jobs to wait, inflating the response time of all of them; SRPT avoids this by always preempting in favor of shorter jobs. This intuition is so fundamental that prioritizing short jobs has become almost synonymous with optimal scheduling.

However, in addition to optimizing the mean response time, another important question is that of \emph{optimal tail scheduling}. In many modern computing systems, the tail of the response time distribution matters at least as much as the mean: Metrics such as the 99th or 99.9th percentile of response time, often codified as service-level objectives (SLOs), directly govern user experience. 

This paper develops the first tail optimal scheduling policies for multiserver systems with heavy-tailed job sizes. We formally define tail optimality in Section~\ref{sec:intro:tail}, briefly survey existing tail optimization results in Section~\ref{sec:intro:summary}, and explain why the multi-server case remains open in Section~\ref{sec:intro:open}. We then present the main intuition behind our approach in Section~\ref{sec:intro:intuition}.

\subsection{Definition of tail optimality}
\label{sec:intro:tail}

A common theoretical formalization of tail optimization studies the asymptotic decay of $\P{T > t}$ as $t \to \infty$ \cite{wierman2012tail,boxma2007tails,nuyens2008preventing,yu2024strongly,yu2025tale,harlev2025gittins}. As argued in \cite{yu2024strongly}, there are several reasons to focus on this asymptotic regime: (i) optimizing $\P{T > t}$ for any single value of $t$ is seldom the sole design objective; rather, one hopes to achieve low $\P{T > t}$ across a range of $t$; (ii) practical SLOs relate to high-quantile response times, which correspond to large values of $t$; and (iii) optimizing $\P{T > t}$ for fixed finite $t$ appears to be theoretically intractable, whereas the asymptotic $t \to \infty$ limit has seen promising recent progress.

To make the notion of optimality precise, we say a scheduling policy $\pi$ is \emph{weakly tail optimal} if there exists a constant $c \geq 1$ such that 
\[
  \sup_{\pi'} \limsup_{t\to\infty} \frac{\P{T^\pi>t}}{\P{T^{\pi'}>t}} = c,
\]
and \emph{strongly tail optimal} if additionally $c = 1$. Intuitively, a weakly optimal policy has the correct tail decay rate, while a strongly optimal policy matches the best achievable tail up to a $(1+o(1))$ factor.

\subsection{Brief summary of tail optimization results}
\label{sec:intro:summary}

For single-server systems, tail optimization is by now well understood, though the answer depends critically on whether the job size distribution is heavy-tailed or light-tailed. As shown by \cite{wierman2012tail}, no single policy can be weakly tail optimal for both heavy-tailed and light-tailed distributions simultaneously; the two settings require fundamentally different scheduling strategies.

Heavy-tailed (regularly varying) job size distributions are particularly important because they are widely observed in real-world computer workloads, including web file sizes \cite{crovella1997self}, Unix process lifetimes \cite{harchol1997exploiting}, and modern cloud and datacenter job sizes \cite{tirmazi2020borg}. Under heavy-tailed job sizes, a number of policies, including SRPT, Processor Sharing, Foreground-Background (FB), and SMART, have been shown to be strongly tail optimal \cite{nuyens2008preventing,borst2006sojourn,guillemin2004tail,boxma2007tails}. Together with the lower bound in \cite{wierman2012tail}, these results completely characterize single-server tail optimality under heavy tails.
Under light-tailed job sizes, the story is different: SRPT has the worst tail decay rate, while FCFS is weakly but not strongly tail optimal \cite{boxma2007tails,grosof2021nudge}; the strongly optimal policy Boost was proposed recently in \cite{yu2024strongly}. We defer a more detailed discussion to Section~\ref{sec:prior}.

Table~\ref{tab:single-server} summarizes the state of affairs for single-server systems.

\begin{table}[ht]
\centering
\caption{ Response time tail optimality results for single-server (M/G/1) systems.}
\label{tab:single-server}
\begin{tabular}{l|l|l}
\hline
 & \textbf{Light-tailed Sizes} & \textbf{Heavy-tailed Sizes} \\
\hline
\textbf{Weakly optimal} & FCFS \cite{boxma2007tails}, Nudge \cite{grosof2021nudge} &  \\
\hline
\textbf{Strongly optimal (known sizes)} & Boost \cite{yu2024strongly} & SRPT \cite{nuyens2008preventing}, SMART \cite{nuyens2008preventing} \\
\hline
\textbf{Strongly optimal (unknown sizes)} & Gittins \cite{harlev2025gittins} & PS \cite{borst2006sojourn,guillemin2004tail}, FB \cite{nuyens2008preventing} \\
\hline
\end{tabular}
\end{table}

\subsection{Tail optimality is open for multi-server systems}
\label{sec:intro:open}

The multi-server setting, by contrast, remains almost entirely open. The only prior result on tail-optimal scheduling in the M/G/$n$ is due to \cite{yu2025tale}, which studies light-tailed job sizes and shows that Boost is strongly tail optimal in the heavy-traffic regime; however, the performance of Boost degrades substantially outside of heavy traffic. To the best of our knowledge, no existing policy has been proved to be strongly -- or even weakly -- tail optimal for multi-server systems with heavy-tailed job sizes, which, as discussed above, is the setting most representative of real-world systems and workloads. This paper is the first to fill this gap. Table~\ref{tab:multi-server} summarizes the known results for multi-server systems, including the contributions of this paper.

We study tail-optimal scheduling in the M/G/$n$ with heavy-tailed (regularly varying) job sizes. Our central finding is that optimal scheduling for multi-server systems is \emph{intrinsically different} from optimal scheduling for single-server systems. On the one hand, we prove that SRPT-$n$ (where the $n$ servers at all times serve the $n$ jobs with the shortest remaining processing time) is still \emph{weakly} tail optimal (Theorem~\ref{theorem: SRPT-n is weakly tail optimal new}). On the other hand, to achieve \emph{strong} tail optimality, we find that one must give some ``sympathy'' (priority) to large jobs, a strategy that would be harmful in a single-server system.
In the rest of the introduction, we describe our new policies and provide intuition for why giving priority to large jobs is beneficial in multi-server systems.

\begin{table}[ht]
\centering
\caption{Tail optimality results for multi-server (M/G/$n$) systems. $^*$In the case when system load $\rho<\frac{n-1}{n}$. $^\dagger$Arbitrarily close to strongly optimal. }
\label{tab:multi-server}
\begin{tabular}{l|l|l}
\hline
 & \textbf{Light-tailed Sizes} & \textbf{Heavy-tailed Sizes} \\
\hline
\textbf{Weakly optimal} &  & SRPT-$n$ [\textbf{This paper}] \\
\hline
\textbf{Strongly optimal} & Boost \cite{yu2025tale} (heavy traffic only) & \SPLIT$^*$ [\textbf{This paper}] \\
\textbf{(known sizes)} &  & $\threshsplit{d}^\dagger$ [\textbf{This paper}] \\
\hline
\textbf{Strongly optimal} &  & $\TAGsplit{d}^\dagger$ [\textbf{This paper}] \\
\textbf{(unknown sizes)} &  &  \\
\hline
\end{tabular}
\end{table}

\subsection{Intuition behind our approach \SPLIT}
\label{sec:intro:intuition}

To understand why multi-server systems require a different approach, it is helpful to contrast the single-server and multi-server cases.

In a {\em single-server} system, the main concern is preventing small jobs from being stuck behind large ones. Under heavy-tailed job sizes, being blocked by a single large job can be catastrophic for the many small jobs waiting behind it. One might consider giving large jobs some help to improve the response time tail, since it is the large jobs that dominate the tail. However, any service given to a large job comes directly at the expense of small jobs waiting behind it: There is no way to help large jobs without hurting small ones. This is roughly why SRPT, which strictly prioritizes short jobs, is strongly tail optimal under heavy tails \cite{nuyens2008preventing}. Even under light-tailed job sizes, where the optimal strategies are completely different and FCFS is weakly optimal, policies like Nudge \cite{grosof2021nudge} and Boost \cite{yu2024strongly} still rely on giving short jobs extra priority over large ones to improve upon FCFS.

In a {\em multi-server} system, however, the situation is fundamentally different. Even if a large job occupies one server, the remaining servers can continue to process small jobs without interruption.  This means we may be able to afford to give large jobs some sympathy, dedicating service capacity to them, without significantly hurting small jobs in the process. Since it is the large jobs that dominate the response time tail, this sympathy could improve tail performance. Moreover, an added perk of giving large jobs some priority is improved \emph{packing}: Under SRPT-$n$, at the end of a busy period, only a few large jobs remain, resulting in underutilization of the servers (see \cite{grosof2025outperforming} for detailed discussion). This underutilization can potentially be mitigated by finishing off some of these larger jobs early on. We discuss the effect of packing in detail in Section~\ref{sec:discussion:packing}.

This is precisely the idea behind our new policy \SPLIT (Sympathy for Large jobs Improves Tail latency). Consider first the case where system load $\rho < \frac{n-1}{n}$. Here the system can afford to sacrifice one entire server to serve a large job, while the remaining servers continue to drain small jobs. In this regime, we design the \SPLIT policy: The policy splits the $n$ servers into two groups, where $n-1$ servers run SRPT-$(n{-}1)$, ensuring that small jobs are processed efficiently, while the remaining one server runs a non-preemptive Largest-Job-First (LJF) policy, giving large jobs dedicated service. We prove that \SPLIT is strongly tail optimal for any load $\rho<\frac{n-1}{n}$ (Section~\ref{sec:SPLIT}).

When $\rho \geq \frac{n-1}{n}$, dedicating an entire server to large jobs is no longer feasible: Imagine a very large job occupying a server. During this long period of time, all arrivals must be served by the remaining $n-1$ servers, and the system becomes temporarily unstable, which inflates the response time of all small jobs arriving during this period of time. 
 To handle this regime, we design $\threshsplit{d}$, which uses a size threshold $d$ to control the division of work. Jobs smaller than $d$ are routed to the $n-1$ servers, which are guaranteed (by the choice of $d$) a load strictly below $1$, while jobs larger than $d$ are handled by one server running SRPT-1. By choosing $d$ appropriately, $\threshsplit{d}$ can be made arbitrarily close to strongly tail optimal under any load $\rho < 1$ (Section~\ref{sec:threshsplit}).

Finally, in Section~\ref{sec:TAG-SPLIT}, we consider the setting where exact job sizes are unknown. Motivated by the Task Assignment by Guessing Size (TAGS) policy in \cite{harchol2002task}, we generalize the $\threshsplit{d}$ policy to the $\TAGsplit{d}$ policy. Relying on a standard approximation in the TAGS literature (see Approximation \ref{approx:TAGS}), we show that by choosing the threshold $d$ appropriately, $\TAGsplit{d}$ can also be made arbitrarily close to strongly tail optimal under all loads $\rho < 1$.

\subsection{Contributions}

Our main contributions are as follows.
\begin{itemize}
    \item We prove that SRPT-$n$ is weakly tail optimal for the M/G/$n$ queue with heavy-tailed job sizes. (Appendix~\ref{sec:SRPT-n weak tail optimal new})
    \item We introduce \SPLIT, the first scheduling policy that is strongly tail optimal for the M/G/$n$ queue with heavy-tailed job sizes. \SPLIT achieves strong tail optimality when $\rho < \frac{n-1}{n}$, and empirically outperforms SRPT-$n$ consistently across high percentiles of the response time. (Section~\ref{sec:SPLIT})
    \item We introduce $\threshsplit{d}$, a parameterized family of policies that extends strong tail optimality to the full stability region $\rho < 1$. By choosing the threshold $d$ appropriately, $\threshsplit{d}$ can be made arbitrarily close to strongly tail optimal under any load. (Section~\ref{sec:threshsplit})
    \item We generalize $\threshsplit{d}$ to $\TAGsplit{d}$ for the setting where exact job sizes are unknown to the scheduler. Relying on a standard approximation in the TAGS literature (Approximation~\ref{approx:TAGS}), $\TAGsplit{d}$ can likewise be made arbitrarily close to strongly tail optimal under any load. (Section~\ref{sec:TAG-SPLIT})
\end{itemize}

%% file: 2-prior.tex
\section{Prior Work}
\label{sec:prior}

We now review the most relevant prior work, organized into three areas:
scheduling policies that optimize mean response time (Section~\ref{sec:prior:mean}),
analysis of tail behavior under various scheduling disciplines (Section~\ref{sec:prior:tail-analysis}),
and policies designed to optimize tail latency (Section~\ref{sec:prior:tail-opt}).

\subsection{Scheduling to optimize mean response time}
\label{sec:prior:mean}

The most extensively studied objective in scheduling theory is minimizing mean response time. In the single-server M/G/1 queue, SRPT (Shortest Remaining Processing Time) is known to minimize mean response time \cite{schrage1968proof}. When exact job sizes are unknown, the Gittins index policy is mean-optimal \cite{scully2021gittins,aalto2009gittins}. 
In the case of heavy traffic, these single-server results extend to the multi-server M/G/$n$; specifically, SRPT $n$ and M-Gittins are asymptotically mean-optimal in heavy traffic \cite{grosof2019srpt,scully2020optimal}.

Recently, Grosof and Hurtado-Lange \cite{grosof2025outperforming} introduced the SRPT-except-($n{+}1$) policy, which is the only prior work observing that giving some priority to large jobs can benefit mean response time. By occasionally prioritizing larger jobs, the policy achieves better server packing and provably outperforms SRPT-$n$ in mean response time at all loads.

Our work differs fundamentally in objective: we optimize the \emph{tail} of the response time distribution rather than the mean. As established by Wierman and Zwart \cite{wierman2012tail}, tail optimization is a fundamentally different problem from mean optimization, requiring qualitatively different scheduling strategies.

\subsection{Analysis of tail behavior}
\label{sec:prior:tail-analysis}

A substantial body of work has analyzed the \emph{tail} of the response time distribution under various scheduling policies. These works focus on characterizing tail behavior rather than establishing optimality results.

\paragraph{Single-server results.}
In the M/G/1 with regularly varying service times of index ($-\alpha$), the tail behavior depends critically on the scheduling discipline.
Under FCFS, the response time tail is regularly varying with a \emph{heavier} index $-(\alpha-1)$, because a large job delays all subsequent arrivals \cite{boxma2007tails,borst2003impact}.
In contrast, Processor Sharing preserves the service time tail index $(-\alpha)$: The response time behaves as though each job were served in isolation at a reduced rate $(1-\rho)$ \cite{zwart2000sojourn,borst2006sojourn,guillemin2004tail}. The same tail-preservation property holds for FB and SRPT \cite{nuyens2008preventing,nunezqueija2002equally}. Borst et al.\ \cite{borst2003impact} and Boxma and Zwart \cite{boxma2007tails} provide comprehensive comparisons across disciplines. Boxma and Denisov \cite{boxma2011sojourn} further characterize the full range of achievable tail indices in the single-server setting, showing that for any index in $[-\alpha, 1-\alpha]$, a scheduling discipline can be constructed to achieve it.

Under light-tailed job sizes, the picture reverses. The response time tail decays exponentially under FCFS, and classical results establish the exponential decay rate \cite{abate1995exponential}. Mandjes and Zwart \cite{mandjes2006large} analyze PS under light tails, and Nuyens and Zwart \cite{nuyens2006large} show that SRPT actually produces a \emph{worse} (slower-decaying) tail than FCFS in this regime, which is the opposite of the heavy-tailed case. 

\paragraph{Multi-server results.}
The multi-server case is considerably less well understood, and to our best of knowledge, all existing tail-analytic results concern FCFS.
Foss and Korshunov \cite{foss2006heavy} study the tail of the FCFS waiting time in a 2-server queue with heavy-tailed job sizes, introducing the terms ``minimum stability'' ($\rho \geq \frac{n-1}{n}$) and ``maximum stability'' ($\rho < \frac{n-1}{n}$), and showing qualitatively different tail behavior in the two regimes. Their follow-up \cite{foss2012large} extends the analysis to the general GI/GI/$s$ FCFS queue, establishing sharp bounds via a ``principle of $s{-}k$ big jumps.'' Boxma, Deng, and Zwart \cite{boxma2002mg2} analyze a 2-server heterogeneous queue and identify a phase transition in tail behavior based on load relative to each server's capacity; Blanchet and Murthy \cite{blanchet2016tail} further refine the analysis for the half-loaded GI/GI/2. 

Notably, all of these results, either in single-server or multi-server systems, analyze specific policies without addressing optimality, hence these are different from our work. Furthermore, the tail behavior of non-FCFS policies in multi-server systems remains entirely unexplored prior to the current paper. 
\subsection{Scheduling to optimize tail latency}
\label{sec:prior:tail-opt}

In the previous subsection, we covered work which analyzes the tail behavior of various specific policies, that work did not identify which policies are {\em optimal} in their response time tail.  We now discuss prior work on tail-optimality. 

\paragraph{Single-server results.}
Wierman and Zwart \cite{wierman2012tail} formalize the notions of weak and strong tail optimality and prove a fundamental impossibility: No single work-conserving, non-anticipating policy can be weakly tail optimal for both heavy-tailed and light-tailed job sizes simultaneously.
Under heavy-tailed (regularly varying) job sizes, SRPT, FB, and the SMART family \cite{wierman2005nearly} are all strongly tail optimal \cite{nuyens2008preventing}, as are PS \cite{borst2006sojourn,guillemin2004tail} and other policies that do not allow a job to accumulate excessive delay. Together with the lower bound established in \cite{wierman2012tail} and summarized in Appendix~A of \cite{yu2024strongly}, these results completely characterize single-server tail optimality under heavy-tailed job size distributions.
Under light-tailed job sizes, Boxma and Zwart \cite{boxma2007tails} showed that FCFS is weakly tail optimal. Grosof et al.\ \cite{grosof2021nudge} introduced Nudge, which stochastically improves upon FCFS, demonstrating that FCFS is not strongly tail optimal. Van Houdt \cite{vanhoudt2022nudge} and Charlet and Van Houdt \cite{charlet2024nudgem} built upon Nudge and further improved tail performance. Only recently, Yu and Scully \cite{yu2024strongly} proposed Boost, the first policy achieving \emph{strong} tail optimality under light-tailed distributions, and Harlev et al.\ \cite{harlev2025gittins} extended strong tail optimality to unknown job sizes via a Gittins-based policy.

Importantly, none of these policies, whether designed for heavy-tailed or light-tailed distributions, prioritizes large jobs. The optimal policies for heavy-tailed distributions (e.g., SRPT, SMART) all prevent large jobs from delaying short ones. Even for light-tailed distributions where FCFS is weakly optimal, Nudge and Boost both give short jobs extra priority over large ones to improve upon FCFS.

\paragraph{Multi-server results.}
While the landscape of tail optimality is mostly clear in single-server systems, the multi-server setting remains almost entirely open. The only prior work on tail-optimal scheduling in multi-server systems is Yu et al.\ \cite{yu2025tale}, which studies {\em light-tailed} job sizes and shows that Boost is strongly tail optimal in heavy traffic. However, outside of this regime, the tail performance of Boost degrades significantly. Interestingly, the same paper proposes a heuristic called CombinedBoost that empirically achieves strong tail performance even outside of heavy traffic. In CombinedBoost, large jobs are given some priority for a better packing effect. However, no theoretical analysis is provided for CombinedBoost; all evaluations are numerical.

To the best of our knowledge, no existing policy has been proved to be strongly, or even weakly, tail optimal for multi-server systems with heavy-tailed job sizes: This paper is the first to do so.

%% file: 3-setting.tex
\section{Problem Setting and Notation}

In this section, we formally describe the system model, define the key notation used throughout the paper, and establish preliminary lower bounds on the response time tail that serve as benchmarks for our policies.

\subsection{Problem Setting}

\paragraph{System model.}
We consider an M/G/$n$ queueing system with $n$ identical servers. Jobs arrive according to a Poisson process with rate $\lambda$, and each job has a random size drawn independently from a distribution $S$.

\paragraph{Heavy-tailed job sizes.}
Following \cite{wierman2012tail}, we focus on \emph{regularly varying} job size distributions, i.e., distributions with Pareto-like tails. Formally, a job size distribution $S$ is regularly varying with index $\alpha > 0$ if for any $k > 0$,
\begin{equation}
    \lim_{t\to\infty} \frac{\P{S>kt}}{\P{S>t}} = k^{-\alpha}.
    \label{eq:regular varying definition}
\end{equation}
We assume $\alpha > 1$ so that the mean job size $\E{S}$ is finite.

\paragraph{Preemptible jobs.}
Jobs are preempt-resume, meaning a preempted job can later continue from where it left off with no loss of work. Preemptibility is essential in the heavy-tailed setting: \cite{anantharam1988delays} shows that no non-preemptive scheduling discipline can prevent large delays from building up under heavy-tailed job sizes (see also \cite{boxma2007tails} for a survey).

\paragraph{Known job sizes.}
We initially assume that job sizes are known to the scheduler upon arrival. In Section~\ref{sec:TAG-SPLIT}, we relax this assumption and show that \TAGsplit{d} achieves arbitrarily close to optimal tail performance without knowledge of job sizes.

\paragraph{Load and resource requirement.}
We define the \emph{load} of the system as $\rho := \frac{1}{n} \lambda \E{S}$ and the \emph{resource requirement} as $r := \lambda \E{S} = n\rho$; here $r$ represents the minimum number of servers needed to keep the system stable \cite{harchol2013performance}. We will also need the corresponding notation for the resource requirement contributed by jobs with size larger than $x$:
\begin{equation}
    r_{>x} := \lambda \cdot \P{S > x} \cdot \E{S \mid S > x} = \lambda \E{S \cdot \mathbf{1}\{S > x\}}.
    \label{eq:r_x}
\end{equation}

\paragraph{Objective.}
Our goal is to minimize the asymptotic tail of the response time distribution. As defined in Section~\ref{sec:intro:tail}, a scheduling policy $\pi$ is \emph{weakly tail optimal} if
\begin{equation}
    \sup_{\pi'} \limsup_{t\to\infty} \frac{\P{T^\pi>t}}{\P{T^{\pi'}>t}} = c
    \label{eq:asym tail}
\end{equation}
for some finite constant $c \geq 1$, and \emph{strongly tail optimal} if additionally $c = 1$.

\subsection{Preliminaries}
\label{sec:theory:lowerbound}

In this subsection, we first recall the known optimal tail for single-server systems with heavy-tailed job sizes. Then we use the single-server result to derive two lower bounds on the tail of any multi-server policy, each of which is tighter for a different load range.

\paragraph{Single-server optimal tail.}
For a single-server system with heavy-tailed job sizes, prior works \cite{boxma2007tails,wierman2012tail,yu2024strongly} have established that the optimal tail decay rate is
\begin{equation}
    \lim_{t\to\infty}\frac{\P{T>t}}{\P{S>(1-\rho) t}} = 1,
    \label{eq:single server heavy tail}
\end{equation}
which is achieved by policies such as SRPT \cite{nuyens2008preventing}, Processor Sharing \cite{borst2006sojourn,guillemin2004tail}, and SMART \cite{nuyens2008preventing}.

\paragraph{Lower bound 1: Super server.}
Consider a single ``super server'' with speed $n$. Since a super server can essentially mimic the behavior of $n$ unit-speed servers,  it strictly dominates any system of $n$ unit-speed servers. Therefore, for any multi-server policy $\pi$,
\begin{equation}
    \P{T^\pi > t}\geq \P{T^{\text{super}} > t},
    \label{eq:super server dominance}
\end{equation}
where $T^{\text{super}}$ denotes the response time under the optimal policy for the super server. Note that the load $\rho$ is the same in both the M/G/$n$ and the super server, since the super server has speed $n$ and effectively sees jobs of size $S/n$.  Applying \eqref{eq:single server heavy tail} with effective job size $S/n$ yields
\begin{equation}
    \lim_{t\to\infty}\frac{\P{T^\pi >t}}{\P{\frac{S}{n}>(1-\rho) t}} \geq \lim_{t\to\infty}\frac{\P{T^{\text{super}} >t}}{\P{\frac{S}{n}>(1-\rho) t}} = 1.
    \label{eq:lower bound super server}
\end{equation}

\paragraph{Lower bound 2: Job size.}
Since the response time is at least the job size, for any multi-server policy $\pi$ and any $t>0$,
\begin{equation}
    \frac{\P{T^\pi >t}}{\P{S>t}} \geq 1.
    \label{eq: lower bound size}
\end{equation}

\paragraph{Comparing the two lower bounds.}
Lower bound~1 \eqref{eq:lower bound super server} has a larger denominator (and is therefore tighter) when $\rho > \frac{n-1}{n}$. Lower bound~2 \eqref{eq: lower bound size} is tighter when $\rho < \frac{n-1}{n}$.

%% file: 4-SPLIT.tex
\section{\SPLIT}
\label{sec:SPLIT}

In this section, we introduce the \SPLIT (SymPathy for Large jobs Improves Tail latency) policy, which is strongly tail optimal when the load $\rho$ satisfies $\rho < \frac{n-1}{n}$. For the case when $\rho \geq \frac{n-1}{n}$, we will show a different policy in Section \ref{sec:threshsplit}.

\subsection{Policy \SPLIT Definition}

The policy \SPLIT splits the $n$ servers into two groups: The first $n-1$ servers process jobs according to SRPT-$(n-1)$, and the last one server serves according to Non-preemptive Largest Job First (LJF) (see Figure~\ref{fig:SPLIT_diagram} for illustration).

\begin{definition}[$\SPLIT$]
    \label{def:SPLIT}
    The policy \SPLIT splits $n$ servers into two groups: The first $n-1$ servers process jobs according to SRPT-$(n-1)$ and are called the \emph{SRPT servers}, and the last one server serves according to Non-preemptive Largest Job First (LJF) and is called the \emph{LJF server}. Specifically, whenever the LJF server is idle, it fetches the job with the largest ({\em original}) size in the system (either in the queue or being served by the SRPT servers) and serves it. If there are no jobs in the system, the LJF server will serve the next arriving job. 
    At every moment of time, among the rest of the jobs, the $n-1$ jobs with the smallest {\em remaining size} run on the SRPT servers. 
\end{definition}

\begin{figure}[h]
    \centering
    \includegraphics[width=0.45\textwidth]{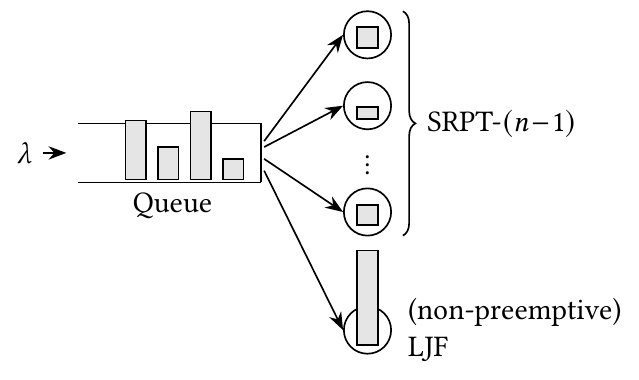}
    \caption{An illustration of the \SPLIT policy. Jobs arrive to a central queue. The LJF server always fetches the largest job, while the remaining $n-1$ SRPT servers serve the jobs with the smallest remaining sizes.}
    \label{fig:SPLIT_diagram}
\end{figure}

\subsection{Main Theorem and Proof Sketch}
\label{sec:split:main}

In this subsection, we provide the main theorem, Theorem \ref{thm: SPLIT tail optimal}, and a high-level proof outline (we provide the results needed to complete the proof in Section \ref{sec:SPLIT proof small jobs} and \ref{sec:SPLIT proof large jobs}). The argument that \SPLIT is strongly tail optimal is an immediate corollary of Theorem \ref{thm: SPLIT tail optimal} because of Lower bound 2 in Section \ref{sec:theory:lowerbound}.

The main theorem of this section is the following:

\begin{theorem}
    \label{thm: SPLIT tail optimal}
    Suppose the load $\rho$ satisfies $\rho < \frac{n-1}{n}$. Then the asymptotic response time tail under policy \SPLIT is captured by the following equation: 
    \begin{equation}
        \lim_{t\to \infty} \frac{\P{T^{\SPLIT} >t}}{\P{S>t}} = 1.
        \label{eq:SPLIT tail}
    \end{equation}
\end{theorem}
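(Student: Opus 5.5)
The lower bound $\liminf_{t\to\infty}\P{T^{\SPLIT}>t}/\P{S>t}\ge 1$ is immediate from Lower bound~2 \eqref{eq: lower bound size}, so the work is the matching upper bound. Fix a small $\epsilon>0$. I would split the tagged job by its size $S$ into \emph{small} jobs ($S\le \delta t$ for a small $\delta$), \emph{medium} jobs ($\delta t< S\le (1-\epsilon)t$), and \emph{large} jobs ($S>(1-\epsilon)t$). I would bound $\P{T>t}$ separately on each event. The large-job event contributes at most $\P{S>(1-\epsilon)t}\sim(1-\epsilon)^{-\alpha}\P{S>t}$ by \eqref{eq:regular varying definition}. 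Letting $\epsilon\to0$ then gives the limit 1, provided the small and medium contributions are $o(\P{S>t})$.

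\textbf{Small jobs.} Under \SPLIT, a job of size $x$ on the SRPT servers is delayed only by work of jobs with remaining size below $x$, plus at most one job occupying the LJF server that it cannot use. The LJF server removes the single largest job. The remaining $n-1$ servers therefore face load at most $\lambda\E{S}/(n-1)<1$, because $\rho<\frac{n-1}{n}$ gives $r<n-1$. I would dominate the tagged small job's response time by a quantity based on an SRPT-$(n-1)$ bound. One option is to compare with an M/G/1 with speed $n-1$ serving jobs of truncated size $\le x$, in the standard SRPT-$k$ style, with an additive term of $x$. For $T>t$ with $x\le\delta t$, a busy period of truncated work (sizes $\le\delta t$) at load $<1$ must last order $t$. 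Truncated sizes make this event decay faster than any polynomial, or at least like $\P{S>t}^{1+\eta}$, by a Chernoff/Fuk–Nagaev bound on compound Poisson sums with bounded jumps. The one catch is that truncation is at the tagged job's size, while larger jobs can still hog SRPT servers briefly. Under SRPT they are preempted, however, except that up to $n-2$ larger jobs may run when fewer than $n-1$ small jobs are present, and that is exactly the regime where the small job is in service anyway.

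\textbf{Medium jobs.} This is the main obstacle. A job of size $x\in(\delta t,(1-\epsilon)t]$ can have $T>t$ only if it sits unserved for time $\ge \epsilon t$. I would argue that this requires an atypical event with at least one additional big job. One possibility is that some other job of size $\gtrsim \epsilon t$ is present, so that the tagged job loses the LJF server to it while also being starved on the SRPT servers. The other is that the SRPT servers are swamped by smaller work for order $t$ time, which is the light-tailed truncated deviation from before. Each arrival of a large job within a window of length $O(t)$ has probability $O(t\,\P{S>\epsilon\delta t})$. Multiplying by $\P{S>\delta t}$ for the tagged job gives order $t\,\P{S>t}^2$ up to constants, which is $o(\P{S>t})$ since $\alpha>1$. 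The delicate part is showing that one big competitor is genuinely needed. If the tagged job is the largest in the system, LJF fetches it as soon as the LJF server frees up. That server is then held by the previous (smaller or comparable) job for at most its residual size. That residual is itself $\ge\epsilon t/2$ only if another big job exists, which closes the argument. The remaining subtlety is handling jobs already present at arrival, i.e. the stationary residual process. I would handle it with the stationary bound that the probability of the system containing a job of residual size $>y$ is $O(y\,\P{S>y})$, which is $o(1)$ as $y\to\infty$ and combines with $\P{S>\delta t}$ to give $o(\P{S>t})$.

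Combining the three cases and sending $\delta$ and $\epsilon$ to 0 in the right order gives $\limsup_{t\to\infty}\P{T^{\SPLIT}>t}/\P{S>t}\le1$, which proves \eqref{eq:SPLIT tail}.
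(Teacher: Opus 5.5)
Your three-way split and your small-job argument follow the paper's route; the paper splits at $t/C$ and $t-\sqrt{t}$ instead of $\delta t$ and $(1-\epsilon)t$, which is immaterial. Your medium-job logic is also the paper's. Since $\P{S>\delta t}/\P{S>t}\to\delta^{-\alpha}$, it suffices to show $\sup_{\delta t<x\le(1-\epsilon)t}\P{T_x^{\SPLIT}>t}\to0$. Suppose that at the tagged arrival no job of original size $>y$ is present, with $y\le\min\{\delta,\epsilon\}t$, and that no job of size $>x$ arrives during the next $y$ time units. Then the LJF server frees up within $y$ and fetches the tagged job, so $T_x^{\SPLIT}\le x+y\le t$.

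\textbf{The gap.} The genuine gap is the stationary bound you invoke to make the first event likely. It does not follow from stability alone. The natural general route is Little's law: the probability that a job of size $>y$ is present is at most $\lambda\P{S>y}\,\E{T\mid S>y}$. That route requires knowing that big jobs do not linger, which is essentially what the medium case is trying to prove, so citing the bound is close to circular.

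For \SPLIT the bound must be derived from the LJF mechanism, and that derivation is the real content of the paper's medium-job lemma. Call jobs of original size $>y$ red. The LJF server fetches by original size and never preempts, so once it starts a red job it serves only red jobs until none remain. It can be kept from starting by at most the residual of a non-red job, i.e.\ at most $y$. Hence each period during which red work is present has length $B\le y+W<2W$, where $W>y$ is the red work arriving in that period. So the long-run fraction of time containing a red job is at most $2r_{>y}=O(y\P{S>y})\to0$, and PASTA transfers this to the tagged arrival.

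\textbf{Two smaller corrections.} First, state that bound in terms of original size, not residual size. The LJF server ranks by original size, so a job of original size $>x$ with a small residual still goes ahead of the tagged job. Several such jobs could stack, and a residual-size bound does not exclude this.

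Second, in the small-job case, for a fixed $\delta$ the truncated-work deviation decays only polynomially, with a degree growing like $1/\delta$; it does not decay faster than every polynomial. What is true, and what the paper proves via a random-walk lemma for truncated increments, is that for each $\beta$ some $\delta=1/C$ gives $o(t^{-\beta})$. So fix $\delta$ using some $\beta>\alpha$ first, then let $t\to\infty$, then $\epsilon\to0$. The parameter $\delta$ never needs to go to $0$.
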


\begin{proof}

    We prove the theorem by using a tagged job argument. Consider a tagged job with size $x$, and define its response time under policy \SPLIT to be $T_x^{\SPLIT}$. We now consider three cases for $x$ (small, large, and extra large) and produce a different bound on $\P{T_x^{\SPLIT} > t}$ in each case.

    \paragraph{Case 1: Small jobs ($x \leq \frac{t}{C}$ for some constant $C>0$).}

    For these small jobs, we will show that it is very unlikely that the job's response time exceeds $t$. In fact, we can prove that the probability that the job's response time exceeds $t$ decays faster than any polynomial function of $t$. Mathematically, we prove the following inequality in Lemma \ref{lemma: small jobs response time bounded}: For any $\beta>0$, there exists a constant $C>0$ such that
    \begin{equation}
        \P{T_x^{\SPLIT} > t \ \Big | \ x \leq \frac{t}{C}} = o(t^{-\beta}).
        \label{eq:SPLIT tail case 1}
    \end{equation}
    Specifically, we pick a constant $\beta>\alpha$, where $\alpha$ is defined in \eqref{eq:regular varying definition}, and take $C$ to be the corresponding constant from Lemma~\ref{lemma: small jobs response time bounded}.

    \paragraph{Case 2: Large jobs ($\frac{t}{C} < x \leq t - \sqrt{t}$).}

    For these large jobs, we show that they are very likely to end up being served by the (non-preemptive) LJF server before waiting for $\sqrt{x}$ time\footnote{Here $\sqrt{x}$ is not crucial and can be any other function in $o(x)$.}. Mathematically, define $p(x)$ to be the probability that the tagged job gets served by the LJF server before waiting for $\sqrt{x}$ time. Then the probability that the tagged job has response time larger than $t$ is bounded by $1-p(x)$, i.e., 
    \begin{align}
        &\P{T_x^{\SPLIT}>t \ \Big | \  \frac{t}{C} < x \leq t - \sqrt{t} } \nonumber\\
        &\leq \P{T_x^{\SPLIT}>x + \sqrt{x} \ \Big | \ \frac{t}{C} < x \leq t - \sqrt{t}} \nonumber\\
        &\leq 1 - \inf_{\frac{t}{C} < x \leq t - \sqrt{t}}p(x). \label{eq:SPLIT tail case 2}
    \end{align}
    
    Lemma \ref{lemma: p(x) converges to 1} proves that this probability, $p(x)$, converges to 1 as $x$ goes to infinity, and hence the tail probability in \eqref{eq:SPLIT tail case 2} goes to $0$.

    \paragraph{Case 3: Extra large jobs ($x > t - \sqrt{t}$).}
    For these extremely large jobs, we bound the probability that the tagged job has response time larger than $t$ by 1, i.e.,
    \begin{equation}
        \P{T_x^{\SPLIT}>t \ \Big | \ x > t - \sqrt{t}} \leq 1.
    \end{equation}

    Now we combine the bounds for the above three cases, dividing them by $\P{S>t}$ and taking the limit as $t$ goes to infinity. We have that

    \begin{align*}
        &\lim_{t\to \infty} \frac{\P{T^{\SPLIT}>t}}{\P{S>t}} \\
        &\leq \lim_{t\to \infty} \frac{1}{\P{S>t}} \Bigg( \P{T_x^{\SPLIT}>t \ \Big | \ x \leq \frac{t}{C}} \P{S \leq \frac{t}{C}} \\&\quad + \P{T_x^{\SPLIT}>t \ \Big | \ \frac{t}{C} < x \leq t - \sqrt{t}} \P{\frac{t}{C} < S \leq t - \sqrt{t}} + \P{S>t - \sqrt{t}} \Bigg) \\
        &\leq \lim_{t\to\infty} \frac{o(t^{-\beta})}{\P{S>t}} + \lim_{t\to\infty} \frac{1}{\P{S>t}} \cdot \P{T_x^{\SPLIT}>t \ \Big | \ \frac{t}{C} < x \leq t - \sqrt{t}} \P{S> \frac{t}{C}} && \text{by \eqref{eq:SPLIT tail case 1}} \\
        &\quad + \lim_{t\to\infty} \frac{1}{\P{S>t}} \cdot \P{S>t - \sqrt{t}} \\
        &\leq 0 + \lim_{t\to\infty} \frac{\P{S> \frac{t}{C}}}{\P{S>t}} \cdot \left(1 - \inf_{\frac{t}{C} < x \leq t - \sqrt{t}}p(x)\right) + \lim_{t\to\infty} \frac{1}{\P{S>t}} \cdot \P{S>t - \sqrt{t}} && \text{by \eqref{eq:SPLIT tail case 2}} \nonumber\\
        &= 0 + C^\alpha \cdot 0 + 1 && \text{by Lemma \ref{lemma: p(x) converges to 1}} \nonumber\\
        &=1.
    \end{align*}

    Here the second-to-last equation also uses the fact that 
    \[\lim_{t\to\infty} \frac{\P{S>t - \sqrt{t}}}{\P{S>t}} =1,\]
    which can be easily shown using the property of the regularly varying distribution \eqref{eq:regular varying definition}.

\end{proof}

\begin{corollary}
    \label{corollary: SPLIT strongly tail optimal}
    \SPLIT is strongly tail optimal if the load $\rho$ satisfies $\rho < \frac{n-1}{n}$. 
\end{corollary}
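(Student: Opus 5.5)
The corollary follows by pairing the upper bound of Theorem~\ref{thm: SPLIT tail optimal} with Lower bound~2 \eqref{eq: lower bound size}. We must show that when $\rho<\frac{n-1}{n}$,
\[
\sup_{\pi'} \limsup_{t\to\infty} \frac{\P{T^{\SPLIT}>t}}{\P{T^{\pi'}>t}} = 1 .
\]

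First, the upper bound. Fix an arbitrary multi-server policy $\pi'$. By \eqref{eq: lower bound size}, for every $t>0$ we have $\P{T^{\pi'}>t}\geq \P{S>t}$, since any job's response time is at least its size. Hence
\[
\frac{\P{T^{\SPLIT}>t}}{\P{T^{\pi'}>t}} \leq \frac{\P{T^{\SPLIT}>t}}{\P{S>t}} .
\]
Taking $\limsup_{t\to\infty}$ and applying \eqref{eq:SPLIT tail}, the right-hand side tends to $1$. Since $\pi'$ was arbitrary, the supremum over $\pi'$ is at most $1$.

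Second, the matching lower bound. Choose $\pi'=\SPLIT$ itself; the ratio is then identically $1$, so the supremum is at least $1$. Combining the two bounds, the constant $c$ in \eqref{eq:asym tail} equals $1$, which is exactly strong tail optimality.

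There is no real obstacle here: all the substantive work lives in Theorem~\ref{thm: SPLIT tail optimal}, whose proof in turn depends on Lemmas~\ref{lemma: small jobs response time bounded} and~\ref{lemma: p(x) converges to 1}. The only care needed is that Lower bound~2 is a pointwise inequality valid for every $t$ and every policy. It therefore passes through the $\limsup$ uniformly in $\pi'$, with no need to exchange a supremum and a limit. One might also note that Lower bound~1 is irrelevant in this load regime, because $\P{S>n(1-\rho)t}\le \P{S>t}$ when $n(1-\rho)>1$.
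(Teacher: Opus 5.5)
Your proof is correct and takes the same route as the paper: you combine the pointwise Lower bound~2, $\P{T^{\pi'}>t}\geq \P{S>t}$, with Theorem~\ref{thm: SPLIT tail optimal}, which bounds the ratio against any $\pi'$ by $1$ asymptotically. Your version is slightly more careful than the paper's, since you work with $\limsup$ instead of assuming the limit exists and you note explicitly that taking $\pi'=\SPLIT$ gives the supremum $\geq 1$.
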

\begin{proof}
    For any policy $\pi'$, as stated in \eqref{eq: lower bound size}, we have that
    \[ \lim_{t\to \infty} \frac{\P{T^{\pi'} >t}}{\P{S>t}} \geq 1. \]
    Hence by Theorem \ref{thm: SPLIT tail optimal}, we have that 
    \[ \lim_{t\to \infty} \frac{\P{T^{\SPLIT} >t}}{\P{T^{\pi'}>t}} = \lim_{t\to \infty} \frac{\P{T^{\SPLIT} >t}}{\P{S>t}} \cdot \frac{\P{S>t}}{\P{T^{\pi'}>t}} \leq 1. \]
\end{proof}

\subsection{Proof for the bound for small jobs}
\label{sec:SPLIT proof small jobs}

In this subsection, our main goal is to prove the following lemma that is used in the proof of Theorem \ref{thm: SPLIT tail optimal}.
\begin{lemma}
    \label{lemma: small jobs response time bounded}
    Suppose $\rho < \frac{n-1}{n}$. Then for any $\beta>0$,
    there exists a constant $C>0$ such that
    \begin{equation}
        \P{T_x^{\SPLIT} > t \ \Big | \ x \leq \frac{t}{C}} = o(t^{-\beta}).
    \end{equation}
\end{lemma}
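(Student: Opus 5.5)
We will bound the response time of a tagged job of size $x \le t/C$ by the time needed to clear the work that can have priority over it on the $n-1$ SRPT servers. Under SRPT-$(n-1)$ restricted to the SRPT servers, the tagged job can only be delayed by two kinds of work: (a) work of jobs in the system at its arrival, or arriving during its stay, whose remaining size is below $x$; and (b) periods during which fewer than $n-1$ jobs of remaining size below $x$ are present, in which case the tagged job itself is served. The LJF server only removes work from the SRPT servers, so ignoring it gives an upper bound. Beyond this it matters in one way only: it may grab a job currently on an SRPT server, which merely frees capacity.

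The standard SRPT-$k$ argument will then give the following. If the tagged job is not complete by time $t$, then during the interval of length $t - x$ (up to lower order terms) all $n-1$ SRPT servers were busy with ``relevant'' work. That work consists of truncated work $\min(S_i, x)$ of arrivals during the interval, plus the relevant work present at arrival. Call the latter $V_x$; it is the work of jobs with remaining size at most $x$, together with the stationary backlog. In symbols,
\[
\P{T_x^{\SPLIT} > t} \le \P{ V_x + \sum_{i=1}^{N(t)} \min(S_i,x) \ge (n-1)\,(t - x)},
\]
where $N(t)$ is Poisson with mean $\lambda t$. It remains to make this rigorous when some SRPT servers are idle: those idle moments are exactly when the tagged job runs, and this accounts for the $x$ term.

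Next we exploit $\rho<\frac{n-1}{n}$, i.e.\ $\lambda\E{S} < n-1$. Choose $\epsilon>0$ so that $\lambda\E{S}(1+\epsilon) < (n-1)(1-2/C)$ for large $C$. The arrival sum has mean at most $\lambda \E{S} t$, and its increments are bounded by $x \le t/C$. A Chernoff or Bernstein bound for compound Poisson sums with increments bounded by $t/C$ then yields a deviation probability of order $\exp(-c\,\epsilon t/(t/C)) = e^{-c\epsilon C}$ per block. This alone is only constant, so instead we use the Fuk--Nagaev type bound: the probability that a sum of i.i.d.\ terms exceeds its mean by $\epsilon t$ when each term is at most $t/C$ is at most $(\text{const}\cdot t\,\P{S>t/(kC)} )^{k}$-type. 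Equivalently, reaching $\epsilon t$ would need about $\epsilon C$ ``big'' terms of size $\gtrsim t/C^2$, each of probability $O(t^{-\alpha+1})$ in expected count. We then choose $C$ large enough that $(\epsilon C)(\alpha-1) > \beta$. This is where $C$ depends on $\beta$.

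The main obstacle is controlling $V_x$, the initial backlog, with a tail that is $o(t^{-\beta})$ uniformly in $x\le t/C$. Our plan is to dominate the relevant backlog by the workload of an auxiliary system, namely the truncated-size queue of jobs with sizes capped at $t/C$ served at rate $n-1$. That system has load below 1, and its stationary workload tail decays as $e^{-\Theta(C)}$ on scale $t$ with truncated jumps, or more precisely satisfies the same many-big-jumps bound. A truly large job contributes to $V_x$ only through its remaining size once that remaining size drops below $x$, so at most $x$ per job. Such jobs would have been reduced only by SRPT service, and with the LJF server present at most a bounded number of them exist. We expect this step, especially handling the jobs whose remaining size crosses below $x$, to require the most care. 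We will handle it by charging each such job at most $x$ and bounding their number by the same many-jumps estimate.
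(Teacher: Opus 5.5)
Your plan follows essentially the same route as the paper. The paper bounds $\P{T_x^{\SPLIT}>t}\le\P{x+\frac{1}{n-1}W_x^{\SPLIT}+\frac{1}{n-1}\WA{x}{t}>t}$ with a potential function, dominates the relevant backlog on a sample path by a truncated-size (Loynes) workload plus an additive $O(x)$ term, and applies the random-walk supremum bound (Lemma~\ref{lemma:random walk tail}); that same supremum also absorbs the future-arrival term, so no separate Fuk--Nagaev estimate is needed. The crossing-job step you flag is settled deterministically by a few-jobs/many-jobs interval argument: outside many-jobs intervals at most $n-1$ relevant jobs are present, and inside them no job can cross below $x$ on the SRPT servers, which gives $W_x^{\SPLIT}(t)\le W_x^{super}(t)+(n-1)x$ (the paper goes through an SRPT super-server of speed $n-1$ before truncating), so no probabilistic count of crossing jobs is required.
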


We start with some rough intuitions on Lemma~\ref{lemma: small jobs response time bounded}. While Lemma~\ref{lemma: small jobs response time bounded} holds for any job with size $x\leq \frac{t}{C}$, it implicitly says different things for different job sizes $x$: 
(1) For those relatively large jobs (with size $x \approx \frac{t}{C}$), the lemma says that their response time is very unlikely to be larger than a constant times $x$; (2) For smaller jobs (with size $x \ll \frac{t}{C}$), their response time is likely to be smaller than the response time of relatively large jobs. Property (1) is guaranteed by the $n-1$ SRPT servers under \SPLIT -- a similar property is proved in Lemma~\ref{lemma: SRPT-n response time bounded new}. Property (2), in contrast, holds automatically under SRPT-$n$ (where there is monotonicity of response time with respect to job size), but not for \SPLIT by design. In proving Lemma~\ref{lemma: small jobs response time bounded}, we overcome this difficulty by bounding the response time under \SPLIT via relevant work, and then leveraging the monotonicity of relevant work with respect to job size. 

To rigorously prove this lemma, we proceed through a sequence of reductions, relating the response time of the tagged job under \SPLIT first to its counterpart in a single super-server system with speed $n-1$ running SRPT, and then to a single-server system with speed $1$ running SRPT (which we call SRPT-1), where standard tail bounds apply. The three systems involved in this reduction are illustrated in Figure~\ref{fig:three_systems}.


\begin{figure}[h]
    \centering
    \includegraphics[width=0.85\textwidth]{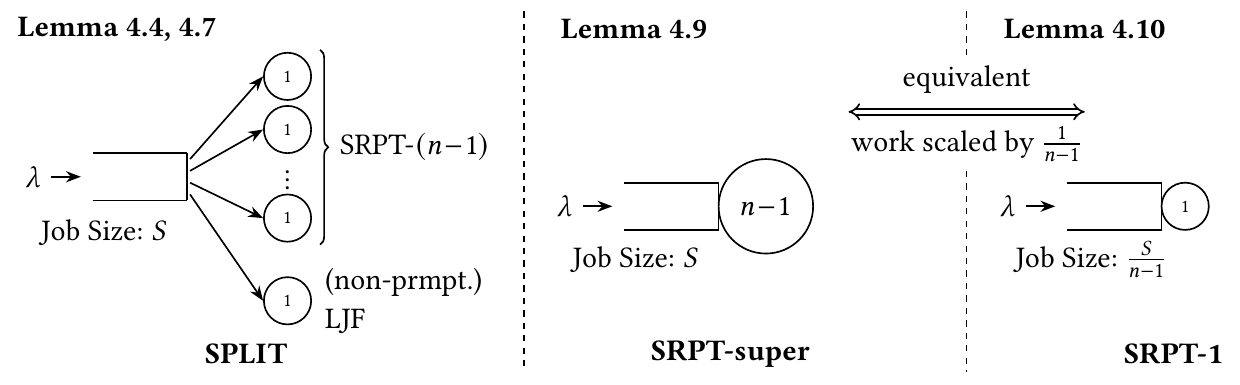}
    \caption{Three systems used in the proof: the $n$-server system under SPLIT, a single super-server with speed $n-1$ under SRPT, and a single server with speed $1$ under SRPT.}
    \label{fig:three_systems}
\end{figure}

We first define the following two notations for work.
\begin{definition}[Relevant work under \SPLIT]
    \label{def:relevant work SPLIT}
    Define the relevant work at time $t$, $W_x^{\SPLIT}(t)$, to be the total work comprising jobs in the system at time $t$ with remaining size smaller than $x$, excluding the job being served by the LJF server. Further define $W_x^{\SPLIT}$ to be the steady-state relevant work.
\end{definition}

\begin{definition}[$\WA{x}{t}$]
\label{def:WA}
    Define $\WA{x}{t}$ to be the total amount of work comprising  jobs arriving from time 0 to time $t$ with size smaller than $x$. 
\end{definition}

We can now bound the response time of a job of size $x$ by the above two types of work.  
\begin{lemma}
    \label{lemma: response time bounded by work}
    Without loss of generality, assume a tagged job arrives at time 0.
    For the tagged job with size $x$, we have that
    \begin{equation}
        \P{T_x^{\SPLIT} > t}\leq \P{x + \frac{1}{n-1} W_x^{\SPLIT} + \frac{1}{n-1}\WA{x}{t}>t}.
        \label{eq:response time bounded by work}
    \end{equation}
\end{lemma}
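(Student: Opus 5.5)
The plan is a sample-path argument that gives a pointwise bound; the probabilistic inequality then follows immediately. Fix a realization and let the tagged job (size $x$) arrive at time $0$ into a system in steady state, so that the relevant work it sees on arrival is $W_x^{\SPLIT}(0)$, distributed as $W_x^{\SPLIT}$ by PASTA. The goal is to show that on every sample path
\[
T_x^{\SPLIT} \le x + \frac{1}{n-1}\Bigl(W_x^{\SPLIT}(0) + \WA{x}{T_x^{\SPLIT}}\Bigr).
\]
Once this holds, the event $\{T_x^{\SPLIT}>t\}$ implies $\{x + \frac{1}{n-1}(W_x^{\SPLIT}(0)+\WA{x}{t}) > t\}$. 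Indeed, if $T_x^{\SPLIT} > t$, then consider the job still in the system at time $t$ and apply the same accounting on the interval $[0,t]$ alone. That gives $t < x + \frac{1}{n-1}(W_x^{\SPLIT}(0)+\WA{x}{t})$, because $\WA{x}{\cdot}$ is nondecreasing and the argument below only uses the interval up to the time considered.

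\textbf{Accounting on $[0,t]$.} Suppose the tagged job is not complete by time $t$. If at some point it is moved to the LJF server, the bound is immediate: that server is non-preemptive, so once it starts the job, the job needs at most its remaining size, which is at most $x$. Before that moment, the job is in the SRPT pool, so it suffices to analyze times at which the job sits in the SRPT pool. Split $[0,t]$ into the times the tagged job is served (total at most $x$) and the times it is not served. While the tagged job is in the SRPT pool but not served, all $n-1$ SRPT servers must be busy with jobs whose remaining size is smaller than the tagged job's remaining size, which is at most $x$. So during those times the SRPT servers drain "relevant" work at rate $n-1$.

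\textbf{Bounding the drained work.} This relevant work has two sources: jobs present at time $0$ with remaining size below $x$, excluding the LJF job (total $W_x^{\SPLIT}(0)$), and jobs arriving during $[0,t]$ with size below $x$ (total at most $\WA{x}{t}$). A job with original size at least $x$ can become relevant only after it has been served down below the tagged job's remaining size. The ordering argument needs care here, and this is the main obstacle. I would handle it with the standard SRPT relevant-work trick: measure relevance against the tagged job's \emph{current} remaining size $r(\tau)\le x$, and note that a job which was not ahead of the tagged job can only overtake it by receiving service while the tagged job is also being served, or while the tagged job is waiting behind smaller jobs. In the latter case the overtaking job would need remaining size below $r(\tau)$ already, a contradiction. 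So, up to ties, no job originally larger than the tagged job's remaining size ever preempts it.

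A related subtlety is that the LJF server may pull a relevant job away, removing its work. This only helps, since it means that work is never charged to the SRPT servers. Hence the non-served time is at most $(W_x^{\SPLIT}(0)+\WA{x}{t})/(n-1)$, which gives the claimed pointwise bound.
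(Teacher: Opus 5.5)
Your proposal is correct and is essentially the paper's proof. The paper packages your served/unserved time decomposition into a potential function $G(s)$: the tagged job's remaining size plus $\frac{1}{n-1}$ times the remaining initial and newly arrived relevant work. This $G$ decreases at rate at least $1$ until the tagged job completes, which integrates to exactly your bound; your explicit check that jobs originally larger than $x$ never overtake the tagged job in the SRPT pool, and that LJF pulls only help, fills in a step the paper asserts without comment.
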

\begin{proof}
    We will upper bound the work that needs to get done before the tagged job leaves the system.
    By PASTA (Poisson Arrivals See Time Averages), when the tagged job arrives, it sees $W_x^{\SPLIT}$ relevant work in the system. Intuitively speaking, this relevant work, together with newly arriving work comprising jobs with size smaller than $x$, must be completed with rate at least $(n-1)$. The only exception is when there are fewer than $(n-1)$ jobs with remaining size smaller than the tagged job, in which case the tagged job is served. The response time of the tagged job can be upper-bounded by the above argument.   

    To see this more precisely, we define a potential function $G(s)$ at time $s$ to be the sum of the following three terms: (1) remaining size of the tagged job, (2) $\frac{1}{n-1}$ times the remaining work from the initial relevant work $W_x^{\SPLIT}$, and (3) $\frac{1}{n-1}$ times the remaining work from the newly arrived work $W_A(x,s)$. 
    
    
    Hence, we have that $G(0)= x + \frac{1}{n-1} W_x^{\SPLIT}$.
    The dynamics of $G(s)$ are given by: 
    At arrivals, $G(s)$ increases by $\frac{1}{n-1}$ times the size of the arriving job if the arriving job is smaller than $x$. Otherwise, $G(s)$ decreases with rate at least 1 if $G(s) > 0$ (if the tagged job is not served, then all $n-1$ SRPT servers are serving jobs with size smaller than $x$, each reducing the potential function with rate $\frac{1}{n-1}$; otherwise if the tagged job is served, then the first term of the potential function already decreases with rate 1). 
    
    Mathematically, we have that 
    \begin{equation}
        G(0) = x + \frac{1}{n-1} W_x^{\SPLIT},\qquad \frac{dG(s)}{ds} \leq - 1 +  \frac{1}{n-1}\lambda \P{S\leq x} \E{S\mid S\leq x}\quad \text{if $G(s)>0$}.
        \label{eq:potential G(0) SPLIT}
    \end{equation}

    Consider the event that the tagged job is not completed before time $t$ (which means its response time is more than $t$). In this case, the potential function $G(s)>0$ for all $s\in(0,t]$, which means the dynamics of $G(s)$ given by \eqref{eq:potential G(0) SPLIT} always hold for any $s\in(0,t]$. Hence, we have that 
    \begin{equation}
        0< G(t) \leq x + \frac{1}{n-1} W_x^{\SPLIT} + \frac{1}{n-1}\WA{x}{t} - t.
    \end{equation}

    Hence we have that 
    \[\P{T_x^{\SPLIT} > t}\leq \P{x + \frac{1}{n-1} W_x^{\SPLIT} + \frac{1}{n-1}\WA{x}{t}>t}. \]
    

\end{proof}

Now we relate the relevant work under \SPLIT to the relevant work under SRPT-super.

\begin{definition}[Relevant work under SRPT-super]
    \label{def:relevant work}
    Define the relevant work in the SRPT-super system at time $t$, $\Wsuper{x}(t)$, to be the total work comprising jobs in the system at time $t$ with remaining size smaller than $x$. Further define $\Wsuper{x}$ to be the steady-state relevant work.
\end{definition}

\begin{lemma}
    \label{lemma: W_x SPLIT bounded}
    We have that under the same sample path, at any moment of time $t$,
    \begin{equation}
        W_x^{\SPLIT}(t) \leq \Wsuper{x}(t) + (n-1)x.
    \end{equation}
\end{lemma}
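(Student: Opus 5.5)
We compare the two systems on the same arrival sample path. SRPT-super is a single server of speed $n-1$ running SRPT. We argue at the level of the ``relevant work'' process for a fixed threshold $x$. The standard way to compare SRPT-type systems is to reduce to busy periods of relevant work. Under SRPT-super, the relevant work $\Wsuper{x}(s)$ increases by $y$ whenever a job of size $y<x$ arrives. It decreases at rate exactly $n-1$ whenever $\Wsuper{x}(s)>0$, because SRPT serves the job of smallest remaining size, which then has remaining size below $x$. Jobs of original size at least $x$ also enter relevant work once their remaining size drops below $x$. We plan to handle this by working with the fixed-$x$ ``relevant'' accounting: the work of a job counts only once its remaining size is below $x$, and at that moment it contributes at most $x$.

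The key step is an ODE-type comparison. We define $D(s) := W_x^{\SPLIT}(s) - \Wsuper{x}(s)$ and show that $D(s)$ can increase only while $W_x^{\SPLIT}(s)\le (n-1)x$ (up to jump effects). Arrivals of jobs smaller than $x$ add the same amount to both systems, so they do not change $D$. Consider times when $W_x^{\SPLIT}(s) > (n-1)x$. Then the relevant jobs in \SPLIT (excluding the LJF job) cannot all fit on fewer than $n-1$ servers. Each relevant job has remaining size $<x$, so there are at least $n$ relevant jobs, hence at least $n-1$ of them. The $n-1$ SRPT servers all serve relevant jobs, since SRPT picks the smallest remaining sizes. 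Therefore $W_x^{\SPLIT}$ decreases at rate $n-1$, which is at least the decrease rate of $\Wsuper{x}$.

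The remaining effect is a non-relevant job crossing into relevance: its remaining size drops below $x$, adding up to $x$ worth of relevant work. In \SPLIT this happens only to jobs served on the SRPT servers. A job served by the LJF server is excluded from relevant work altogether. A job that returns from the LJF server cannot occur, since LJF is non-preemptive. To neutralize crossings, we intend to use a slightly modified comparison. In both systems a non-relevant job becomes relevant only by being served while it is among the smallest. In \SPLIT this happens only when fewer than $n-1$ relevant jobs exist, i.e.\ when $W_x^{\SPLIT}<(n-1)x$, and the job enters continuously with remaining size just below $x$. So $W_x^{\SPLIT}$ jumps continuously, not discretely, and such increases occur only when $W_x^{\SPLIT}\le (n-1)x$. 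We then conclude by a standard argument. Suppose $W_x^{\SPLIT}(s) > \Wsuper{x}(s)+(n-1)x$ at some time. Take the last time $s_0$ before it at which $W_x^{\SPLIT}\le \Wsuper{x}+(n-1)x$. On $(s_0,s]$ we have $W_x^{\SPLIT}>(n-1)x$, so $D$ is nonincreasing there, which is a contradiction.

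The main obstacle is the treatment of work becoming relevant in SRPT-super, where big jobs can also enter relevance. There, however, such entries only increase $\Wsuper{x}$, which helps the inequality, so they are harmless. The delicate part is justifying in \SPLIT that crossings occur only when fewer than $n-1$ strictly relevant jobs are present. This must also account for ties, and for the moment the LJF server grabs a job: that can only remove relevant work from \SPLIT, which again helps.
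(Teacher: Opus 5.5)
Your proposal is correct and follows essentially the paper's proof. The paper likewise splits time into few-jobs intervals, where $W_x^{\SPLIT}\le (n-1)x$ trivially, and many-jobs intervals, where all $n-1$ SRPT servers work on relevant jobs so that $W_x^{\SPLIT}-\Wsuper{x}$ is nonincreasing; your work threshold $(n-1)x$ and last-exit time $s_0$ are a more explicit version of the same decomposition.

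One small correction: when a job's remaining size drops below $x$, its whole remaining work (nearly $x$) becomes relevant at once, so this is a discrete jump, not a continuous one. It is still harmless. Right after the jump at most $n-1$ relevant jobs sit outside the LJF server, so $W_x^{\SPLIT}\le (n-1)x$ immediately afterwards, which is exactly the bound your last-exit argument needs at $s_0$.
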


\begin{proof}
    The proof is motivated by Lemma 2.2 in \cite{grosof2019srpt}. Consider the $n$-server system under policy \SPLIT. Define a ``few-jobs interval" to be a time interval where there are no more than $(n-1)$ jobs with remaining size smaller than $x$ (excluding the job being served by the LJF server) in the $n$-server system under policy \SPLIT. Define ``many-jobs interval'' to be the opposite. Then, for any time $t$ within a few-jobs interval, we have $W_x^{\SPLIT}(t)\leq (n-1) \cdot x$, and the lemma holds obviously. 

    For any time $t$ within a many-jobs interval, we argue that the difference $W_x^{\SPLIT}(t) - \Wsuper{x}(t)$ is non-increasing. This is because during many-jobs intervals, the $(n-1)$ SRPT servers in \SPLIT work on the relevant work, and the relevant work is decreasing with rate at least $(n-1)$, which is at least as fast as the super server completes the relevant work in the super-server system. Moreover, the arrival sequences in both systems are the same, so they do not change the difference. 
\end{proof}

We need one more lemma before we are ready to prove Lemma \ref{lemma: small jobs response time bounded}. This lemma is about a single server of speed 1 running SRPT (we refer to this as an SRPT-1 system).   Lemma~\ref{lemma: single server work bound} slightly strengthens a lemma in \cite{nuyens2008preventing}.  The lemma in \cite{nuyens2008preventing}  says that, for any $\beta>0$, there exists a constant $C$ such that
$$\P{T_x^{SRPT-1} > Cx} = o(x^{-\beta}).$$
We basically adapt their proof to prove Lemma~\ref{lemma: single server work bound}.
\begin{lemma}[SRPT-1 work bound]
    \label{lemma: single server work bound}
    For a single-server (speed 1) system under SRPT with load $\rho<1$, we have that for any $\beta>0$, there exists a constant $C>0$ such that
    \begin{equation}
        \P{2x + \Wstar{x} + \WAstar{x}{Cx} > Cx} = o(x^{-\beta}),
    \end{equation}
where $\Wstar{x}$ refers to the (relevant) work in the SRPT-1 system consisting of jobs of remaining size $\leq x$ and $\WAstar{x}{Cx}$ refers to the total arriving work consisting of jobs with size $\leq x$ into the SRPT-1 system by time $Cx$.
\end{lemma}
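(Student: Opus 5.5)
We plan to follow the truncation argument of \cite{nuyens2008preventing}. The key observation is that $\Wstar{x}$ and $\WAstar{x}{Cx}$ involve only jobs whose relevant contribution is capped at $x$. Every job contributes at most $\min(S,x)$ to the relevant work, and an arriving job of size $>x$ contributes nothing to $\WAstar{x}{\cdot}$. We therefore bound both terms by quantities built from the truncated size $S\wedge x$. Truncated variables are bounded, so large deviations for their compound Poisson sums decay like $e^{-\theta Cx/x}$ raised to a power. We will choose $C$ so that the needed number of "big pieces" is at least $\beta$-many.

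\textbf{Step 1: bound $\Wstar{x}$ by a busy-period quantity.} Under SRPT-1, the relevant work at an arrival epoch is at most the work in an auxiliary M/G/1 queue fed only by jobs of size $\le x$, plus at most one job of size $>x$ whose remaining size has dropped below $x$. This gives at most $x$ extra. This is the standard fact that only one "large" job can be in progress below level $x$ at a time under SRPT. Hence $\Wstar{x} \le x + V_x$, where $V_x$ is the stationary workload of a FCFS M/G/1 queue with arrival rate $\lambda$ and sizes $S\wedge x$. This queue has load at most $\rho<1$. By the Pollaczek--Khinchine / Lundberg bound, $\P{V_x > y} \le e^{-\gamma_x y}$, where $\gamma_x$ solves $\lambda(\E{e^{\gamma(S\wedge x)}}-1)=\gamma$. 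We will show $\gamma_x \ge c_0\,\frac{\log x}{x}$ for some $c_0>0$ depending on $\alpha,\rho$. This follows by plugging $\gamma=\epsilon\log x/x$ into the Laplace transform and using regular variation: the contribution from $S$ near $x$ is $\approx x^{\epsilon}\P{S>x}=x^{\epsilon-\alpha+o(1)}=o(1/x)\cdot\gamma x$ when $\epsilon<\alpha-1$. Consequently $\P{V_x > Kx} \le x^{-c_0K}$.

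\textbf{Step 2: bound the arrival term.} $\WAstar{x}{Cx}$ is a compound Poisson sum over time $Cx$ with summands $S\mathbf 1\{S\le x\}\le x$ and mean rate $\le \rho$. A Chernoff bound with the same $\gamma=\epsilon\log x/x$ gives
\[
\P{\WAstar{x}{Cx} > (\rho+\delta)Cx} \le \exp\bigl(Cx[\lambda(\E{e^{\gamma (S\wedge x)}}-1) - \gamma(\rho+\delta)]\bigr)\le x^{-\epsilon\delta C/2}
\]
for large $x$, again using the expansion from Step 1.

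\textbf{Step 3: combine.} Choose $\delta$ with $\rho+2\delta<1$, and split the event into $\{V_x > \delta Cx/2\}$ and $\{\WAstar{x}{Cx} > (\rho+\delta)Cx\}$. The remaining deterministic slack requires $3x + \delta Cx/2 + (\rho+\delta)Cx \le Cx$, which holds once $C$ is large. Taking $C$ large enough that both exponents $c_0\delta C/2$ and $\epsilon\delta C/2$ exceed $\beta$ yields $o(x^{-\beta})$.

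The main obstacle is the uniform estimate $\lambda(\E{e^{\gamma(S\wedge x)}}-1)\le \gamma(\rho+\delta/2)$ at $\gamma=\epsilon\log x/x$. It needs care splitting $S\le x/\log^2 x$ (Taylor expansion) from $x/\log^2x<S\le x$ (regular variation and Potter bounds). This is exactly the computation in \cite{nuyens2008preventing}, which we will adapt. The sample-path reduction in Step 1 is the second delicate point.
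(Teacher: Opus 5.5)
Your proposal is correct, but it reaches the polynomial decay by a different route than the paper. The paper also truncates to $S\wedge x$ and uses $\Wstar{x}\le x+W_x'$. Via Loynes, it then bounds \emph{both} the stationary truncated workload and the arrival term $W_A'(Cx)-(1-\delta)Cx$ by the single quantity $U_x^{1-\delta}=\sup_n\sum_{i\le n}[S_i'-(1-\delta)A_i]$. This is the maximum of a truncated random walk with slack drift, and the paper invokes Lemma~16 of \cite{nuyens2008preventing} (due to Resnick and Samorodnitsky) as a black box to get $\P{U_x^{1-\delta}>\bar k x}=o(x^{-\beta})$.

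You instead prove the decay by hand, which amounts to reproving that random-walk lemma in the compound-Poisson setting. You use the Lundberg inequality for the truncated M/G/1 workload and a Chernoff bound for the compound Poisson arrivals, both at $\gamma=\epsilon\log x/x$ with $\epsilon<\alpha-1$. Your route is self-contained, gives explicit exponents ($x^{-\epsilon\delta C/2}$, so $C\ge 6/\delta$ with $\epsilon\delta C/2>\beta$ suffices), and makes the uniformity in $x$ explicit. The paper's route is shorter and needs only a finite moment of order $p>1$ rather than regular variation; your MGF estimate would also survive with a Markov bound in place of regular variation.

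The estimate you flag as the main obstacle is easier than you expect. With $y=x/\log^2 x$, the part $S\le y$ contributes at most $\gamma\lambda\E{S}e^{\epsilon/\log x}$. For the rest, the crude bound $\lambda\E{e^{\gamma(S\wedge x)}\mathbf 1\{S>y\}}\le\lambda x^{\epsilon}\P{S>y}=x^{\epsilon-\alpha+o(1)}$ is already $o(\gamma)$ when $\epsilon<\alpha-1$. So no integration by parts is needed.

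The one step whose justification must be replaced is $\Wstar{x}\le x+V_x$. The fact that at most one job of original size $>x$ can have remaining size $\le x$ does not by itself give this. The reason is that the work of jobs of original size $\le x$ in the SRPT-1 system is \emph{not} dominated by the auxiliary queue's workload, since a big job whose remaining size has dropped below a small job's size runs ahead of it. For example, let a job of size $x+1$ arrive to an empty system at time $0$ and a job of size $x$ arrive at time $1.5$. At time $x+1$ SRPT holds $x$ units of small-job work, while the queue fed only by jobs of size $\le x$ holds $0.5$ (or $x-1$ with sizes $S\wedge x$).

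The inequality is nevertheless true, by the busy-period coupling the paper uses. While $\Wstar{x}>0$, SRPT serves a job of remaining size $\le x$, so relevant work drains at rate exactly $1$, whereas $V_x$ drains at rate at most $1$. Within such an interval, $\Wstar{x}$ increases only at arrivals of size $\le x$, and $V_x$ increases by at least the same amount. Every interval on which $\Wstar{x}>0$ starts with $\Wstar{x}\le x$: it begins either with a single arrival of size $\le x$, or with a big job's remaining size crossing level $x$, which can only happen when $\Wstar{x}=0$. Hence $\Wstar{x}-V_x\le x$ at all times, and the rest of your argument goes through unchanged.
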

\begin{proof}
    See Lemma \ref{lemma: single server SRPT-1 bound} in the Appendix.
\end{proof}

We are now ready to prove Lemma \ref{lemma: small jobs response time bounded}. 

\begin{proof}[Proof of Lemma \ref{lemma: small jobs response time bounded}]
    Our goal is to bound $T_x^{SPLIT}$ where $x \leq \frac{t}{C}$ is the size of a  tagged job.  We choose to let 
    $C$ be the constant from Lemma \ref{lemma: single server work bound}. 
    
    By Lemma \ref{lemma: response time bounded by work},
    \begin{equation}
        \P{T_x^{\SPLIT} > t} \leq \P{x + \frac{1}{n-1} W_x^{\SPLIT} + \frac{1}{n-1}\WA{x}{t} > t}.
    \end{equation}
    By Lemma \ref{lemma: W_x SPLIT bounded}, $W_x^{\SPLIT} \leq_{st} \Wsuper{x} + (n-1)x$. 

    Moreover, by definition we have that $\Wstar{x} = \frac{1}{n-1}\Wsuper{x}$ and $\WAstar{x}{t} = \frac{1}{n-1}\WA{x}{t}$.
    Hence we have that 
    \begin{align*}
    &\P{T_x^{\SPLIT} > t \ \Big | \ x \leq \frac{t}{C}} \\
        &\leq \P{x + \frac{1}{n-1} W_x^{\SPLIT} + \frac{1}{n-1}\WA{x}{t} > t\ \Big | \ x \leq \frac{t}{C}} && \text{by Lemma \ref{lemma: response time bounded by work}} \\
        &\leq \P{x + \frac{1}{n-1} (\Wsuper{x} + (n-1)x) + \frac{1}{n-1}\WA{x}{t} > t\ \Big | \ x \leq \frac{t}{C}} && \text{by Lemma \ref{lemma: W_x SPLIT bounded}} \\
        &= \P{2x + \Wstar{x} + \WAstar{x}{t} > t\ \Big | \ x \leq \frac{t}{C}} \\
        &\leq \P{2\left(\frac{t}{C}\right) + \Wstar{\frac{t}{C}} + \WAstar{\frac{t}{C}}{t}>t} \\
        &= o\left(\left(\frac{t}{C}\right)^{-\beta}\right) && \text{by Lemma \ref{lemma: single server work bound}} \\
        &= o(t^{-\beta}).
    \end{align*}
\end{proof}


\subsection{Proof for the bound for large jobs}
\label{sec:SPLIT proof large jobs}

In this section, we prove the following lemma that is used in the proof of Theorem \ref{thm: SPLIT tail optimal}.
The lemma shows that for a sufficiently large job, it is very likely that under policy \SPLIT, the job gets served by the (non-preemptive) LJF server before waiting for $\sqrt{x}$ time.

\begin{lemma}
    \label{lemma: p(x) converges to 1}
    Under policy \SPLIT,
    define $p(x)$ to be the probability that a job with size $x$ gets served by the LJF server before waiting for $\sqrt{x}$ time. Then we have that 
    \begin{equation}
        \lim_{x\to \infty} p(x) = 1.
    \end{equation}
\end{lemma}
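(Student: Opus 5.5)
We use a tagged-job argument. Let the tagged job of size $x$ arrive at time $0$. The job fails to reach the LJF server within $\sqrt{x}$ time only if one of two things happens. Either (a) at time $0$ the LJF server is busy with a job whose remaining service exceeds $\sqrt{x}$. Or (b) the LJF server becomes free at some time $s\le\sqrt{x}$ but picks a different job, which must have original size at least $x$.

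For (b), any job that competes with ours is either already present at time $0$ with original size $\ge x$, or arrives during $[0,\sqrt{x}]$ with size $\ge x$. We also need to rule out an earlier free moment at which the LJF server picks some other job $\ge x$ that then blocks us. The plan is to bound
\[
1-p(x) \le \P{R_{\mathrm{LJF}} > \sqrt{x}} + \P{N_{\ge x} \ge 1} + \lambda\sqrt{x}\,\P{S\ge x},
\]
where $R_{\mathrm{LJF}}$ is the stationary remaining service of the job on the LJF server, seen at an arrival via PASTA, and $N_{\ge x}$ is the number of jobs of original size $\ge x$ present at time $0$ other than the one on the LJF server. The third term tends to $0$, since $\sqrt{x}\,\P{S\ge x}=x^{1/2-\alpha+o(1)}$ and $\alpha>1$.

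\textbf{Step 1: the LJF server's residual.} First we show $\P{R_{\mathrm{LJF}}>\sqrt{x}}\to 0$. Stability with $\rho<\frac{n-1}{n}$ gives a stationary regime. The LJF server serves non-preemptively, completing jobs at some long-run rate $\mu_L\le\lambda$. So its stationary residual has the age/residual form of the (size-biased) distribution of the jobs it serves. Its mean is finite as long as the sizes of jobs served by LJF have finite second moment, or at least the residual is a proper, a.s.\ finite random variable. Finiteness alone already gives $\P{R_{\mathrm{LJF}}>\sqrt{x}}\to0$, and finiteness holds simply because each job has finite size. So this step reduces to stationarity of the busy-LJF process, which I expect to be routine.

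\textbf{Step 2: no other large job waiting.} Next we show $\P{N_{\ge x}\ge1}\to 0$. By Little's law, $\E{N_{\ge x}}=\lambda\P{S\ge x}\,\E{D_{\ge x}}$, where $D_{\ge x}$ is the time a job of size $\ge x$ spends in the system before being taken by the LJF server (after that it is excluded). By Markov's inequality it suffices to show $\P{S\ge x}\,\E{D_{\ge x}}\to0$.

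\textbf{Main obstacle.} Controlling $D_{\ge x}$ is the hard part. A waiting large job is served by the SRPT servers, or else the SRPT servers are busy with smaller jobs. The $n-1$ SRPT servers alone have load $\frac{n\rho}{n-1}<1$, so the total work present is stochastically bounded by the workload of a stable system. That gives a dominating bound on waiting times, but it is heavy-tailed and of size roughly comparable to $x$, which is not obviously enough.

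I would instead use a cleaner fact. While a job of size $\ge x$ waits, the LJF server keeps taking jobs at least as large (LJF ordering), each occupying it for at least $x$ time. So a job of size $\ge x$ can only be bypassed by jobs of size $\ge x$, which arrive at rate $\lambda\P{S\ge x}$, and each bypass costs at least $x$ of LJF time. The busy-time-per-unit-time constraint, together with $x\lambda\P{S\ge x}\le r_{>x}\to0$, should then show that the ``large-job queue'' at the LJF server has utilization $r_{>x}\to0$. Concretely, I would compare it to an M/G/1 queue with arrivals of size $\ge x$ and load $r_{>x}$, plus an initial residual from Step 1. That would show the probability of finding any other $\ge x$ job waiting is $O(r_{>x})+o(1)\to0$.

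Combining the three terms gives $p(x)\to1$.
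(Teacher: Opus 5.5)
\paragraph{Overall.} Your union bound $1-p(x)\le \P{R_{\mathrm{LJF}}>\sqrt{x}}+\P{N_{\ge x}\ge 1}+\lambda\sqrt{x}\,\P{S\ge x}$ is valid, and the third term is handled correctly. However, Step 2, which you rightly call the crux, is not actually proved, and the heuristic you give for it has a wrong ingredient. The claim that each bypass ``costs at least $x$ of LJF time'' is false. A job of original size $\ge x$ may have been partly processed by the SRPT servers before the LJF server fetches it, so its remaining size can be far below $x$. In any case, a lower bound on LJF time per job points the wrong way: you need to show that jobs of size $\ge x$ do not linger. The ``initial residual from Step 1'' is also not the relevant quantity. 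What matters is the LJF server's remaining work at the moment a big job shows up. The Little's-law route stalls for a related reason. A bound $\E{D_{\ge x}}=O(x)$ would in fact suffice, since $x\P{S\ge x}\to 0$. But the total-workload domination you mention has infinite mean when $\alpha<2$, so it gives nothing.

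\paragraph{The missing idea.} What you need is the work-period argument the paper uses. Call a job red if its original size exceeds a threshold $y$, and look at a maximal interval during which some red job is in the system.
\begin{itemize}
\item At the start of the interval, the LJF server is idle or busy with a non-red job, whose remaining work is at most $y$.
\item Once that job finishes, the LJF server, being largest-first, fetches only red jobs until none remain. So red work then decreases at rate at least $1$.
\item Hence the interval has length at most $y+W<2W$, where $W>y$ is the red work arriving during it.
\item By renewal-reward and PASTA, an arrival finds a red job present with probability at most $2r_{>y}$.
\end{itemize}
Taking $y=\sqrt{x}$ disposes of your first two terms at once. An LJF residual above $\sqrt{x}$ and another waiting job of size $\ge x$ both require a red job to be present. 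This reduces your argument to the paper's bound $p(x)\ge (1-2r_{>\sqrt{x}})\,e^{-\lambda\P{S>x}\sqrt{x}}\to 1$.

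\paragraph{A smaller point on Step 1.} ``Each job has finite size'' is not what makes the stationary residual proper; a stationary residual can be defective even when every job is finite. The right reason is that the LJF server spends at most a fraction $r_{>y}=\lambda\E{S\cdot\mathbf{1}\{S>y\}}$ of time on jobs of original size $>y$. This tends to $0$ because $\E{S}<\infty$, and it directly gives $\P{R_{\mathrm{LJF}}>\sqrt{x}}\le r_{>\sqrt{x}}$.
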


\begin{proof}
    For the tagged job with size $x$, we consider the following two events:
    \begin{enumerate}
        \item When the tagged job arrives, it sees no job with original size larger than $\sqrt{x}$. Define the probability of this event to be $p_1(x)$.
        \item The tagged job does not see any job with size larger than $x$ arriving during the $\sqrt{x}$ time after it arrives. Define the probability of this event to be $p_2(x)$.
    \end{enumerate}

    Note that the above two events are independent. Moreover, we know that if both of the above two events happen, the tagged job gets served by the LJF server immediately after the LJF server finishes the remaining work of the job being served (which must be smaller than $\sqrt{x}$ because event (1) happens). Hence we have that 
    \begin{equation}
        p(x)\geq p_1(x) p_2(x).
    \end{equation}
    Now we characterize $p_1(x)$ and $p_2(x)$ respectively.

    \paragraph{Characterizing $p_1(x)$.}
    Because the arrival process is Poisson, by the PASTA (Poisson Arrivals See Time Averages) theorem (e.g., Section 13.3 in \cite{harchol2013performance}), we know that the probability that the tagged job arrives to the system and sees no job with original size larger than $\sqrt{x}$ is equal to the time average probability that the system does not have jobs with original size larger than $\sqrt{x}$. 
    
    For simplicity of demonstration, we define ``red jobs'' to be the jobs with original size larger than $\sqrt{x}$.
    Define $W_{red}(t)$ to be the total remaining work in the system from red jobs at time $t$. Then the above PASTA argument is mathematically equivalent to the following equation:
    \[p_1(x) = \lim_{t\to\infty} \frac{\int_{0}^t \mathbf{1}[W_{red}(s)=0] ds}{t}.\]

    We know that under policy \SPLIT, if there exists a red job in the system, either: (1) The LJF server is serving some red job; or (2) The LJF server will start serving red jobs after it finishes the remaining work of the job being served (if at that time red jobs still exist). Moreover, if the LJF server starts serving red jobs, it will keep serving red jobs until there are no red jobs in the system. Hence, suppose at time $t$ a red job arrives and $W_{red}(t)>0$ (while $W_{red}(t-)=0$). Then either $W_{red}$ immediately decreases with rate at least 1 until it drops to 0 (note that during this period, there could be also SRPT servers serving red jobs), or $W_{red}$ starts decreasing with rate at least 1 after a time delay of at most $\sqrt{x}$ (because the LJF server is currently serving a job with original size at most $\sqrt{x}$). See Figure \ref{fig:redwork} for an illustration.

    \begin{figure}[h]
        \centering
        \includegraphics[width=0.7\textwidth]{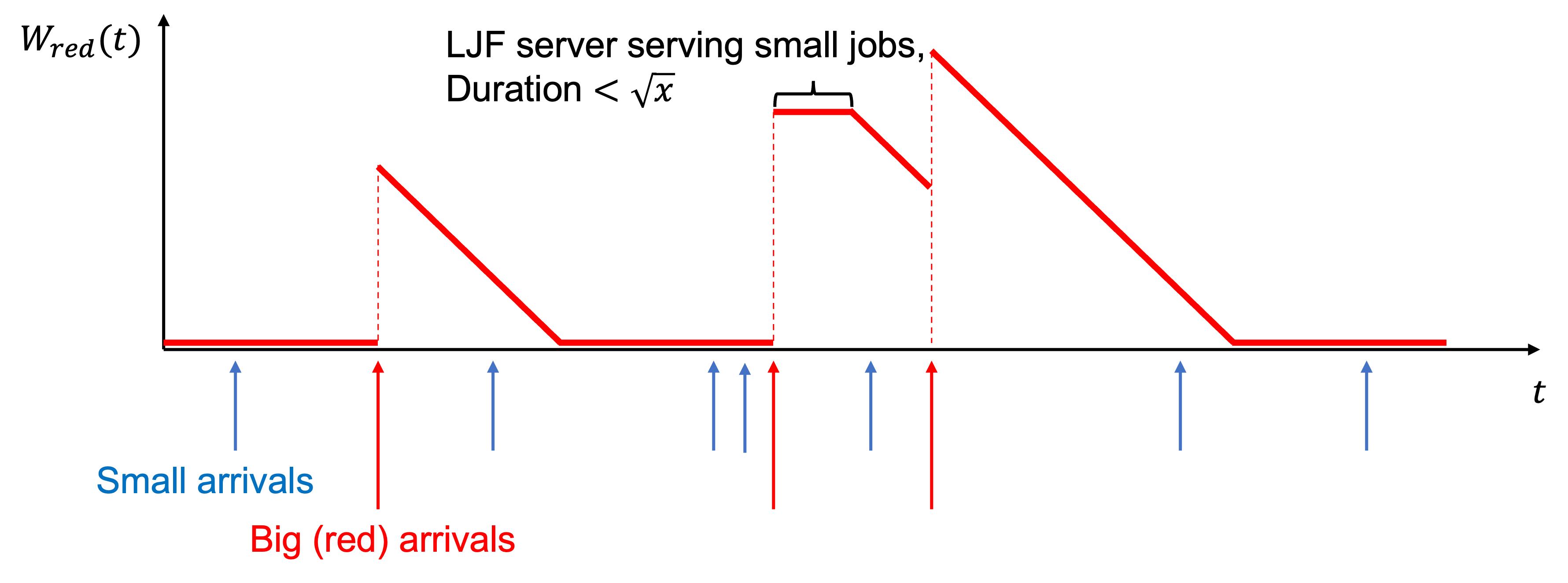}
        \caption{The dynamic of $W_{red}(t)$ under policy \SPLIT.}
        \label{fig:redwork}
    \end{figure}
    
    During each red work period (the consecutive period when $W_{red}(t)$ is larger than 0), we know that for at most $\sqrt{x}$ time, $W_{red}$ is not decreasing; for the rest of the time, $W_{red}$ is decreasing with rate at least 1. Define $B$ to be the duration of the red work period, and $W$ to be the total red work arriving during this period including the red jobs that start the red work period ($W>\sqrt{x}$ because red jobs are larger than $\sqrt{x}$). Then we have that 
    \[B \leq \sqrt{x} + W< 2W.\]

    Define $B^{(i)}$ to be the duration of the $i$-th red work period, which ends at time $t^{(i)}$, and $W^{(i)}$ to be the total red work arriving during the $i$-th red work period (including the starting red jobs). Then we have that 
    \[p_1(x) =  \lim_{t\to\infty} \frac{\int_{0}^t \mathbf{1}[W_{red}(s)=0] ds}{t}= 1- \lim_{i\to\infty} \frac{\sum_{j=1}^{i} B^{(j)}}{t^{(i)}} \geq 1- \lim_{i\to\infty} \frac{2\sum_{j=1}^{i} W^{(j)}}{t^{(i)}} = 1- 2r_{>\sqrt{x}},
    \]
    where $r_{>x}$ was defined in \eqref{eq:r_x}.

    \paragraph{Characterizing $p_2(x)$.}    
    We know that the arrival process of jobs with size larger than $x$ is a Poisson process with rate $\lambda \cdot \P{S>x}$. Hence, the probability that no job with size larger than $x$ arrives during the $\sqrt{x}$ time after the tagged job arrives is given by 
    \[p_2(x) = e^{-\lambda \cdot \P{S>x} \sqrt{x}}.\]

    Hence, we have that 
    \[p(x) \geq p_1(x) p_2(x) \geq (1-2r_{>\sqrt{x}}) e^{-\lambda \cdot \P{S>x} \sqrt{x}},\]
    where both terms on the right-hand side converge to 1 as $x$ goes to infinity.
\end{proof}

%% file: 5-Thresh-SPLIT.tex
\section{\threshsplit{d}: tail optimal under all loads}
\label{sec:threshsplit}

In this section, we introduce the \threshsplit{d} policy (see Figure~\ref{fig:thresh_SPLIT_diagram} for illustration).
We show that for load $\rho\geq \frac{n-1}{n}$, when the threshold parameter $d$ is properly chosen, \threshsplit{d} can be made arbitrarily close to strongly tail optimal. For completeness, we also show that \threshsplit{d} achieves this for load $\rho<\frac{n-1}{n}$ as well, even though \SPLIT is already strongly tail optimal in that regime.


\begin{definition}[$\threshsplit{d}$]
\label{def:threshsplit}
The policy \threshsplit{d}, parameterized by a threshold $d$, classifies all jobs as either {\em big}, if their size is $>d$, or {\em small}, if their size is $\leq d$.  It then splits the $n$ servers into $n-1$ servers reserved for small jobs and 1 server reserved for big jobs. The $n-1$ servers for small jobs apply policy FCFS (with a central queue) and the 1 server for big jobs applies policy SRPT. 
\end{definition}

\begin{figure}[h]
    \centering
    \includegraphics[width=0.45\textwidth]{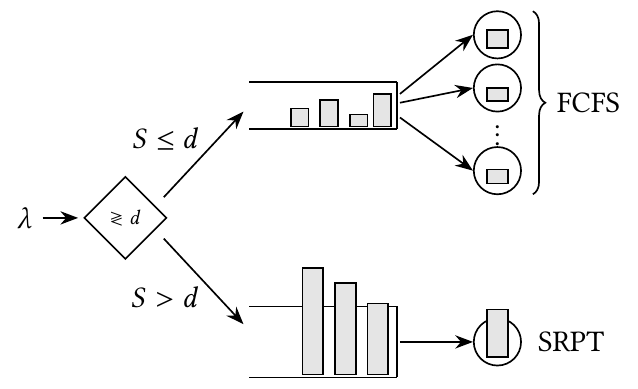}
    \caption{An illustration of the \threshsplit{d} policy. Jobs are classified by size: small jobs ($S \leq d$) are served by $n-1$ FCFS servers, and big jobs ($S > d$) are served by a single SRPT server.}
    \label{fig:thresh_SPLIT_diagram}
\end{figure}

The next lemma is a conventional result for the response time tail of M/G/$n$ systems with light-tailed job sizes under FCFS. 

\begin{lemma} [Theorem 2 in \cite{sadowsky1989analysis}]
    \label{lemma: light tail}
        For any FCFS M/G/$n$ system with light-tailed job sizes and load $\rho<1$, let $T$ denote the response time in this system. Then there exists a constant $\gamma>0$ such that
        \[ \lim_{t\to \infty} \frac{\P{T>t}}{e^{-\gamma t}} \leq 1.\]
    
    \end{lemma}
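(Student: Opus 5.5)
The plan is to reduce the FCFS M/G/$n$ waiting time to the waiting time in a single-server queue, where exponential tail bounds are classical, and then add the (light-tailed) job size. Write $T = W + S$, where $W$ is the steady-state FCFS waiting time of a tagged job and $S$ is its size, independent of $W$. It suffices to show that $\P{W>t} \le K e^{-\theta t}$ for some constants $K,\theta>0$. The size satisfies a bound of the same form because it is light-tailed, so $\E{e^{\theta_0 S}}<\infty$ for some $\theta_0>0$. A Chernoff bound on $W+S$ with any $0<\theta'<\min(\theta,\theta_0)$ then gives $\P{T>t}\le K' e^{-\theta' t}$. Taking $\gamma<\theta'$ strictly gives $\P{T>t}/e^{-\gamma t}\to 0\le 1$, which is the claimed bound.

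For the waiting time I would use the Kiefer--Wolfowitz workload vector. Let $V_k=(V_k^{(1)}\le\dots\le V_k^{(n)})$ be the ordered remaining workloads at the $n$ servers seen by the $k$-th arrival, so that $W_k=V_k^{(1)}$. It satisfies
\[ V_{k+1} = \mathrm{sort}\big((V_k + S_k e_1 - A_k\mathbf{1})^+\big), \]
with $A_k$ the exponential interarrival times. I would compare this with a cyclic-routing system, in which job $k$ is sent to server $k \bmod n$ and each server runs its own FCFS queue. Each cyclic queue is a GI/G/1 queue with Erlang-$n$ interarrival times of mean $n/\lambda$ and service $S$, so its load is $\rho<1$. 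The comparison step is to show, by induction on $k$ under a common sample path, that $W_k \le \widetilde W_k$, where $\widetilde W_k$ is the wait of job $k$ in the cyclic system (this is the classical comparison of Wolff). The inductive invariant I would try is componentwise domination of the sorted FCFS vector by the sorted vector of the cyclic queues' workloads. This domination should be preserved because adding $S_k$ to the smallest coordinate and re-sorting is monotone, and the cyclic system adds $S_k$ to a coordinate that is no smaller.

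Given the comparison, $W \le_{st} \widetilde W$, the stationary waiting time of an $E_n/G/1$ FCFS queue. By Lindley's recursion, $\widetilde W$ is the supremum of a random walk with increments $S - \tau$, where $\tau\sim$ Erlang$(n,\lambda)$. This walk has negative drift because $\rho<1$, and its increments have a finite moment generating function near $0$ because $S$ is light-tailed and $\tau\ge 0$. Cramér--Lundberg (equivalently Kingman's exponential martingale bound) then gives $\P{\widetilde W>t}\le e^{-\theta t}$, where $\theta>0$ solves $\E{e^{\theta(S-\tau)}}=1$. Such a root exists because this mgf starts at $1$ with negative derivative $\E{S-\tau}<0$ at $\theta=0$; I would truncate $\theta$ inside the domain of finiteness if the equation has no root there.

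I expect the main obstacle to be the sample-path comparison between central-queue FCFS and cyclic routing. The rest is standard. If the sorted-vector induction turns out to be awkward, I would fall back on a cruder bound. Using the virtual total workload $V$ of the FCFS M/G/$n$, one has $W\le V/n$ only when all servers are busy, so this route needs care. A cleaner fallback is to compare with the cyclic system via the Lindley recursion for each server separately, which only requires showing that the earliest-available server is never later than the cyclically designated one.
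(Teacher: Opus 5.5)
\textbf{The gap: the comparison with cyclic routing fails.}

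Your overall architecture is sound: $T=W+S$ with $W$ independent of $S$, an exponential bound on $W$, a Chernoff bound, and $\gamma$ chosen strictly below the decay rate. It is also a genuinely different route from the paper, which gives no argument and simply imports Theorem~2 of \cite{sadowsky1989analysis}. The problem is the comparison step, which is false as you formulate it.

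The per-customer inequality $W_k\le \widetilde W_k$ under a common sample path does not hold. Take $n=2$ and four jobs arriving at times $0,\epsilon,2\epsilon,3\epsilon$ with sizes $10,1,1,1$.
\begin{itemize}
\item Under central-queue FCFS, job~2 holds the second server until $1+\epsilon$, and job~3 then runs there until $2+\epsilon$. So $W_4=2-2\epsilon$.
\item Under cyclic routing, jobs~1 and~3 go to server~1 and jobs~2 and~4 go to server~2. Job~4 therefore starts at $1+\epsilon$, and $\widetilde W_4=1-2\epsilon<W_4$.
\end{itemize}

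Your inductive invariant is exactly what breaks, at job~3. Both systems then have sorted workload vector $\approx(1,10)$. FCFS adds the job to the smaller coordinate, giving $\approx(2,10)$. The cyclic system adds it to the larger coordinate, giving $\approx(1,11)$ after sorting. Adding $S_k$ to a larger coordinate spreads the vector out and \emph{lowers} its minimum, so the result does not dominate componentwise after sorting. Already $(0,5)$ with $S=3$ gives $(3,5)\not\le(0,8)$. Your fallback, that the earliest-available FCFS server is never later than the cyclically designated one, is the same false inequality. So no per-customer induction of this kind works, and any stationary comparison with the $E_n/G/1$ queue would need a different argument.

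\textbf{A repair.} The route you set aside works.
\begin{itemize}
\item In the Kiefer--Wolfowitz representation, $W=V^{(1)}\le \frac1n\sum_i V^{(i)}=Z/n$ always, where $Z$ is the total unfinished work. No busy-server condition is needed, since if a server is idle then $W=0$.
\item Central-queue FCFS never idles a server while a job waits, so $Z$ drains at rate $n$ whenever at least $n$ jobs are present. Let $V^{\mathrm{super}}$ be the workload of a speed-$n$ single server fed the same arrivals. Then $Z(t)-V^{\mathrm{super}}(t)$ is non-increasing on such stretches; this is the argument of Lemma~\ref{lemma: W_x SPLIT bounded}.
\item At the last epoch before $t$ with fewer than $n$ jobs present, $Z$ is the residual work of at most $n-1$ jobs in service.
\item With sizes bounded by $d$, this gives $Z\le V^{\mathrm{super}}+(n-1)d$. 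PASTA and the Cram\'er--Lundberg bound for the M/G/1 workload then give $\P{W>t}\le Ke^{-\theta t}$, and the rest of your argument goes through.
\end{itemize}

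Bounded sizes are in fact the only case the paper uses: small jobs in \threshsplit{d} have $S\le d$, and the FCFS stage of \TAGsplit{d} serves $\min(S,d)$. For general light-tailed sizes, as the lemma is stated, you would still need to control that residual term at the random epoch, or cite a genuine multi-server result as the paper does.
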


Lemma~\ref{lemma: light tail} is precisely what we need to handle the small-job side of \threshsplit{d}. Under \threshsplit{d}, the $n-1$ small-job servers only serve jobs with size at most $d$, so the size distribution they see is bounded and hence trivially light-tailed. By Lemma~\ref{lemma: light tail}, the response time tail of the small jobs therefore decays exponentially. Since the overall response time tail is heavy-tailed (driven by the big jobs), this exponential decay implies that small jobs do not contribute to the asymptotic response time tail of the system.

Moreover, since small jobs' sizes are bounded by $d$, the same exponentially decaying response time tail should hold under most reasonable scheduling policies on the $n-1$ small-job servers, e.g., SRPT-$(n-1)$. We choose FCFS in Definition~\ref{def:threshsplit} because Lemma~\ref{lemma: light tail} gives a rigorous proof in that case, whereas to our knowledge no analogous formal result is available for multi-server SRPT with bounded sizes. We revisit this design choice in Section~\ref{sec:discussion:design} and validate it empirically in Section~\ref{sec:simulation} and Appendix~\ref{sec:appendix:fcfs-vs-srpt}.

Under policy \threshsplit{d}, the load on the single big-job server is $r_{>d}$ (recall \eqref{eq:r_x}). The resource requirement on the $n-1$ small-job servers is then $$r_{\leq d} := \rho\cdot n - r_{>d}$$. For the small-job system to be stable, we need $r_{\leq d} < n-1$, i.e., the load on the small system must be $<1$. Given this stability condition, we have the following theorem capturing the tail of the response time of the system under \threshsplit{d}.

\begin{theorem}
\label{thm: split tail}
    Under policy \threshsplit{d}, if $r_{\leq d} :=\rho\cdot n - r_{>d} < n-1$ (i.e., the small-job system is stable), then
    \begin{equation}
        \lim_{t\to \infty} \frac{\P{T^{\threshsplit{d}}>t}}{\P{S>(1-r_{>d})t}} = 1.
        \label{eq: split tail}
    \end{equation}
\end{theorem}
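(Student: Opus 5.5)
We split the response time by job class. A tagged job is small with probability $\P{S\le d}$ and big with probability $\P{S>d}$. The two subsystems never share servers, and each is fed by an independent Poisson thinning of the arrival stream. So
\[
\P{T^{\threshsplit{d}}>t} = \P{S\le d}\,\P{T_{small}>t} + \P{S>d}\,\P{T_{big}>t}.
\]
Here $T_{small}$ is the response time in an FCFS M/G/$(n-1)$ queue with arrival rate $\lambda\P{S\le d}$ and sizes distributed as $(S\mid S\le d)$. Its load is $r_{\le d}/(n-1)<1$ by hypothesis. $T_{big}$ is the response time in an SRPT M/G/1 queue with arrival rate $\lambda\P{S>d}$ and sizes distributed as $(S\mid S>d)$. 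Its load is $\lambda\P{S>d}\E{S\mid S>d}=r_{>d}$; we need $r_{>d}<1$ for the statement to make sense, and we assume it.

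\textbf{Small part.} The sizes $(S\mid S\le d)$ are bounded, hence light-tailed. By Lemma~\ref{lemma: light tail}, $\P{T_{small}>t}\le (1+o(1))e^{-\gamma t}$. Since $\P{S>(1-r_{>d})t}$ is regularly varying and so decays only polynomially, the small term divided by $\P{S>(1-r_{>d})t}$ tends to $0$.

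\textbf{Big part.} The conditional distribution $S_d:=(S\mid S>d)$ is regularly varying with the same index $\alpha$, and
\[
\P{S_d>u}=\frac{\P{S>u}}{\P{S>d}} \quad\text{for } u>d.
\]
Apply the single-server strong-tail result \eqref{eq:single server heavy tail} for SRPT~\cite{nuyens2008preventing} to this M/G/1 queue with load $r_{>d}$:
\[
\P{T_{big}>t}\sim \P{S_d>(1-r_{>d})t}=\frac{\P{S>(1-r_{>d})t}}{\P{S>d}}.
\]
Multiplying by $\P{S>d}$, the big term is asymptotic to $\P{S>(1-r_{>d})t}$. Adding the two parts gives the limit $1$ in \eqref{eq: split tail}.

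\textbf{Main obstacle.} The delicate step is invoking \eqref{eq:single server heavy tail} for the big-job server. That result is stated for the original M/G/1 with size distribution $S$. We must check that its hypotheses (regular variation, finite mean, load below one) hold for the thinned stream with sizes $S_d$ and load $r_{>d}$, and that the asymptotic holds as an equivalence rather than only as a bound. If the cited result is only available as a tail-equivalence statement with these hypotheses, the argument goes through directly. Otherwise we would re-derive it with the tagged-job argument of Theorem~\ref{thm: SPLIT tail optimal}, which splits the tagged job's size into three cases:
\begin{itemize}
\item small $x$: handled by Lemma~\ref{lemma: single server work bound};
\item $x$ near $(1-r_{>d})t$: the response time is roughly $x/(1-r_{>d})$, since the job is slowed down by the load of the other big jobs;
\item the remaining intermediate range: handled in the same manner as in that proof.
\end{itemize}
The lower bound $\liminf\ge 1$ follows from the same asymptotic equivalence, or from the standard single-server lower bound applied to the big-job server.
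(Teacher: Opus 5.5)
Your proposal is correct and follows the paper's proof essentially step for step: you split by $S\le d$ versus $S>d$, use Lemma~\ref{lemma: light tail} to show the small-job term vanishes, then apply the single-server SRPT asymptotic \eqref{eq:single server heavy tail} to the big-job server (load $r_{>d}$, sizes $(S\mid S>d)$) and multiply back by $\P{S>d}$. Your explicit assumption $r_{>d}<1$ is a condition the paper leaves implicit, and the fallback re-derivation is not needed: the thinned big-job stream is itself a Poisson M/G/1 queue with regularly varying sizes of index $\alpha$, which is exactly the setting of the cited SRPT result and how the paper uses it.
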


\begin{proof}

\noindent\textbf{Small jobs.} Since $r_{\leq d} < n-1$, the small-job system (the $n-1$ servers serving only jobs with size $\leq d$) has load smaller than 1. By Lemma~\ref{lemma: light tail}, there exists a constant $\gamma>0$ such that
\begin{equation}
    \label{eq:proof small jobs tail}
    \lim_{t\to \infty} \frac{\P{T^{\threshsplit{d}}>t \mid S\leq d}}{e^{-\gamma t} }\leq 1.
\end{equation}

\noindent\textbf{Big jobs.} On the single big-job server, SRPT-1 achieves the asymptotic tail given by \eqref{eq:single server heavy tail}. Hence,
\begin{align}
   1&=\lim_{t\to \infty} \frac{\P{T^{\threshsplit{d}}>t\mid S>d}}{\P{S>(1-r_{>d})t \mid S>d}}\\
    &= \lim_{t\to \infty} \P{T^{\threshsplit{d}}>t\mid S>d}\cdot  \frac{\P{S>d}}{\P{S > (1-r_{>d}) t}}.
    \label{eq:proof large jobs tail}
\end{align}

\noindent\textbf{Combining.} By the law of total probability,
\begin{align*}
    &\lim_{t\to\infty} \frac{\P{T^{\threshsplit{d}}>t}}{\P{S>(1-r_{>d}) t}} \\
    &= \lim_{t\to\infty} \frac{\P{T^{\threshsplit{d}}>t \mid S\leq d} \P{S\leq d} + \P{T^{\threshsplit{d}}>t \mid S>d} \P{S> d}}{\P{S>(1-r_{>d}) t}}.
\end{align*}

The first term vanishes because the exponential decay in \eqref{eq:proof small jobs tail} is faster than any heavy-tailed distribution. The second term converges to 1 by \eqref{eq:proof large jobs tail}.
\end{proof}

\begin{theorem}
\label{thm: main}
    By choosing the threshold $d$ appropriately, \threshsplit{d} can be made arbitrarily close to strongly tail optimal. Specifically, the following asymptotic tails match the lower bounds in Section \ref{sec:theory:lowerbound}:
    \begin{itemize}
        \item If $\rho<\frac{n-1}{n}$, we have that
        \begin{equation}
            \lim_{d\to \infty} \left(\lim_{t\to \infty}\frac{\P{T^{\threshsplit{d}} >t}}{\P{S>t}}\right) = 1.
            \label{eq:split tail in max stability}
        \end{equation}
        \item If $\rho\geq \frac{n-1}{n}$, we have that
        \begin{equation}
            \lim_{d\to (d^*)^-} \left(\lim_{t\to \infty}\frac{\P{T^{\threshsplit{d}} > t}}{\P{S>n(1-\rho) t}}\right) = 1,
            \label{eq:split tail in min stability}
        \end{equation}
        where $d^*$ is the smallest threshold such that $r_{>d^*} = \rho \cdot n - (n-1)$, i.e., the minimum load on the big-job server needed to keep the small-job system stable.
    \end{itemize}
\end{theorem}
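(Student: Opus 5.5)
The plan is to derive both limits directly from Theorem~\ref{thm: split tail}, which gives, for every admissible $d$ (i.e.\ $r_{\leq d}<n-1$),
\[
\lim_{t\to\infty}\frac{\P{T^{\threshsplit{d}}>t}}{\P{S>(1-r_{>d})t}}=1 .
\]
We then use regular variation \eqref{eq:regular varying definition} to replace the denominator. For fixed $d$, applying \eqref{eq:regular varying definition} with $k=1-r_{>d}$ gives
\[
\lim_{t\to\infty}\frac{\P{S>(1-r_{>d})t}}{\P{S>kt}}=\left(\frac{1-r_{>d}}{k}\right)^{-\alpha}
\]
for any target constant $k>0$. Multiplying the two limits shows that the inner limit in \eqref{eq:split tail in max stability} or \eqref{eq:split tail in min stability} equals $\bigl((1-r_{>d})/k\bigr)^{-\alpha}$. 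Here $k=1$ in the first case and $k=n(1-\rho)$ in the second. It therefore suffices to show that $r_{>d}\to 1-k$ along the stated limit in $d$, while $d$ stays admissible and $1-r_{>d}>0$. The outer limit is then just continuity of $u\mapsto u^{-\alpha}$.

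\textbf{Case $\rho<\frac{n-1}{n}$.} First, $r_{\leq d}\le n\rho<n-1$, so every $d$ is admissible. Second, $r_{>d}=\lambda\E{S\mathbf{1}\{S>d\}}\to 0$ as $d\to\infty$ by dominated convergence, since $\E{S}<\infty$ because $\alpha>1$. For large $d$ we also have $r_{>d}<1$. Hence the inner limit equals $(1-r_{>d})^{-\alpha}\to 1$, which is \eqref{eq:split tail in max stability}.

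\textbf{Case $\rho\ge\frac{n-1}{n}$.} Here the target is $1-r_{>d}\to n(1-\rho)$, i.e.\ $r_{>d}\to n\rho-(n-1)=r_{>d^*}$. We need three facts about the map $d\mapsto r_{>d}$:
\begin{itemize}
\item it is non-increasing and right-continuous, and tends to $n\rho$ as $d\to 0$ and to $0$ as $d\to\infty$;
\item its left limit at $d^*$ is $r_{>d^*}+\lambda d^*\P{S=d^*}$;
\item for $d<d^*$ we have $r_{>d}>n\rho-(n-1)$, by minimality of $d^*$ together with monotonicity, so $r_{\leq d}<n-1$ and every $d<d^*$ is admissible.
\end{itemize}
We also need $1-r_{>d}>0$, which holds near $d^*$ because $n\rho-(n-1)<1$ whenever $\rho<1$.

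Given these facts, $d\to(d^*)^-$ gives $r_{>d}\to r_{>d^*}$ provided $\P{S=d^*}=0$. The inner limit is then $\bigl((1-r_{>d})/(n(1-\rho))\bigr)^{-\alpha}\to 1$, proving \eqref{eq:split tail in min stability}.

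The main obstacle is exactly this left-continuity at $d^*$. If $S$ has an atom at $d^*$, then $r_{>d}$ jumps there, $d^*$ as "smallest threshold with $r_{>d^*}=\ldots$" may not even exist, and the left limit overshoots. I would handle this by assuming $S$ is continuous at $d^*$ (implicit in the definition of $d^*$). Alternatively, one can reinterpret $d^*$ as $\inf\{d:r_{\le d}\ge n-1\}$ and allow randomized tie-breaking at the threshold. The remaining steps are routine.
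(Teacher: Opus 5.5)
Your proposal is correct and follows the same route as the paper: invoke Theorem~\ref{thm: split tail}, use regular variation to turn the change of denominator into the factor $\bigl((1-r_{>d})/k\bigr)^{-\alpha}$, and then let $r_{>d}\to 0$ as $d\to\infty$, or $r_{>d}\to n\rho-(n-1)$ as $d\to(d^*)^-$. Your extra checks are valid and fill in steps the paper passes over silently: big-job-server stability $r_{>d}<1$, admissibility of every $d<d^*$, and the need for $\P{S=d^*}=0$ so that $r_{>d}$ is left-continuous at $d^*$; in the same spirit, at $\rho=\frac{n-1}{n}$ no finite $d^*$ exists (since $r_{>d}>0$ for all $d$), so that boundary case is covered by your first-bullet argument with $d\to\infty$.
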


\begin{proof}
    Both cases follow from Theorem \ref{thm: split tail}.

    If $\rho<\frac{n-1}{n}$, then for any $d>0$ the small-job system has load less than 1. As $d\to\infty$, we have $r_{>d} \to 0$, so by Theorem \ref{thm: split tail}, the tail ratio converges to: 
    \[\lim_{d\to\infty}\left(\lim_{t\to \infty}\frac{\P{T^{\threshsplit{d}} >t}}{\P{S>t}}\right) = \lim_{d\to\infty} \lim_{t\to\infty} \frac{\P{S>(1-r_{>d})t}}{\P{S>t}}=\lim_{d\to\infty} (1-r_{>d})^{-\alpha} = 1.\]

    If $\rho\geq \frac{n-1}{n}$, stability of the small-job system requires $r_{>d} > n\rho - (n-1)$. As $d$ increases toward $d^*$ from below, $r_{>d}$ decreases toward $n\rho - (n-1)$, and the factor $(1-r_{>d})$ in \eqref{eq: split tail} approaches $n(1-\rho)$. Substituting into Theorem~\ref{thm: split tail} gives \eqref{eq:split tail in min stability}.
\end{proof}

%% file: 6-TAG-SPLIT.tex
\section{\TAGsplit{d}}
\label{sec:TAG-SPLIT}

In this section, we introduce the \TAGsplit{d} policy, a variant of \threshsplit{d} that achieves tail optimality even when the scheduler does not know exact job sizes upon arrival. Inspired by the Task-Assignment-by-Guessing-Size (TAGS) policy \cite{harchol2002task}, \TAGsplit{d} identifies big jobs by running them for a fixed time $d$ and checking whether they complete. The policy is defined as follows (see also Figure~\ref{fig:TAGS_SPLIT_diagram} for illustration):

\begin{definition}[$\TAGsplit{d}$]
\label{def:TAGsplit}
The policy \TAGsplit{d}, parameterized by a threshold $d$, splits the $n$ servers into $n-1$ small-job servers operating under FCFS and 1 big-job server running Processor Sharing (PS). Every arriving job initially enters the FCFS system as a small job. If a job has been served for $d$ time units without completing, it is identified as a big job and migrated to the big-job server in a preempt-resume manner (no work is lost). The PS server schedules all big jobs by giving every job an equal share of the server at every moment of time.  
\end{definition}

\begin{figure}[h]
    \centering
    \includegraphics[width=0.55\textwidth]{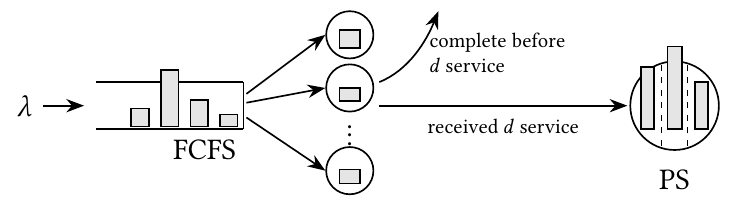}
    \caption{An illustration of the \TAGsplit{d} policy. All jobs initially enter the FCFS system. Jobs that have received $d$ service without completing are migrated to the PS server.}
    \label{fig:TAGS_SPLIT_diagram}
\end{figure}

TAGS-inspired policies rely on running jobs in upstream servers until a job has received a certain amount of service without completing; such a job is then deemed large and migrated to a downstream server. A central difficulty in the analysis of TAGS-inspired policies is that, even when the external arrivals to the upstream servers are Poisson, the arrival process to the downstream servers is neither Poisson nor renewal in general, since it is the upstream departure process thinned to jobs whose size exceeds the threshold. Most prior work in the TAGS literature has handled this difficulty by approximating the downstream arrival process as Poisson \cite{harchol2002task, BDS2019, BDS2020}. Empirical experiments have been used to support the accuracy of this approximation \cite{harchol2002task, BDS2019}. We make the analogous approximation here.

\begin{approximation}[Poisson arrivals at the big-job server]
\label{approx:TAGS}
In the analysis of \TAGsplit{d}, we assume that the arrival process to the PS (big-job) server is a Poisson process with rate $\lambda \cdot \P{S>d}$.
\end{approximation}

For a threshold $d$, define the load of the big-job server under \TAGsplit{d} to be $\rho^{TAGS}_{large}(d)$:
\begin{equation}
    \rho^{TAGS}_{large}(d) = \lambda \cdot \P{S>d} \cdot \left(\E{S\mid S>d} - d\right).
    \label{eq:rho_TAG_large}
\end{equation}
Now we can state the main theorem for \TAGsplit{d}, which is analogous to Theorem \ref{thm: split tail} for \threshsplit{d}:
\begin{theorem}
\label{thm: TAGsplit tail}
    Under policy \TAGsplit{d}, if $r_{small} := \rho\cdot n - \rho^{TAGS}_{large}(d) < n-1$ (i.e., the small-job system is stable), then
    \begin{equation}
        \lim_{t\to \infty} \frac{\P{T^{\TAGsplit{d}}>t}}{\P{S>(1-\rho^{TAGS}_{large}(d))t}} = 1.
        \label{eq:TAGsplit tail}
    \end{equation}
\end{theorem}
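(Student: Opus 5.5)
The plan mirrors the proof of Theorem~\ref{thm: split tail}: condition on whether the tagged job is ever migrated (i.e., whether $S\le d$ or $S>d$), show that the non-migrated part contributes a tail that decays faster than any regularly varying function, and show that the migrated part has exactly the PS single-server tail with load $\rho^{TAGS}_{large}(d)$. By the law of total probability,
\[
\P{T>t} = \P{T>t \mid S\le d}\P{S\le d} + \P{T>t\mid S>d}\P{S>d},
\]
and the result follows by dividing by $\P{S>(1-\rho^{TAGS}_{large}(d))t}$ and treating the two terms separately.

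\textbf{Small jobs.} The FCFS system on $n-1$ servers sees Poisson($\lambda$) arrivals with service requirement $\min(S,d)$. This size is bounded, hence light-tailed, and its resource requirement is $\lambda\E{\min(S,d)} = n\rho - \rho^{TAGS}_{large}(d) = r_{small} < n-1$. Lemma~\ref{lemma: light tail} therefore gives an exponentially decaying response-time tail, $e^{-\gamma t}$, in the FCFS subsystem. For a job with $S\le d$, the response time is just its FCFS response time, so the first term is $O(e^{-\gamma t})$ and vanishes against the regularly varying denominator.

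\textbf{Big jobs.} A job with $S>d$ has response time $T = T_{F} + T_{PS}$, where $T_F$ is the time to reach $d$ units of service in the FCFS system and $T_{PS}$ is its sojourn at the PS server with remaining size $S-d$. Job sizes play no role in FCFS ordering beyond the truncation at $d$, so $T_F$ has the same light tail as above. Under Approximation~\ref{approx:TAGS}, the PS server is an M/G/1 PS queue with arrival rate $\lambda\P{S>d}$, job size distribution $(S-d\mid S>d)$, and load $\rho^{TAGS}_{large}(d)<1$. This size distribution is regularly varying with the same index $\alpha$, and $\P{S-d>u\mid S>d}\sim\P{S>u}/\P{S>d}$. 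By the single-server result \eqref{eq:single server heavy tail} for PS \cite{borst2006sojourn,guillemin2004tail},
\[
\P{T_{PS}>t} \sim \frac{\P{S>(1-\rho^{TAGS}_{large}(d))t}}{\P{S>d}} .
\]
Multiplying by $\P{S>d}$ gives the desired second term, provided the additional $T_F$ can be absorbed.

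\textbf{Main obstacle.} The delicate step is combining $T_F$ and $T_{PS}$, since they are dependent through the arrival epochs; note that Approximation~\ref{approx:TAGS} already decouples the PS input. I would handle this with upper and lower sandwich bounds. For the lower bound, use $T\ge T_{PS}$. For the upper bound, write
\[
\P{T_F+T_{PS}>t} \le \P{T_{PS}>t-\sqrt t}+\P{T_F>\sqrt t}.
\]
The first term is asymptotically $\P{S>(1-\rho^{TAGS}_{large}(d))t}/\P{S>d}$ by regular variation, since $(t-\sqrt t)/t\to1$. The second term is $O(e^{-\gamma\sqrt t})$, which is still negligible against any polynomial. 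This requires only a marginal tail bound on $T_F$, so no independence is needed. Combining the upper and lower bounds gives limit $1$.
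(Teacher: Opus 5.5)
Your proposal is correct and follows essentially the same route as the paper. Both arguments split on $S\le d$ versus $S>d$ by the law of total probability, use the exponential FCFS tail (Lemma~\ref{lemma: light tail}) for the small jobs and for the pre-migration delay of big jobs, and use the M/G/1 PS tail with job sizes $(S-d\mid S>d)$ under Approximation~\ref{approx:TAGS}.

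The only cosmetic difference is how the FCFS delay is absorbed. You use a $\sqrt t$/$(t-\sqrt t)$ union bound together with the trivial lower bound $T\ge T_{PS}$, whereas the paper's footnote uses an $\epsilon t$/$(1-\epsilon)t$ split; both work. Like the paper, you implicitly need $\rho^{TAGS}_{large}(d)<1$ for the PS server to be stable, and the stated hypothesis $r_{small}<n-1$ does not by itself guarantee this.
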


\begin{proof}

\noindent\textbf{Small jobs.} For jobs with size $S \leq d$, their response time under \TAGsplit{d} is that of a FCFS system with $n-1$ servers and job sizes bounded by $d$. By Lemma~\ref{lemma: light tail}, there exists a constant $\gamma>0$ such that
    \begin{equation}
        \lim_{t\to \infty} \frac{\P{T^{\TAGsplit{d}}>t \mid S \leq d}}{e^{-\gamma t} }\leq 1.
        \label{eq:TAGsplit proof small jobs tail}
    \end{equation}

\noindent\textbf{Big jobs.} For jobs with size $S > d$, the response time under \TAGsplit{d} consists of three parts: (i) the waiting time in the FCFS queue, denoted by $T_{wait}^{FCFS}$, (ii) $d$ units of service at the FCFS servers, and (iii) the time at the PS server, denoted by $T_{PS}$. Part~(i) decays exponentially by Lemma~\ref{lemma: light tail}, and part~(ii) is a constant. Part~(iii) is the response time of a single-server PS system with load $\rho^{TAGS}_{large}(d)$ and job size distribution $(S-d \mid S>d)$. By Approximation~\ref{approx:TAGS} and \cite{guillemin2004tail}, the PS tail satisfies
    \begin{equation}
        \lim_{t\to \infty} \frac{\P{T_{PS}>t}}{\P{S-d>(1-\rho^{TAGS}_{large}(d))t \mid S>d}} = 1.
        \label{eq:TAGsplit proof large jobs tail}
    \end{equation}

Hence we have that the tail of the response time of a big job under \TAGsplit{d} satisfies
\begin{align}
    \lim_{t\to\infty} \frac{\P{T^{\TAGsplit{d}} > t \mid S>d}}{\P{S > (1-\rho^{TAGS}_{large}(d))t \mid S>d}}
      &= \lim_{t\to\infty} \frac{\P{T_{wait}^{FCFS} + d + T_{PS} > t}}{\P{S > (1-\rho^{TAGS}_{large}(d))t \mid S>d}} \nonumber\\
      &= \lim_{t\to\infty} \frac{\P{T_{PS} > t}}{\P{S > (1-\rho^{TAGS}_{large}(d))t \mid S>d}} \nonumber\\
      &= \lim_{t\to\infty} \frac{\P{T_{PS} > t}}{\P{S-d > (1-\rho^{TAGS}_{large}(d))t \mid S>d}} \nonumber\\
      &= 1,
    \label{eq:TAGsplit proof big job tail}
\end{align}
where the second equality uses that $T_{wait}^{FCFS}$ has exponentially decaying tail (part (i)) and $d$ is a constant (part (ii)), both negligible compared to the regularly varying $T_{PS}$;\footnote{This argument can be made rigorous trivially by splitting $t$ into $\epsilon t$ and $(1-\epsilon)t$, and then using a union bound to bound both terms.} the third uses that $\P{S-d > x \mid S>d} \sim \P{S > x \mid S>d}$ as $x\to\infty$ for regularly varying $S$; and the last equality uses \eqref{eq:TAGsplit proof large jobs tail}.

\noindent\textbf{Combining.} By the law of total probability,
    \begin{align*}
        &\lim_{t\to \infty} \frac{\P{T^{\TAGsplit{d}}>t}}{\P{S>(1-\rho^{TAGS}_{large}(d))t}} \\
        &= \lim_{t\to \infty} \frac{\P{T^{\TAGsplit{d}}>t \mid S \leq d} \P{S \leq d} + \P{T^{\TAGsplit{d}}>t \mid S>d} \P{S>d}}{\P{S>(1-\rho^{TAGS}_{large}(d))t}}.
    \end{align*}
The first term vanishes because the exponential decay in \eqref{eq:TAGsplit proof small jobs tail} is faster than any heavy-tailed distribution. The second term converges to 1 by \eqref{eq:TAGsplit proof big job tail}.
\end{proof}

Given Theorem~\ref{thm: TAGsplit tail}, we can now state the analogue of Theorem~\ref{thm: main} for \TAGsplit{d}:

\begin{theorem}
\label{thm: TAGsplit main}
    By choosing the threshold $d$ appropriately, \TAGsplit{d} can be made arbitrarily close to strongly tail optimal. Specifically, the following asymptotic tails match the lower bounds in Section \ref{sec:theory:lowerbound}:
    \begin{itemize}
        \item If $\rho<\frac{n-1}{n}$, we have that
        \begin{equation}
            \lim_{d\to \infty} \left(\lim_{t\to \infty}\frac{\P{T^{\TAGsplit{d}} >t}}{\P{S>t}}\right) = 1.
            \label{eq:TAGsplit tail in max stability}
        \end{equation}
        \item If $\rho\geq \frac{n-1}{n}$, we have that
        \begin{equation}
            \lim_{d\to (d^*)^+} \left(\lim_{t\to \infty}\frac{\P{T^{\TAGsplit{d}} > t}}{\P{S>n(1-\rho) t}}\right) = 1,
            \label{eq:TAGsplit tail in min stability}
        \end{equation}
        where $d^*$ is the smallest threshold such that $\rho^{TAGS}_{large}(d^*) = \rho \cdot n - (n-1)$.
    \end{itemize}
\end{theorem}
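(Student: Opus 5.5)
Both bullets will follow from Theorem~\ref{thm: TAGsplit tail}, in the same way Theorem~\ref{thm: main} follows from Theorem~\ref{thm: split tail}. The one new ingredient is continuity of the map $d\mapsto \rho^{TAGS}_{large}(d)$. I will write
\[\rho^{TAGS}_{large}(d)=\lambda\,\E{(S-d)\mathbf{1}\{S>d\}}=\lambda\int_d^\infty \P{S>u}\,du.\]
This is continuous and nonincreasing in $d$, since its derivative is $-\lambda\P{S>d}\le 0$. It tends to $0$ as $d\to\infty$, because $\E{S}<\infty$. It also satisfies $\rho^{TAGS}_{large}(d)\le r_{>d}$.

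Whenever the stability hypothesis $r_{small}<n-1$ holds, Theorem~\ref{thm: TAGsplit tail} combines with regular variation \eqref{eq:regular varying definition} to give the inner limit in closed form:
\[\lim_{t\to\infty}\frac{\P{T^{\TAGsplit{d}}>t}}{\P{S>c\,t}}=\lim_{t\to\infty}\frac{\P{S>(1-\rho^{TAGS}_{large}(d))t}}{\P{S>c\,t}}=\left(\frac{1-\rho^{TAGS}_{large}(d)}{c}\right)^{-\alpha}.\]
Here $c$ is whichever normalising constant appears in the bullet. This reduces each bullet to a one-variable limit in $d$.

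\textbf{First bullet ($\rho<\frac{n-1}{n}$).} Take $c=1$. For every $d$ we have $r_{small}\le n\rho<n-1$, so Theorem~\ref{thm: TAGsplit tail} applies for all $d$. Since $\rho^{TAGS}_{large}(d)\to0$ as $d\to\infty$, the expression $(1-\rho^{TAGS}_{large}(d))^{-\alpha}$ tends to $1$.

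\textbf{Second bullet ($\rho\ge\frac{n-1}{n}$).} Take $c=n(1-\rho)$. Note that $\rho^{TAGS}_{large}(d^*)=n\rho-(n-1)$ gives
\[1-\rho^{TAGS}_{large}(d^*)=n(1-\rho).\]
By continuity, as $d\to(d^*)^+$ we get $\rho^{TAGS}_{large}(d)\to n\rho-(n-1)$. Hence $\bigl((1-\rho^{TAGS}_{large}(d))/(n(1-\rho))\bigr)^{-\alpha}\to1$, which is the claimed limit \eqref{eq:TAGsplit tail in min stability}.

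\textbf{Main obstacle.} The delicate step is justifying that, along the approach $d\to(d^*)^+$ as stated, the inner limit is actually the one supplied by Theorem~\ref{thm: TAGsplit tail}. That theorem needs the strict inequality $\rho^{TAGS}_{large}(d)>n\rho-(n-1)$. Because $\rho^{TAGS}_{large}$ is nonincreasing, this inequality holds exactly on the side of $d^*$ where $\rho^{TAGS}_{large}$ exceeds its value at $d^*$. So my plan is first to record precisely the set of admissible thresholds $\{d:\rho^{TAGS}_{large}(d)>n\rho-(n-1)\}$. I will then interpret the one-sided limit in \eqref{eq:TAGsplit tail in min stability} as taken over admissible $d$ accumulating at $d^*$ from the stated side.

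If I find that the stated side lies outside the admissible set, because $\rho^{TAGS}_{large}$ is strictly decreasing near $d^*$, I will flag the following. The closed-form expression above is continuous in $d$ on both sides of $d^*$, and its limit at $d^*$ equals $1$ from either direction. The content of the bullet is therefore that the inner limit can be made arbitrarily close to $1$ using admissible thresholds near $d^*$. That is exactly what "arbitrarily close to strongly tail optimal" requires.

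Finally, I will conclude optimality as in Corollary~\ref{corollary: SPLIT strongly tail optimal}, comparing against Lower bound 2 in the first regime and Lower bound 1 in the second.
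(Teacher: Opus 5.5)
Your proposal is correct and takes the same route as the paper, whose proof is the one-liner ``follow the proof of Theorem~\ref{thm: main} with $r_{>d}$ replaced by $\rho^{TAGS}_{large}(d)$''; your identity $\rho^{TAGS}_{large}(d)=\lambda\int_d^\infty \P{S>u}\,du$ supplies the continuity, monotonicity and decay to $0$ that this one-liner takes for granted. You can drop the hedge in your last step, since you already have what decides it: regular variation forces $\P{S>d}>0$ for every $d$, so $\rho^{TAGS}_{large}$ is strictly decreasing and the admissible thresholds are exactly $d<d^*$ (for $d\ge d^*$ the FCFS small-job system has load at least $1$ and there is no steady-state response time, so the continuity of your closed form on that side is irrelevant), hence the second bullet must be read with $d\to(d^*)^-$, which is the direction the paper's proof of Theorem~\ref{thm: main} uses, and the ``$+$'' in the statement is a typo (and if $\rho=\frac{n-1}{n}$ no finite $d^*$ exists, but then every $d$ is admissible and your first-bullet argument with $d\to\infty$ covers it).
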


\begin{proof}
    The proof follows the same argument as Theorem~\ref{thm: main}, with $r_{>d}$ replaced by $\rho^{TAGS}_{large}(d)$.
\end{proof}

%% file: 7-simulation.tex
\section{Simulation}
\label{sec:simulation}

In this section we empirically evaluate our policies and compare them with several baselines. Each simulation runs an M/G/$n$ system with Poisson arrivals, Pareto-distributed job sizes with $\P{S>t} = t^{-\alpha}$ for $t \geq 1$, and $10^9$ job arrivals.

\paragraph{Metric}
Motivated by the lower bounds in Section~\ref{sec:theory:lowerbound}, we plot the \emph{normalized tail}, i.e., the response time tail $\P{T>t}$ divided by the applicable lower bound: for $\rho<\frac{n-1}{n}$, we plot $\P{T>t}/\P{S>t}$, and for $\rho\geq \frac{n-1}{n}$, we plot $\P{T>t}/\P{S>n(1-\rho)t}$. A normalized tail converging to $1$ as $t\to\infty$ corresponds to strong tail optimality.

\paragraph{Baselines}
We compare against FCFS, SRPT-$n$, and SEK-$\epsilon$; here SEK-$\epsilon$ refers to the Practical-SEK policy of \cite{grosof2025outperforming}, which mostly follows SRPT-$n$ except when the system contains exactly $n+1$ jobs with one of size larger than $\epsilon$ and the other $n$ of size smaller than $\epsilon$, in which case it serves the largest job in place of the second-largest. We do not include Boost \cite{yu2024strongly,yu2025tale}, which is designed for tail optimization but is only defined for light-tailed job size distributions.

\paragraph{Our policies}
We simulate \SPLIT, which applies only when $\rho<\frac{n-1}{n}$, and \threshsplit{d} for several values of the threshold $d$. In our experiments, \threshsplit{d} is implemented with a work-stealing improvement: when the large-job server is idle, it opportunistically serves small jobs, and the moment a job with size $>d$ arrives it preempts the job in service and returns to serving large jobs.

\subsection{Low load ($\rho<\frac{n-1}{n}$)}
\label{sec:simulation:lowload}

Figure~\ref{fig:sim:lowload} shows results for $n=3$, $\rho=0.5$, and Pareto job sizes with tail index $\alpha=1.5$.

\begin{figure}[h]
\centering
\begin{subfigure}[b]{0.45\textwidth}
  \includegraphics[width=\textwidth]{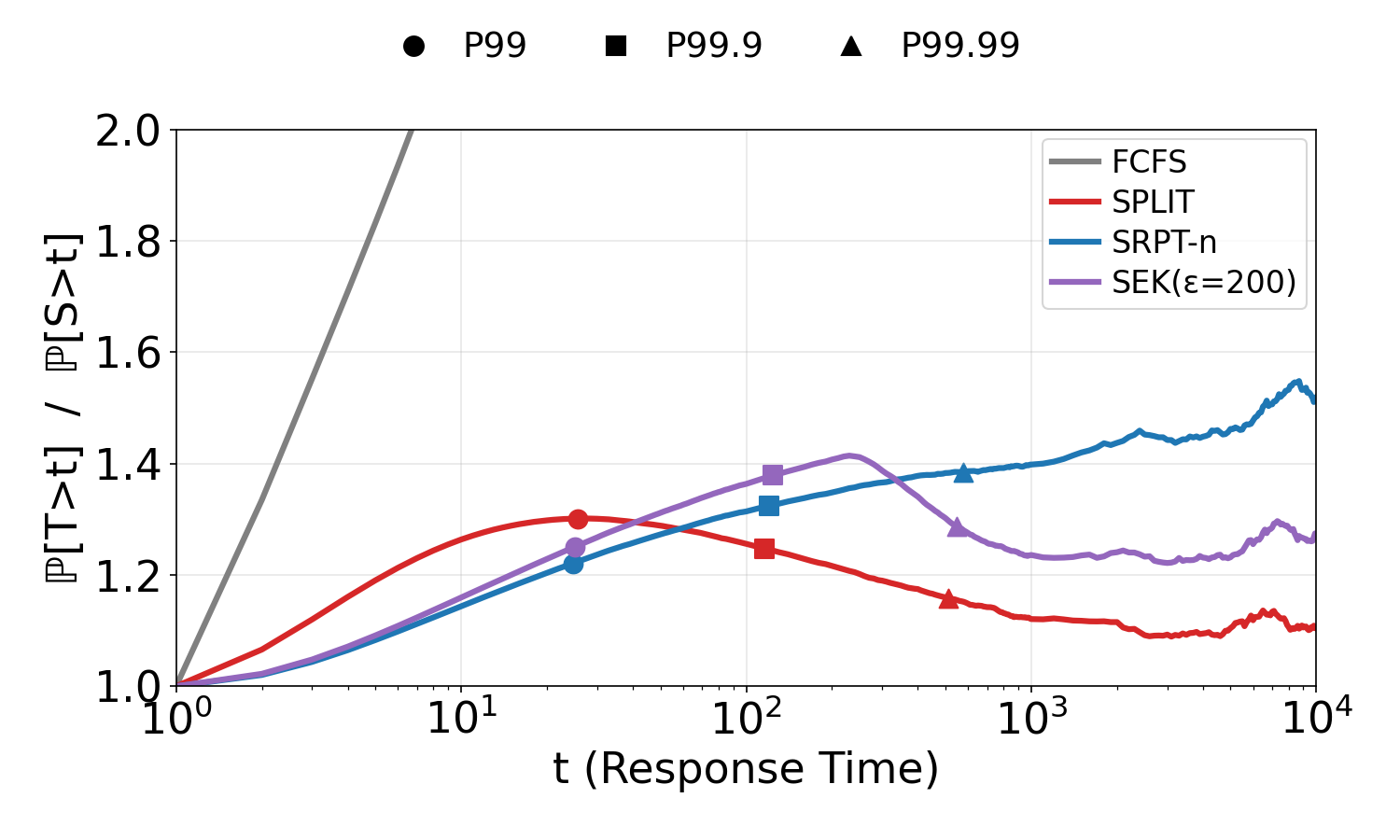}
  \caption{\SPLIT and baselines}
  \label{subfig:lowload-main}
\end{subfigure}\hfill
\begin{subfigure}[b]{0.45\textwidth}
  \includegraphics[width=\textwidth]{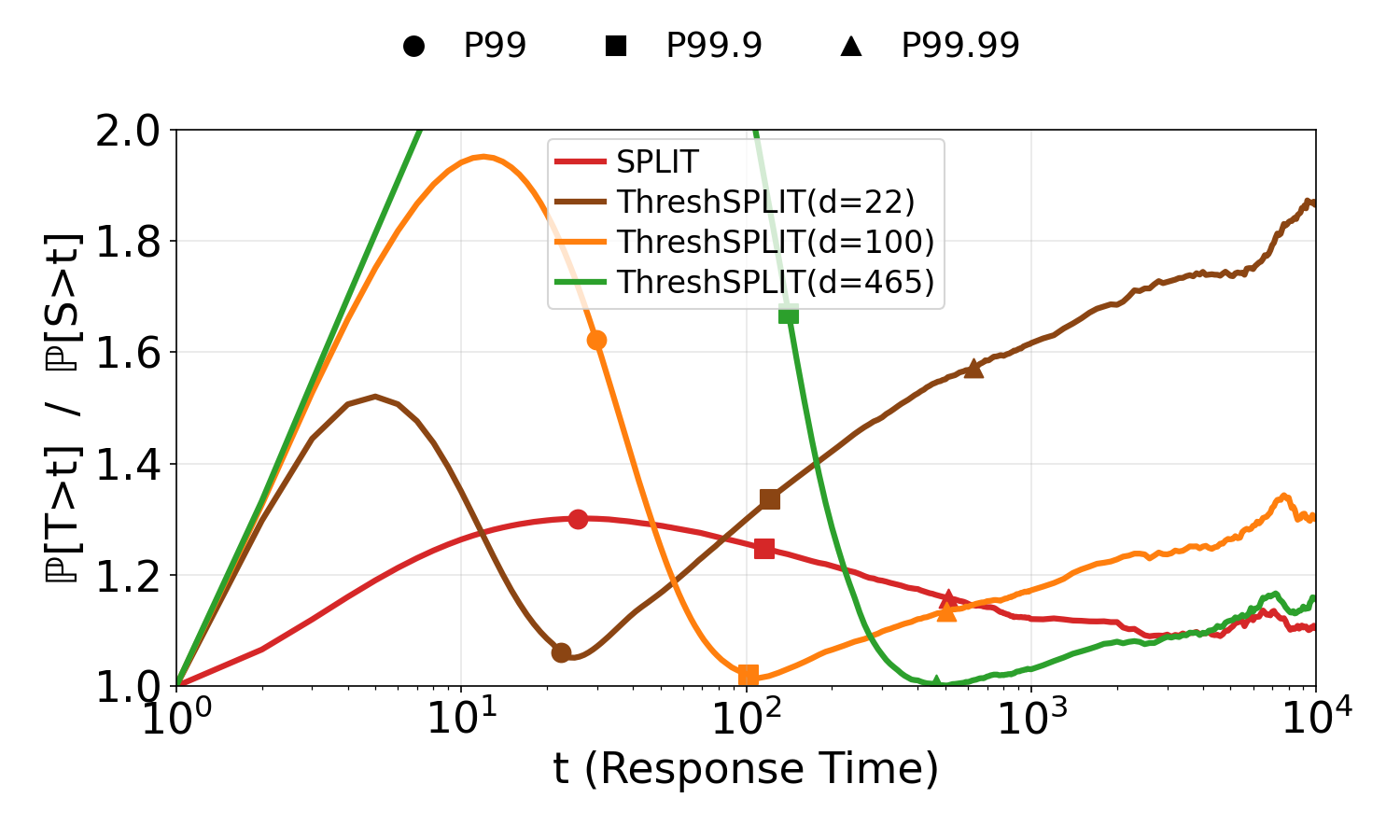}
  \caption{\threshsplit{d} for several thresholds $d$}
  \label{subfig:lowload-thresh}
\end{subfigure}
\caption{Normalized response time tail, $n=3$ servers, $\rho=0.5$, Pareto job sizes with $\alpha=1.5$.}
\label{fig:sim:lowload}
\end{figure}

Figure~\ref{subfig:lowload-main} compares \SPLIT (in red) with the baselines: \SPLIT's normalized tail converges to $1$, confirming its strong tail optimality (Theorem~\ref{thm: SPLIT tail optimal}), whereas all baselines remain bounded away from $1$. For SEK-$\epsilon$, the asymptotic tail empirically improves as $\epsilon$ grows, so to keep the figure uncluttered we plot only the SEK-$\epsilon$ curve with the largest $\epsilon$ ($\epsilon=200$), which gives the best asymptotic tail among the SEK family. The full set of SEK-$\epsilon$ curves for several values of $\epsilon$ is provided in Appendix~\ref{sec:appendix:sek-epsilon}. We follow the convention of only showing SEK($\epsilon = 200$) for all subsequent figures in this section.

\SPLIT outperforms SRPT-$n$ at the $99.9\%$ and $99.99\%$ levels, but trails SRPT-$n$ at the $99\%$ level in this setting. This reflects a general tradeoff: Any policy that improves on SRPT-$n$ at high percentiles typically pays for it at lower percentiles. The $99\%$ loss is not universal, however: in the $\alpha=2.0$ experiment of Section~\ref{sec:simulation:alpha}, \SPLIT also wins at the $99\%$ level (see Figure~\ref{subfig:alpha2}).

Figure~\ref{subfig:lowload-thresh} reports \threshsplit{d} for three choices of the threshold $d$: the $99\%$, $99.9\%$, and $99.99\%$ percentiles of the job size distribution. Consistent with Theorem~\ref{thm: main}, the normalized tail approaches $1$ as $d$ grows. An interesting side-effect of the threshold choice is the ``friendliness'' of \threshsplit{d} to jobs at the corresponding percentile: Setting $d$ at the $p$-th percentile makes \threshsplit{d} most friendly to $p$-percentile jobs, since these are the smallest ``big'' jobs and thus receive dedicated service. The figure makes this visible: At the percentile corresponding to the chosen $d$, the normalized tail is pulled close to its minimum value of $1$.

\subsection{High load ($\rho \geq \frac{n-1}{n}$)}
\label{sec:simulation:highload}

Figure~\ref{fig:sim:highload} shows two high-load settings: $n=3$ with $\rho=0.8$, and $n=10$ with $\rho=0.94$; both use Pareto job sizes with $\alpha=1.5$. In both cases, keeping the small-job system stable requires the large-job server to carry more than $0.4$ of the service requirement. We test thresholds corresponding to large-job-server load $0.45$, $0.5$, and $0.6$. \SPLIT does not apply in this regime, so we report only \threshsplit{d}.

\begin{figure}[h]
\centering
\begin{subfigure}[b]{0.45\textwidth}
  \includegraphics[width=\textwidth]{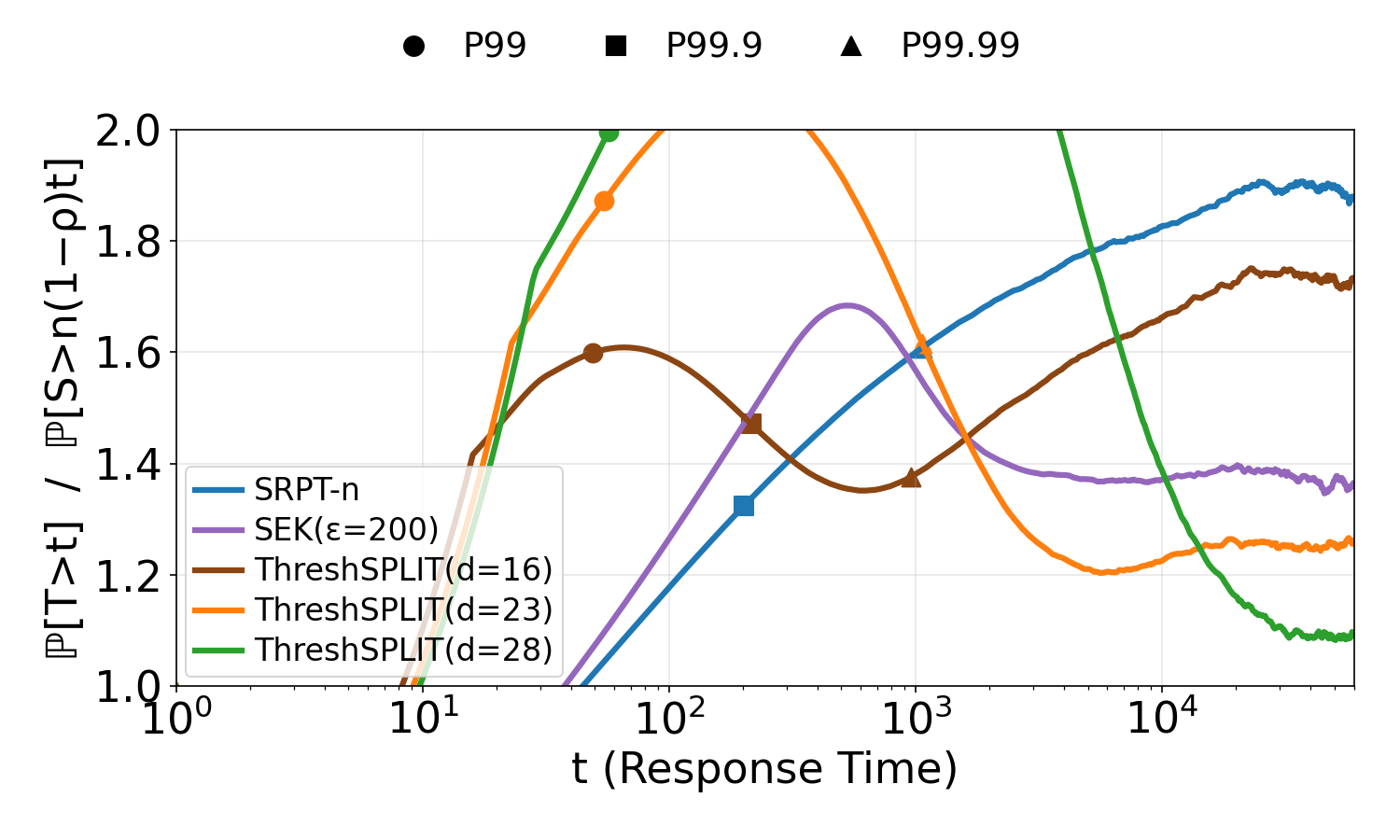}
  \caption{$n=3$, $\rho=0.8$}
  \label{subfig:highload-exp2}
\end{subfigure}\hfill
\begin{subfigure}[b]{0.45\textwidth}
  \includegraphics[width=\textwidth]{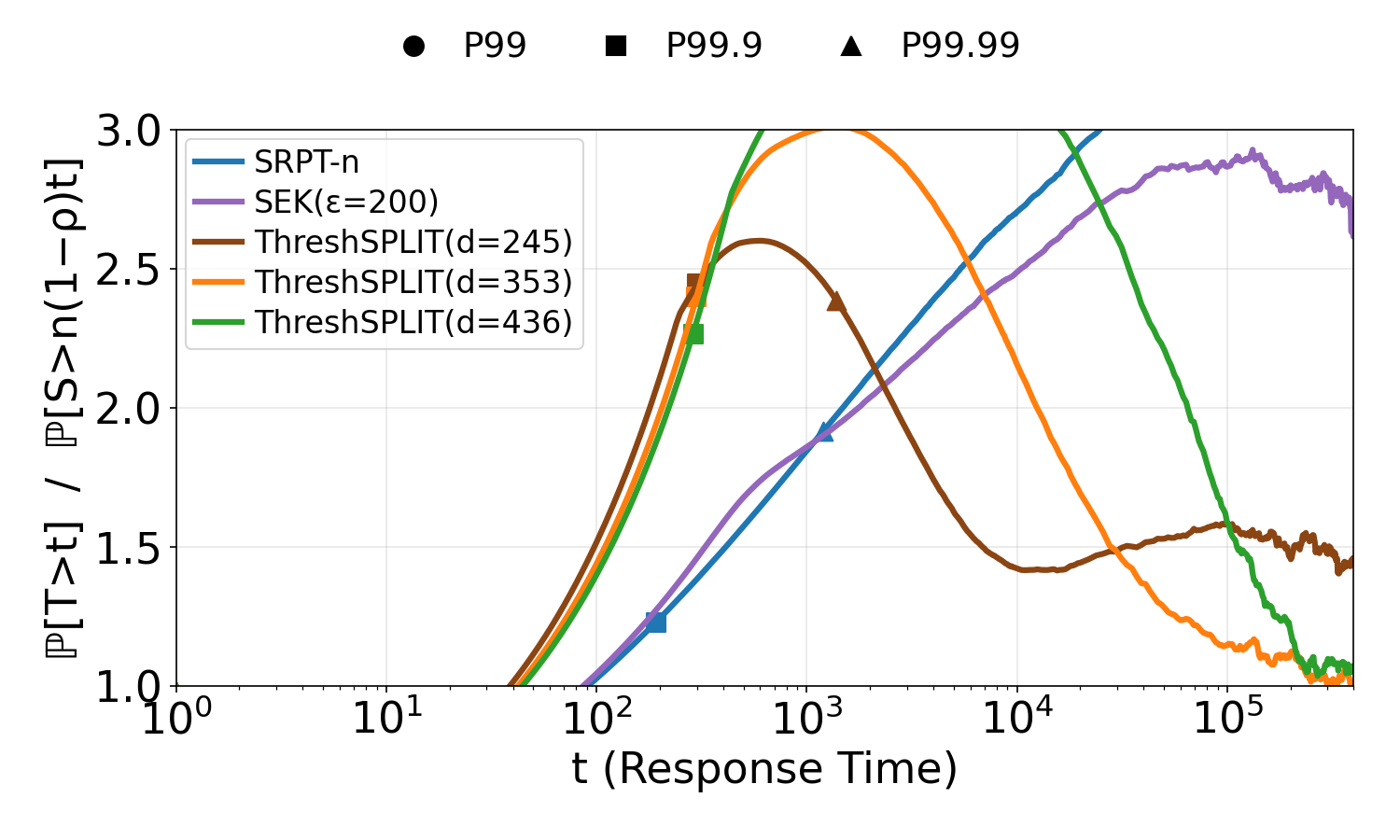}
  \caption{$n=10$, $\rho=0.94$}
  \label{subfig:highload-exp3}
\end{subfigure}
\caption{Normalized response time tail in the high-load regime, Pareto job sizes with $\alpha=1.5$. \threshsplit{d} uses SRPT-$(n-1)$ on the small-job servers. These two simulations take $10^{10}$ job arrivals.}
\label{fig:sim:highload}
\end{figure}

In this regime, we use SRPT-$(n-1)$ on the small-job servers of \threshsplit{d} rather than FCFS. As discussed in Section~\ref{sec:threshsplit}, this choice is unlikely to affect the asymptotic tail: Conceivably both FCFS and SRPT-$(n-1)$ yield exponentially decaying response time tails for the bounded-size small jobs. Empirically, however, FCFS produces a noticeably larger bump in the pre-asymptotic regime at high load, which motivates the SRPT-$(n-1)$ choice here; a direct side-by-side comparison is given in Appendix~\ref{sec:appendix:fcfs-vs-srpt}.

We observe that with properly chosen thresholds, \threshsplit{d} approaches strong tail optimality and outperforms all baselines asymptotically.

\subsection{Other Pareto tail indices}
\label{sec:simulation:alpha}

Figure~\ref{fig:sim:alpha} varies the Pareto tail index $\alpha$, fixing $n=3$ and $\rho=0.5$.

\begin{figure}[h]
\centering
\begin{subfigure}[b]{0.45\textwidth}
  \includegraphics[width=\textwidth]{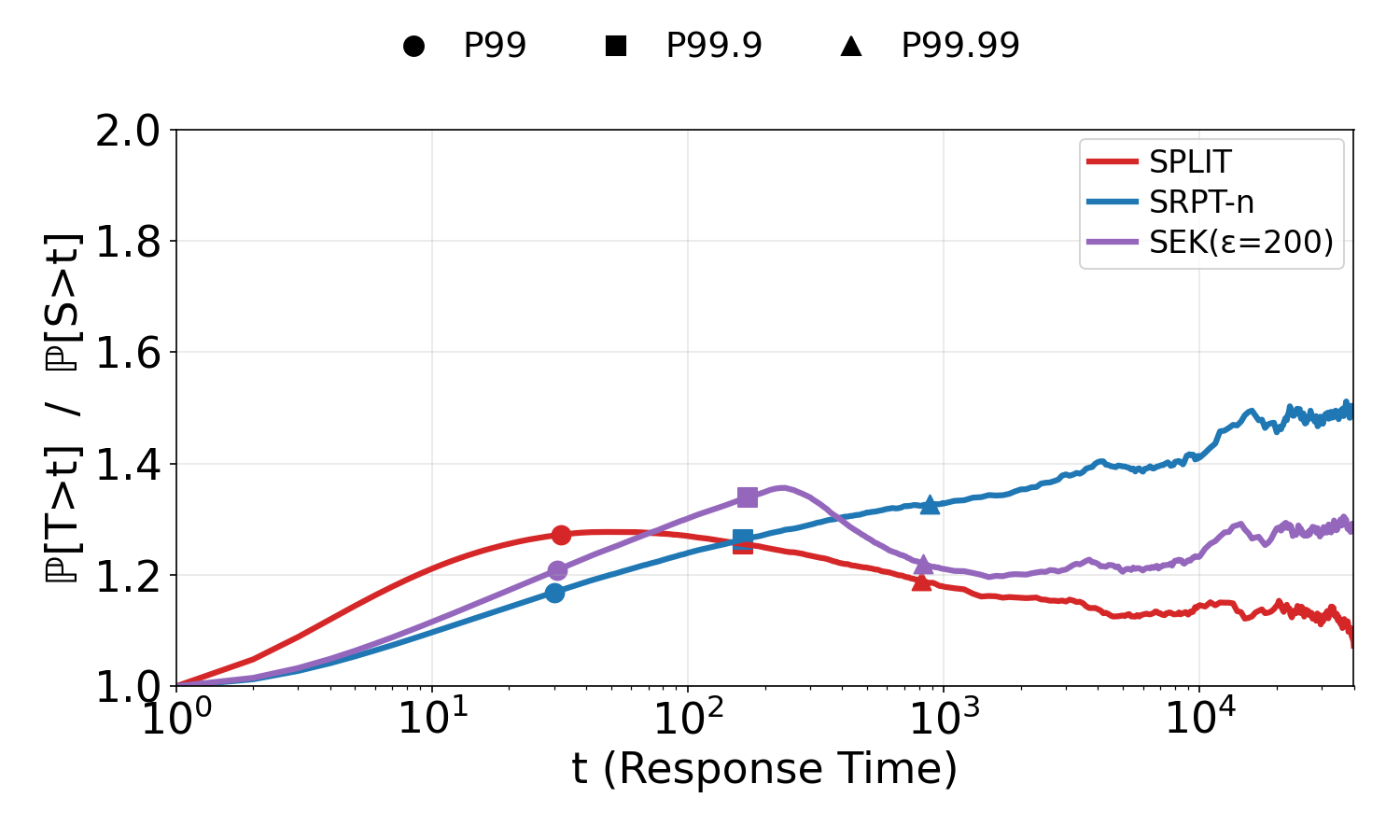}
  \caption{$\alpha=1.4$}
  \label{subfig:alpha1.4}
\end{subfigure}\hfill
\begin{subfigure}[b]{0.45\textwidth}
  \includegraphics[width=\textwidth]{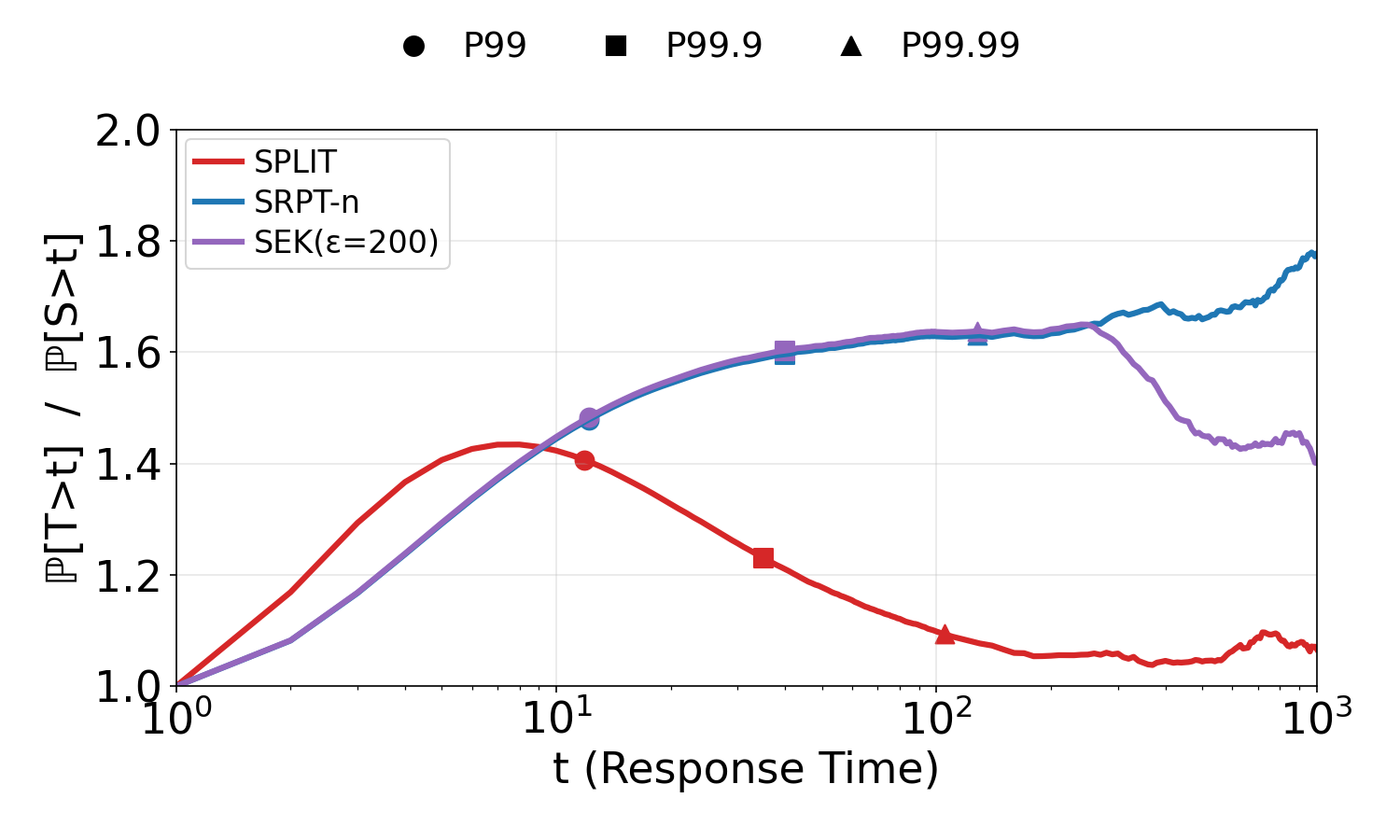}
  \caption{$\alpha=2.0$}
  \label{subfig:alpha2}
\end{subfigure}
\caption{Normalized response time tail for varying Pareto tail index, $n=3$ servers, $\rho=0.5$.}
\label{fig:sim:alpha}
\end{figure}

The qualitative observations of Section~\ref{sec:simulation:lowload} are preserved across both values of $\alpha$: \SPLIT's normalized tail converges to $1$, the baselines remain bounded away from $1$. In addition, at $\alpha=2.0$, \SPLIT attains better $99\%$, $99.9\%$, and $99.99\%$ response times than SRPT-$n$. We do not have a clean explanation for this finite-percentile gain and report it without one.

\subsection{Performance of \TAGsplit{d}}
\label{sec:simulation:tags}

Figure~\ref{fig:sim:tags} evaluates \TAGsplit{d} in the same two settings as Sections~\ref{sec:simulation:lowload} and \ref{sec:simulation:highload}. The results match Theorem~\ref{thm: TAGsplit main} in both regimes.

\begin{figure}[h]
\centering
\begin{subfigure}[b]{0.45\textwidth}
  \includegraphics[width=\textwidth]{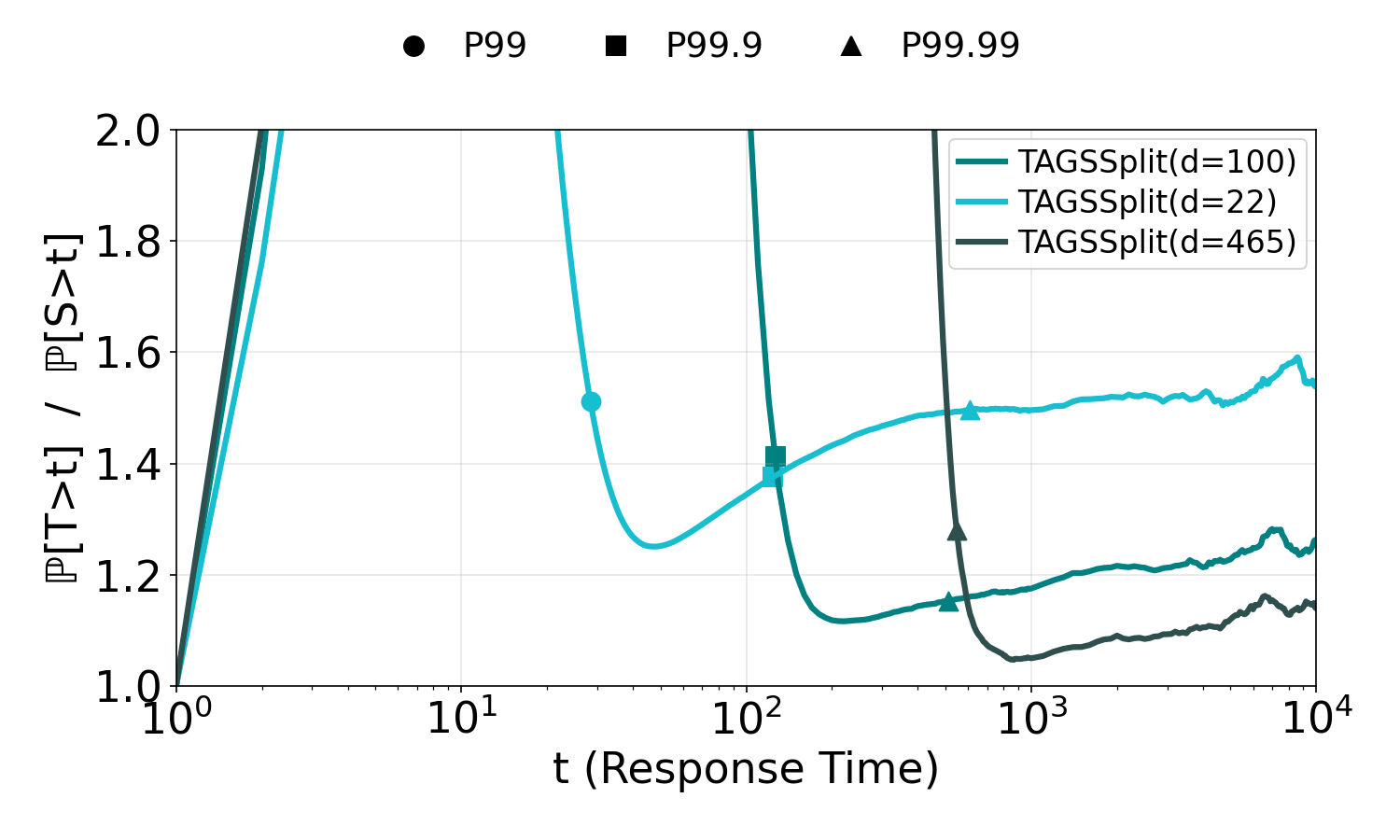}
  \caption{Low load: $n=3$, $\rho=0.5$}
  \label{subfig:tags-lowload}
\end{subfigure}\hfill
\begin{subfigure}[b]{0.45\textwidth}
  \includegraphics[width=\textwidth]{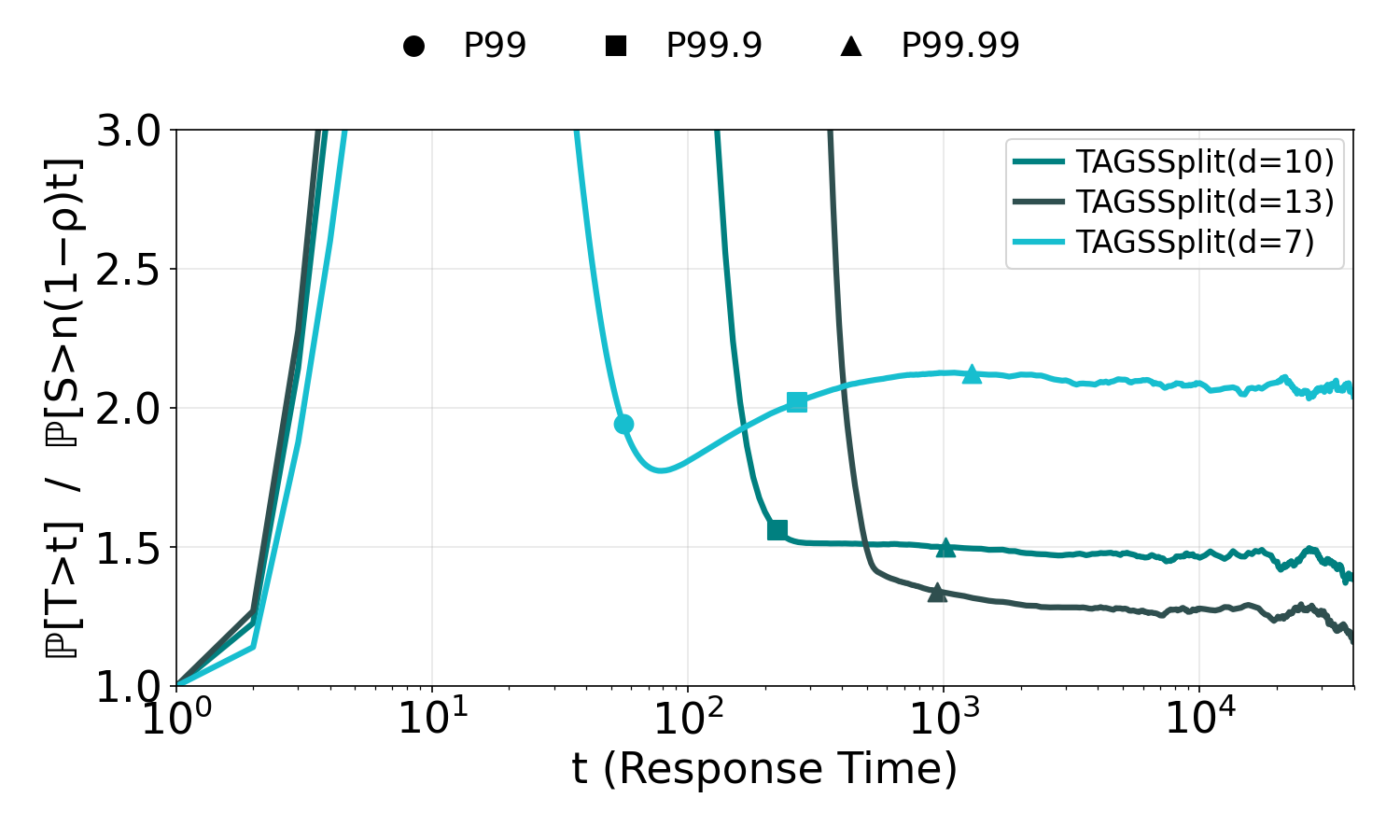}
  \caption{High load: $n=3$, $\rho=0.8$}
  \label{subfig:tags-highload}
\end{subfigure}
\caption{Normalized response time tail of \TAGsplit{d}, Pareto job sizes with $\alpha=1.5$.}
\label{fig:sim:tags}
\end{figure}

Figure~\ref{subfig:tags-lowload} uses the same three thresholds as \threshsplit{d} in Figure~\ref{subfig:lowload-thresh}: the $99\%$, $99.9\%$, and $99.99\%$ percentiles of the job size distribution. As $d$ grows, the normalized tail of \TAGsplit{d} approaches $1$, consistent with Theorem~\ref{thm: TAGsplit main}. Figure~\ref{subfig:tags-highload} shows the high-load setting, with thresholds chosen so that the big-job-server load $\rho^{TAGS}_{large}(d)$ is $0.45$, $0.5$, and $0.6$ (recall the definition in \eqref{eq:rho_TAG_large}). The normalized tail again approaches $1$ as $\rho^{TAGS}_{large}(d)$ decreases toward the critical value $\rho n - (n-1) = 0.4$, again matching Theorem~\ref{thm: TAGsplit main}.

%% file: 8-discussion.tex
\section{Discussion}

The policies we propose in this paper, \SPLIT, \threshsplit{d}, and \TAGsplit{d}, are all guided by two design intuitions. First, because it is the large jobs that dominate the response time tail, we dedicate separate service capacity to large jobs to reduce their response time. Second, small jobs should not be blocked by large jobs; in particular, a small job's response time should not be inflated by one or several catastrophically large jobs present in the system. In this section, we reexamine these two intuitions from two perspectives: policy design (Section~\ref{sec:discussion:design}) and effect of packing (Section~\ref{sec:discussion:packing}).

\subsection{Policy design}
\label{sec:discussion:design}

\SPLIT and \threshsplit{d} share a common structure: The $n$ servers are split into two groups, one server dedicated to large jobs and the remaining $n-1$ servers dedicated to small jobs. Despite the shared structure, \SPLIT and \threshsplit{d} schedule jobs within each group very differently: \SPLIT uses non-preemptive LJF on the large-job server and SRPT-$(n-1)$ on the small-job servers, while \threshsplit{d} uses SRPT-1 on the large-job server and FCFS on the small-job servers. We now explain the reason behind each of these choices, and distinguish the essential parts from those chosen for ease of analysis.

\paragraph{Scheduling the small jobs}
Under \SPLIT, the small-job servers must use a size-aware policy such as SRPT-$(n-1)$; FCFS would not suffice. The reason is that the LJF server handles only the single largest job in the system at a time, so whenever several big jobs are present, all but the largest remain at the small-job servers. SRPT-$(n-1)$ automatically deprioritizes these extra big jobs by their large remaining size, so small jobs continue to be served. FCFS, by contrast, would let those big jobs sit at the head of the small-job queue and potentially block subsequent small job arrivals.

Under \threshsplit{d}, this issue does not arise: Every job with size $>d$ is routed to the large-job server, so the small-job servers never see a big job. Since the small jobs' sizes are bounded by $d$, the response time tail from the small-job servers should decay exponentially under any reasonable policy, such as FCFS or SRPT-$(n-1)$ (Lemma~\ref{lemma: light tail}); our choice of FCFS is only for ease of analysis.

\paragraph{Scheduling the large jobs}
The same design intuition, that a job should not be blocked by catastrophically larger jobs, now applies one level up: A merely large job should not be blocked by a catastrophically large job either. Under \SPLIT, blocking between large jobs never occurs, because the LJF server holds only one job at a time and any merely large jobs remain at the small-job servers, where SRPT-$(n-1)$ can eventually serve them. This is why \SPLIT can afford LJF on the large-job server; the non-preemptive part is adopted only for ease of analysis.

Under \threshsplit{d}, by contrast, every job of size $>d$ is routed to a single large-job server, so the scheduling policy there must now prevent  intra-large jobs blocking directly. If \threshsplit{d} used LJF on the large-job server, every merely large job would be blocked by any catastrophically large job present, inflating its response time. Scheduling the large jobs according to SRPT avoids this by serving the smallest remaining size first. More generally, any single-server policy that is strongly tail optimal under heavy tails, such as PS, FB, or SMART, would work equally well on the large-job server.

\subsection{Tail benefit of \SPLIT from improved packing}
\label{sec:discussion:packing}

It is natural to ask why the \SPLIT algorithm should not also be applied within a single-server system: One could give large jobs some partial priority in a single-server queue, without delaying small jobs too much. In this subsection, we show that the tail benefit of \SPLIT (compared to SRPT-$n$) intrinsically arises from its ability to pack work across servers better. Since packing is inherently a multi-server phenomenon, no analogous benefit can be obtained in a single-server system. We develop this view in the two-server case ($n=2$) with $\rho < 1/2$ for clarity of exposition; the same intuition extends to $\rho \geq 1/2$ via \threshsplit{d}, as we discuss at the end.

\paragraph{The catastrophe principle}
Under heavy-tailed job sizes, extreme response times are driven by the \emph{catastrophe principle}: A very large response time is typically caused by a single very large job in the system, rather than by a conspiracy of many moderately large jobs \cite{nair2022fundamentals}. Understanding how a policy shapes the asymptotic response time tail therefore reduces to understanding how the policy handles a single catastrophically large job.
Concretely, consider the typical scenario that produces a large response time: The system contains one very large job of size $S^*$; all other jobs currently in the system are much smaller, as are the jobs arriving during the lifetime of this large job. A good policy must simultaneously (i) keep small jobs' response times small, so that they do not themselves contribute to the tail, and (ii) make the large job's response time, $T^*$, as small as possible.

\paragraph{Packing and the response time of the large job}
We quantify how well a policy \emph{packs} by $P_{\text{idle}}$, the fraction of the large job's service time during which the other server is idle. The total idle server-time over the large job's lifetime is therefore $P_{\text{idle}}\cdot S^*$. Over an interval of length $T^*$ after the arrival of the large job, the total work arriving to the system is approximately $n\rho\,T^*$, all of which must complete by time $T^*$ under goal~(i). Combined with the $S^*$ units of work from the large job itself and the idle server-time $P_{\text{idle}}\cdot S^*$, work conservation gives
\[
n\,T^* \;\approx\; S^* \;+\; P_{\text{idle}}\cdot S^* \;+\; n\rho\,T^*,
\]
which yields
\[
T^* \;\approx\; \frac{(1+P_{\text{idle}})\,S^*}{n(1-\rho)}.
\]
Hence, to reduce the asymptotic tail, one wants $P_{\text{idle}}$ as small as possible.

\paragraph{Comparing SRPT-$n$ and \SPLIT}
Under SRPT-$n$, the small jobs behave approximately as fluid: as long as at least two small jobs are present, which is almost always the case, the large job is preempted. The large job therefore runs essentially only when it is alone in the system, giving $P_{\text{idle}} \approx 1$, the worst possible packing.

Under \SPLIT, by contrast, the large job is routed to the LJF server and runs there continuously throughout its lifetime. During this time, the other server handles the stream of small jobs as a single-server queue with load $2\rho$ and is therefore busy $2\rho$ fraction of the time. Hence $P_{\text{idle}} \approx 1-2\rho$, the best packing any work-conserving policy can achieve, which is why \SPLIT attains strong tail optimality.

\paragraph{Extension to $\rho \geq 1/2$}
The same formula for $T^*$ continues to apply when $\rho \geq 1/2$, and in this regime the goal is to drive $P_{\text{idle}}$ close to $0$. This is precisely what \threshsplit{d} accomplishes by deliberately spreading the load across the two servers: by choosing $d$ such that the load on the big-job server is $r_{>d}\approx 2\rho-1$, the small-job server is left with load $\approx 1$, keeping it busy almost all of the time during the lifetime of the catastrophically large job. Hence $P_{\text{idle}}\approx 0$, and \threshsplit{d} recovers the same packing-based tail benefit that \SPLIT achieves when $\rho<1/2$.

%% file: 9-conclusion.tex
\section{Conclusion}
\label{sec:conclusion}

This paper develops the first strongly tail-optimal scheduling policies for the M/G/$n$ queue with heavy-tailed job sizes. Our central finding is that tail optimality in multi-server systems calls for a qualitatively different principle than in single-server systems: Instead of strictly prioritizing short jobs, one should give large jobs a small amount of ``sympathy'' by dedicating a server to them. This sympathy improves the asymptotic tail by enabling the system to pack work across servers more efficiently, a benefit that has no single-server analog (see Section~\ref{sec:discussion:packing}). Our three policies \SPLIT, \threshsplit{d}, and \TAGsplit{d} realize this principle across the full stability region, with and without knowledge of job sizes.

Several natural directions remain open. First, tail-optimal scheduling for multi-server systems with {\em light-tailed} job sizes is only partially understood: Boost~\cite{yu2025tale} is strongly tail optimal in heavy traffic, but the general regime is open. Second, both the mean and the tail of response time are important in practice. Notably, this paper and the recent work of~\cite{grosof2025outperforming}, which improves the mean response time of SRPT-$n$, address the \emph{same} underlying phenomenon: SRPT-$n$ packs work poorly in multi-server systems. Giving large jobs some priority remedies this packing problem, delivering better tail in our case and better mean in~\cite{grosof2025outperforming}. It is a particularly compelling open question to ask whether these recent works on better packing policies in multi-server systems can be combined together, in order to design a single policy that has good performance in both mean and tail.

%% file: Appendix.tex

\section{SRPT-$n$ is weakly tail optimal}
\label{sec:SRPT-n weak tail optimal new}

In this section, we prove that SRPT-$n$ is weakly tail optimal. The proof of the main lemma (Lemma \ref{lemma: SRPT-n response time bounded new}) is very similar to the proof in Section \ref{sec:SPLIT proof small jobs}. The proof of the main theorem (Theorem \ref{theorem: SRPT-n is weakly tail optimal new}) follows a similar structure in Section \ref{sec:split:main}. We state the whole proof for completeness.

Our main goal is to prove Theorem \ref{theorem: SRPT-n is weakly tail optimal new} below:
\begin{theorem}
    \label{theorem: SRPT-n is weakly tail optimal new}
    SRPT-$n$ is weakly tail optimal.
\end{theorem}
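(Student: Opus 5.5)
The plan is to show that $\limsup_{t\to\infty}\P{T^{\text{SRPT-}n}>t}/\P{S>t}$ is finite. This is enough for weak tail optimality. By Lower bound~1 \eqref{eq:lower bound super server}, every policy $\pi'$ satisfies $\P{T^{\pi'}>t}\gtrsim \P{S>n(1-\rho)t}$. Regular variation \eqref{eq:regular varying definition} then gives $\P{S>t}/\P{S>n(1-\rho)t}\to (n(1-\rho))^{\alpha}<\infty$, so a finite bound $c'$ on the first ratio yields $\sup_{\pi'}\limsup_t \P{T^{\text{SRPT-}n}>t}/\P{T^{\pi'}>t}\le c'(n(1-\rho))^{\alpha}$. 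I will therefore prove $\P{T^{\text{SRPT-}n}>t}=O(\P{S>t/C})=O(\P{S>t})$ for a suitable constant $C$.

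The main lemma, which I will call Lemma~\ref{lemma: SRPT-n response time bounded new}, is the SRPT-$n$ analogue of Lemma~\ref{lemma: small jobs response time bounded}: for any $\beta>0$ there exists $C>0$ with $\P{T_x^{\text{SRPT-}n}>Cx}=o(x^{-\beta})$, uniformly in the sense used in the proof of Lemma~\ref{lemma: small jobs response time bounded}. I will follow the same chain of reductions with $n$ in place of $n-1$.

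First, I adapt the potential-function argument of Lemma~\ref{lemma: response time bounded by work}. Whenever the tagged job is not in service, all $n$ servers are working on jobs with remaining size below $x$, so the potential decreases at rate at least $1$. This gives $\P{T_x>t}\le\P{x+\frac1n W_x+\frac1n W_{A_x}(t)>t}$. Second, the argument of Lemma~\ref{lemma: W_x SPLIT bounded} (Lemma 2.2 of \cite{grosof2019srpt}) bounds $W_x\le \Wsuper{x}+nx$, where $\Wsuper{x}$ now refers to a speed-$n$ SRPT super server. Third, I rescale time by $n$ to pass to SRPT-1: a speed-$n$ server on sizes $S$ corresponds to a speed-1 server on sizes $S/n$ at the same load $\rho<1$. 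Then Lemma~\ref{lemma: single server work bound} applies. The only point needing care is matching the relevant-size thresholds under the rescaling, which is done by enlarging $C$.

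With the lemma in hand, the theorem follows by the splitting in the proof of Theorem~\ref{thm: SPLIT tail optimal}, using only two cases. Fix $\beta>\alpha$ and take the corresponding $C$. For jobs with $x\le t/C$, monotonicity in $x$ of the bound from the chain above gives $\P{T_x>t}\le \P{T_{t/C}\text{-bound}>t}=o(t^{-\beta})=o(\P{S>t})$, because regularly varying tails decay no faster than $t^{-\alpha-\epsilon}$. For jobs with $x>t/C$, I bound the probability by $1$, which contributes $\P{S>t/C}\sim C^{\alpha}\P{S>t}$. Together these give $\limsup_t \P{T>t}/\P{S>t}\le C^\alpha$.

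The step I expect to be the main obstacle is adapting Lemma~\ref{lemma: single server work bound} to SRPT-$n$. In particular, the relevant-work and arrival terms must use the threshold $x$ consistently after rescaling, and the stochastic domination by the super-server system must hold at PASTA arrival epochs in steady state. The rest is bookkeeping identical to Section~\ref{sec:SPLIT proof small jobs}.
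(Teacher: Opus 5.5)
Your proposal is correct and follows the paper's proof essentially step for step. The chain is the same: the potential-function bound with $n$ in place of $n-1$, then Lemma~2.2 of Grosof et al.\ to pass to the speed-$n$ SRPT server, then rescaling to SRPT-1 and the single-server work bound to get $\P{T_x>Cx}=o(x^{-\beta})$, and finally a split at $x=t/C$ that bounds $\P{T>t}/\P{S>t}$ by a constant of order $C^\alpha$.

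The differences are cosmetic. You handle $x\le t/C$ via monotonicity in $x$ of the work bound (as in the SPLIT small-job lemma), whereas the paper appeals to monotonicity of SRPT-$n$ response time in job size. You also spell out the comparison with an arbitrary $\pi'$ via Lower bound~1, although Lower bound~2 ($\P{T^{\pi'}>t}\ge\P{S>t}$) already suffices directly.
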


Define the response time of a job with size $x$ in an $n$-server system under SRPT-$n$ to be $T_x^{SRPT\text{-}n}$.
The main lemma is stated below:

\begin{lemma}
    \label{lemma: SRPT-n response time bounded new}
    For an $n$-server system under SRPT-$n$ with Poisson arrivals and total load $\rho<1$, for any $\beta > 0$, there exists a constant $C>0$ such that
    \begin{equation}
        \P{T_x^{SRPT\text{-}n}> Cx}= o (x^{-\beta}), \quad x \to \infty.
    \end{equation}
\end{lemma}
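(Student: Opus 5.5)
The plan is to mirror the argument of Section~\ref{sec:SPLIT proof small jobs} with all $n$ servers running SRPT, and then to reduce to the single-server bound of Lemma~\ref{lemma: single server work bound}. Tag a job of size $x$ arriving at time $0$. Let $W_x^{SRPT\text{-}n}$ be the steady-state relevant work, meaning the work of jobs with remaining size smaller than $x$, which the tagged job sees by PASTA.

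\textbf{Step 1: potential-function bound.} This step is the analogue of Lemma~\ref{lemma: response time bounded by work}. Define
\[
G(s) = (\text{remaining size of tagged job}) + \tfrac{1}{n}(\text{remaining initial relevant work}) + \tfrac{1}{n}(\text{relevant work arrived in }(0,s]).
\]
Whenever the tagged job is not in service, all $n$ servers serve jobs with remaining size smaller than the tagged job's remaining size, which is at most $x$. So $G$ decreases at rate at least $1$, and it also does so when the tagged job is in service. The same argument as in Lemma~\ref{lemma: response time bounded by work} then gives
\[
\P{T_x^{SRPT\text{-}n} > t} \leq \P{x + \tfrac{1}{n} W_x^{SRPT\text{-}n} + \tfrac{1}{n} \WA{x}{t} > t}.
\]

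\textbf{Step 2: comparison with a speed-$n$ SRPT super server.} This step is the analogue of Lemma~\ref{lemma: W_x SPLIT bounded} and of Lemma 2.2 of \cite{grosof2019srpt}. Split time into few-jobs intervals, with at most $n$ relevant jobs, and many-jobs intervals. In a few-jobs interval, $W_x^{SRPT\text{-}n}(t)\le nx$. In a many-jobs interval, all $n$ servers work on relevant work, so the difference $W_x^{SRPT\text{-}n}(t)-\Wsuper{x}(t)$ is nonincreasing. Here $\Wsuper{x}$ is the relevant work of a speed-$n$ SRPT server fed the same arrivals. Together these give
\[
W_x^{SRPT\text{-}n}(t)\le \Wsuper{x}(t)+nx.
\]
Next, rescale time by $n$, or equivalently note that $\Wsuper{x}/n$ equals $\Wstar{x}$ for the SRPT-1 system with the same jobs. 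This yields
\[
\P{T_x^{SRPT\text{-}n}>t}\le \P{2x+\Wstar{x}+\WAstar{x}{t}>t}.
\]
The SRPT-1 system has load $\rho<1$, so Lemma~\ref{lemma: single server work bound} applies.

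\textbf{Step 3: conclusion.} Set $t=Cx$, with $C$ the constant from Lemma~\ref{lemma: single server work bound} for the given $\beta$. That lemma gives
\[
\P{2x+\Wstar{x}+\WAstar{x}{Cx}>Cx}=o(x^{-\beta}),
\]
which is exactly the claim.

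\textbf{Main obstacle.} The delicate point is the rescaling in Step 2. The identification $\Wstar{x}=\tfrac1n\Wsuper{x}$ and $\WAstar{x}{t}=\tfrac1n\WA{x}{t}$ must be made consistent with the time horizon $t$ in Lemma~\ref{lemma: single server work bound}. The intended meaning is that SRPT-1 is the original system with all job sizes divided by $n$, equivalently a speed-$n$ server, so the time horizon is unchanged. The loss of a factor $n$ in the size threshold, $x$ versus $x/n$, only makes the relevant work smaller, by monotonicity of relevant work in the threshold. I would check this monotonicity and the size scaling carefully, exactly as in the proof of Lemma~\ref{lemma: small jobs response time bounded}. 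The rest is routine once the potential argument is verified to hold with $n$ rather than $n-1$ servers.
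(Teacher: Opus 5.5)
Your proposal is correct and follows the paper's proof step for step. The paper uses the same potential-function bound (its Lemma~\ref{lemma: SRPT-n response time bounded by work}), the comparison $W_x^{SRPT\text{-}n}\le \Wsuper{x}+nx$, and the rescaled SRPT-1 bound at $t=Cx$; the only difference is that it cites the comparison as Lemma 2.2 of \cite{grosof2019srpt}, where you re-derive it via few-jobs/many-jobs intervals. Your remark that $\tfrac1n\Wsuper{x}$ is really the scaled system's relevant work at threshold $x/n$, which monotonicity in the threshold lets you replace by threshold $x$, covers a point the paper glosses over by simply defining $\Wstar{x}:=\tfrac1n\Wsuper{x}$.
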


The proof of Lemma~\ref{lemma: SRPT-n response time bounded new} proceeds in three steps, mirroring the proof structure in Section~\ref{sec:SPLIT proof small jobs}. First, we use a potential function argument to bound the response time of a job under SRPT-$n$ in terms of the relevant work it sees upon arrival and the new small-job work that arrives during its response time (Lemma~\ref{lemma: SRPT-n response time bounded by work}). Second, we relate the relevant work under SRPT-$n$ to the relevant work in a single-server system with server speed $n$ under SRPT-1, using a known result (Lemma 2.2 in \cite{grosof2019srpt}). Third, we invoke Lemma~\ref{lemma: single server SRPT-1 bound} to conclude.

We first define the following notation for work.
\begin{definition}[Relevant work under SRPT-$n$]
    \label{def:relevant work SRPT-n}
    Consider a tagged job with size $x$ in an $n$-server system under SRPT-$n$. Define the relevant work of the tagged job, $W_x^{SRPT\text{-}n}$, to be the total work comprising jobs with remaining size smaller than $x$.
\end{definition}

\begin{definition}[$\WA{x}{t}$]
    Define $\WA{x}{t}$ to be the total amount of work comprising of jobs arriving after time 0 and before time $t$ with size smaller than $x$.
\end{definition}

We can now bound the response time of a job by the above two terms of work.
\begin{lemma}
    \label{lemma: SRPT-n response time bounded by work}
    Without loss of generality, assume the tagged job arrives at time 0.
    For the tagged job with size $x$, we have that
    \begin{equation}
        \P{T_x^{SRPT\text{-}n} > t}\leq \P{x + \frac{1}{n} W_x^{SRPT\text{-}n} + \frac{1}{n}\WA{x}{t}>t}.
        \label{eq:response time bounded by work SRPT-n}
    \end{equation}
\end{lemma}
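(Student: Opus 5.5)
I will mirror the potential-function argument from Lemma~\ref{lemma: response time bounded by work}, now with $n$ servers in place of $n-1$ SRPT servers and no LJF server. Place the tagged job's arrival at time $0$. By PASTA, the tagged job sees the steady-state relevant work $W_x^{SRPT\text{-}n}$, meaning the total work of jobs with remaining size smaller than $x$. For $s\ge 0$, define
\[
G(s) = R(s) + \frac{1}{n}U(s) + \frac{1}{n}V(s).
\]
Here $R(s)$ is the remaining size of the tagged job. $U(s)$ is the part of the initial relevant work $W_x^{SRPT\text{-}n}$ still unfinished at time $s$. $V(s)$ is the unfinished part of the work that arrived in $(0,s]$ from jobs of size smaller than $x$. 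Then $G(0)=x+\frac{1}{n}W_x^{SRPT\text{-}n}$. The only upward jumps of $G$ are at arrivals of jobs smaller than $x$, and each such arrival raises $G$ by $\frac{1}{n}$ times its size. Summed over $(0,s]$, these jumps total $\frac{1}{n}\WA{x}{s}$.

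The key step is to show that $G$ decreases at rate at least $1$ whenever the tagged job is still in the system. Under SRPT-$n$, while the tagged job is present there are two cases.

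\textbf{Case (a): the tagged job is in service.} Then $R(s)$ decreases at rate $1$, and the other two terms are nonincreasing between arrivals.

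\textbf{Case (b): the tagged job is not in service.} Then all $n$ servers are busy with jobs whose remaining size is smaller than the tagged job's remaining size $R(s)\le x$. I must check that each of these $n$ jobs is counted in $U$ or $V$.
\begin{itemize}
\item A job present at time $0$ with remaining size at least $x$ never receives service while the tagged job waits. SRPT only serves jobs with smaller remaining size, and such a job's remaining size cannot drop below $R(s)\le x$ without being served.
\item An arrival with original size at least $x$ likewise never overtakes the tagged job.
\end{itemize}
So the $n$ busy servers work only on $U$ or $V$ work. Each server lowers $G$ at rate $\frac{1}{n}$, for a total rate of $1$.

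Ties in remaining size, and the boundary case of remaining size exactly $x$, need a small convention; I expect this bookkeeping to be the only delicate point. I would handle it by breaking ties in favor of the tagged job, or by working with remaining size $\le x$ throughout.

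Finally, suppose $T_x^{SRPT\text{-}n}>t$. Then $G(s)>0$ for all $s\in(0,t]$, so the rate bound holds on the whole interval. Integrating the decrease and adding the jumps gives
\[
0<G(t)\le x+\frac{1}{n}W_x^{SRPT\text{-}n}+\frac{1}{n}\WA{x}{t}-t.
\]
Hence the event $\{T_x^{SRPT\text{-}n}>t\}$ is contained in $\{x+\frac{1}{n}W_x^{SRPT\text{-}n}+\frac{1}{n}\WA{x}{t}>t\}$, and taking probabilities yields \eqref{eq:response time bounded by work SRPT-n}.
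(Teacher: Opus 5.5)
Your proposal is correct and takes essentially the same approach as the paper: the same potential function $G(s)$, built from the tagged job's remaining size plus $\frac{1}{n}$ times the residual initial relevant work and the newly arrived small-job work, which decreases at rate at least $1$ while the tagged job is present. You also make explicit two points the paper leaves implicit: that jobs with remaining size at least $x$ are never served ahead of the tagged job, and that the rate bound should be conditioned on the tagged job's presence rather than merely on $G(s)>0$.
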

\begin{proof}
    By PASTA (Poisson Arrivals See Time Averages), when the tagged job arrives, it sees $W_x^{SRPT\text{-}n}$ relevant work in the system. Under SRPT-$n$, the total work served before the tagged job is completed is upper bounded by: (1) the initial relevant work $W_x^{SRPT\text{-}n}$, (2) the $x$ work of the tagged job itself, and (3) the total work from jobs with size smaller than $x$ arriving before the tagged job is completed. The relevant work, together with newly arriving work comprising jobs with size smaller than $x$, must be completed with rate at least $n$, unless there are fewer than $n$ jobs with remaining size smaller than the tagged job and the tagged job is served. The response time of the tagged job can be upper bounded by the above argument.

    Specifically, define a potential function $G(s)$ at time $s$ to be the sum of the following three terms: (1) remaining size of the tagged job, (2) $\frac{1}{n}$ times the remaining work from the initial relevant work $W_x^{SRPT\text{-}n}$, and (3) $\frac{1}{n}$ times the remaining work from the newly arrived work $\WA{x}{s}$.

    Hence, we have that $G(0)= x + \frac{1}{n} W_x^{SRPT\text{-}n}$.
    The dynamics of $G(s)$ are given by:
    At arrivals, $G(s)$ increases by $\frac{1}{n}$ times the size of the arriving job if the arriving job is smaller than $x$. Otherwise, $G(s)$ decreases with rate at least 1 if $G(s) > 0$ (if the tagged job is not served, then all $n$ servers are serving jobs with size smaller than $x$, each reducing the potential function with rate $\frac{1}{n}$; otherwise if the tagged job is served, then the first term of the potential function already decreases with rate 1).

    Mathematically, we have that
    \begin{equation}
        G(0) = x + \frac{1}{n} W_x^{SRPT\text{-}n},\qquad \frac{dG(s)}{ds} \leq - 1 +  \frac{1}{n}\lambda \P{S\leq x} \E{S\mid S\leq x}\quad \text{if $G(s)>0$}.
        \label{eq:potential G(0) SRPT-n}
    \end{equation}

    Consider the event that the tagged job is not completed before time $t$ (which means its response time is more than $t$). In this case, the potential function $G(s)>0$ for any $s\in(0,t]$, which means the dynamics of $G(s)$ given by \eqref{eq:potential G(0) SRPT-n} always hold for any $s\in(0,t]$. Hence, we have that
    \begin{equation}
        0< G(t) \leq x + \frac{1}{n} W_x^{SRPT\text{-}n} + \frac{1}{n}\WA{x}{t} - t.
    \end{equation}

    Hence we have that
    \[\P{T_x^{SRPT\text{-}n} > t}\leq \P{x + \frac{1}{n} W_x^{SRPT\text{-}n} + \frac{1}{n}\WA{x}{t}>t}. \]
\end{proof}

Now we relate the relevant work under SRPT-$n$ to the relevant work in a single-server system under SRPT-1.

\begin{lemma}[Lemma 2.2 in \cite{grosof2019srpt}]
    \label{lemma: W_x SRPT-n bounded}
    Given an $n$-server system under SRPT-$n$ with Poisson arrivals and total load $\rho<1$. Consider a single-server system under SRPT-1 with the same arrival process but the server has speed $n$. Define $\Wsuper{x}$ to be the relevant work that a job with size $x$ sees in this single-server system. Then we have that
    \begin{equation}
        W_x^{SRPT\text{-}n} \leq \Wsuper{x} + nx.
    \end{equation}
\end{lemma}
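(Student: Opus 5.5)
The plan follows the proof of Lemma~\ref{lemma: W_x SPLIT bounded}, itself modeled on Lemma 2.2 of \cite{grosof2019srpt}. Fix a sample path of arrivals (times and sizes) and feed it to both the $n$-server SRPT-$n$ system and the speed-$n$ SRPT-1 system, starting from a common empty state. The argument is then pathwise: we show $W_x^{SRPT\text{-}n}(s) \leq \Wsuper{x}(s) + nx$ at every time $s$. By PASTA, the tagged job sees the time-stationary relevant work in each system, so the pathwise inequality carries over to the steady-state quantities as a stochastic-dominance statement. That is the form in which it will be used, just as $W_x^{\SPLIT}\leq_{st}\Wsuper{x}+(n-1)x$ was used earlier.

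The key step is a split of time into two kinds of intervals:
\begin{itemize}
\item a \emph{few-jobs interval}, where SRPT-$n$ has at most $n-1$ jobs with remaining size $<x$;
\item a \emph{many-jobs interval}, where it has at least $n$ such jobs.
\end{itemize}
In a few-jobs interval the bound is immediate, since each relevant job contributes less than $x$, so $W_x^{SRPT\text{-}n}(s) < (n-1)x \leq nx$.

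In a many-jobs interval, SRPT-$n$ serves the $n$ jobs of smallest remaining size, and all of them have remaining size $<x$. So all $n$ servers work on relevant work, and $W_x^{SRPT\text{-}n}$ decreases at rate exactly $n$ between arrivals. The speed-$n$ SRPT-1 server decreases $\Wsuper{x}$ at rate at most $n$. An arrival of size $<x$ adds the same amount to both, and an arrival of size $\geq x$ adds nothing to either. Hence the difference $W_x^{SRPT\text{-}n}(s)-\Wsuper{x}(s)$ is non-increasing throughout a many-jobs interval.

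Every many-jobs interval is entered from a few-jobs state, or begins at time $0$ with both systems empty. At its start, then, the difference is at most $(n-1)x$ plus the size of the triggering arrival, and that arrival is counted identically in both systems. So the difference is at most $nx$ at entry, and it stays bounded by $nx$ throughout the interval.

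The subtle point, and the expected main obstacle, is that relevant work is defined by the threshold "remaining size $<x$," so jobs can cross into the relevant set as they are served. In SRPT-$n$ a job of original size $\geq x$ can be worked down below $x$ and suddenly add up to $x$ of relevant work. The same can happen in the super server. I would handle this as follows:
\begin{itemize}
\item Note that a large job receives service in SRPT-$n$ only when fewer than $n$ jobs of smaller remaining size exist, so during a many-jobs interval no job can cross into the relevant set in SRPT-$n$. The left side never gains such jumps there.
\item Crossings in the super server only increase $\Wsuper{x}$, which only helps the inequality.
\item Crossings in SRPT-$n$ can therefore only occur in few-jobs periods, where the crude bound $<nx$ (now allowing up to $n$ relevant jobs at the moment of transition) still holds.
\end{itemize}
With this bookkeeping, the initial-offset accounting at the start of each many-jobs interval gives the stated bound $W_x^{SRPT\text{-}n}\leq \Wsuper{x}+nx$.
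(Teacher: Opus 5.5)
Your proposal is correct and follows essentially the same route as the paper. The paper imports this statement from \cite{grosof2019srpt}, and proves its \SPLIT analog (Lemma~\ref{lemma: W_x SPLIT bounded}) with exactly this few-jobs/many-jobs sample-path argument, in which $W_x^{SRPT\text{-}n} - \Wsuper{x}$ is non-increasing on many-jobs intervals and the few-jobs bound covers everything else. Your explicit handling of jobs crossing the threshold $x$ is extra care that the paper's sketch leaves implicit: such crossings cannot occur in SRPT-$n$ during many-jobs intervals, and in the super server they only help.
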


Note that although it is assumed that the job size distribution has finite variance in \cite{grosof2019srpt}, the proof for this lemma does not require this assumption.

For simplicity, now consider a single-server system with server speed 1 but all job sizes (and consequently the work) are scaled by $\frac{1}{n}$. Define the scaled notation $\Wstar{x}:=\frac{1}{n}\Wsuper{x}$ and $\WAstar{x}{t}:=\frac{1}{n} \WA{x}{t}$. This allows us to use the known result for single-server systems under SRPT-1, see Lemma~\ref{lemma: single server SRPT-1 bound}.

We are now ready to prove Lemma~\ref{lemma: SRPT-n response time bounded new}.

\begin{proof}[Proof of Lemma~\ref{lemma: SRPT-n response time bounded new}]
    Let $C$ be the constant from Lemma~\ref{lemma: single server SRPT-1 bound} and consider a tagged job with size $x$. By Lemma~\ref{lemma: SRPT-n response time bounded by work} with $t = Cx$,
    \begin{equation}
        \P{T_x^{SRPT\text{-}n} > Cx} \leq \P{x + \frac{1}{n} W_x^{SRPT\text{-}n} + \frac{1}{n}\WA{x}{Cx} > Cx}.
    \end{equation}
    By Lemma~\ref{lemma: W_x SRPT-n bounded}, $W_x^{SRPT\text{-}n} \leq \Wsuper{x} + nx$. Substituting and using $\Wstar{x} = \frac{1}{n}\Wsuper{x}$ and $\WAstar{x}{Cx} = \frac{1}{n}\WA{x}{Cx}$, we obtain
    \begin{align*}
        &\P{x + \frac{1}{n} W_x^{SRPT\text{-}n} + \frac{1}{n}\WA{x}{Cx} > Cx} \\
        &\leq \P{2x + \Wstar{x} + \WAstar{x}{Cx} > Cx} \\
        &= o(x^{-\beta}). && \text{Lemma~\ref{lemma: single server SRPT-1 bound}}
    \end{align*}
\end{proof}

We can now prove Theorem~\ref{theorem: SRPT-n is weakly tail optimal new}.

\begin{proof}[Proof of Theorem~\ref{theorem: SRPT-n is weakly tail optimal new}]

Pick $\beta>\alpha$ and let $C$ be the constant from Lemma~\ref{lemma: SRPT-n response time bounded new}. Then we have that
\begin{align*}
    &\lim_{t\to \infty} \frac{\P{T^{SRPT\text{-}n}>t}}{\P{S>t}} \\
    &= \lim_{t\to \infty} \frac{1}{\P{S>t}} \int_{0}^\infty \P{T_x^{SRPT\text{-}n}>t} \cdot f_S(x) dx \\
    &= \lim_{t\to \infty} \frac{1}{\P{S>t}} \left(\P{S>t} + \int_{0}^{t/C} \P{T_x^{SRPT\text{-}n}>t} \cdot f_S(x) dx +\int_{t/C}^t \P{T_x^{SRPT\text{-}n}>t} f_S(x) dx \right)\\
    &\leq 1 + \lim_{t\to \infty} \frac{1}{\P{S>t}}\left( \int_{0}^{t/C} \P{T_{t/C}^{SRPT\text{-}n}> t} \cdot f_S(x) dx +\int_{t/C}^t 1\cdot f_S(x) dx  \right)\\
    &\leq 1 + \lim_{t\to\infty} \frac{\P{T_{t/C}^{SRPT\text{-}n}> t}}{\P{S>t}} \cdot \P{S\leq t/C} + \lim_{t\to\infty} \frac{1}{\P{S>t}} \cdot \P{S>t/C}\\
    &\leq 1 + \lim_{t\to\infty} \frac{\P{T_{t/C}^{SRPT\text{-}n}> t}}{\P{S> t/C}} \cdot \frac{\P{S> t/C}}{\P{S>t}} + \lim_{t\to\infty} \frac{1}{\P{S>t}} \cdot \P{S>t/C}\\
    &= 1 + 0 + C^\alpha.
\end{align*}
Here the last equation uses Lemma~\ref{lemma: SRPT-n response time bounded new}.
\end{proof}

\section{Results for single-server systems under SRPT-1}

The goal of this appendix is to prove the following lemma for single-server systems under SRPT-1 policy:

\begin{lemma}
    \label{lemma: single server SRPT-1 bound}
    For a single-server system under SRPT-1 policy with load $\rho<1$, we have that for any $\beta>0$, there exists a constant $C>0$ such that
    \begin{equation}
        \P{2x + \Wstar{x} + \WAstar{x}{Cx} > Cx} = o(x^{-\beta}).
    \end{equation}
\end{lemma}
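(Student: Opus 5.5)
The plan is to split the event with a union bound into a stationary-workload part and an arrival part, and to control each by a large-deviation estimate for compound Poisson sums whose jumps are truncated at level $x$. This mirrors the argument behind the tagged-job bound $\P{T_x^{SRPT-1}>Cx}=o(x^{-\beta})$ of \cite{nuyens2008preventing}. Fix $\beta>0$ and a small $\epsilon>0$ with $\rho+2\epsilon<1$, and take $C\ge 2/\epsilon$ large. Then
\[
\P{2x+\Wstar{x}+\WAstar{x}{Cx}>Cx}\le \P{\Wstar{x}>\epsilon Cx}+\P{\WAstar{x}{Cx}>(1-2\epsilon)Cx},
\]
and it suffices to show each term is $o(x^{-\beta})$ once $C$ is large enough (the threshold on $C$ depending on $\beta$).

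\textbf{Arrival term.} $\WAstar{x}{Cx}$ is a compound Poisson sum over horizon $Cx$ with jumps $S\mathbf{1}\{S\le x\}\le x$ and mean at most $\rho Cx$. We therefore need a deviation of size $\delta Cx$ with $\delta=1-2\epsilon-\rho>0$. Fix $\eta\in(0,1)$ and split the arriving jobs into \emph{medium} ones (sizes in $(\eta x,x]$) and \emph{small} ones (sizes $\le\eta x$).
\begin{itemize}
\item Medium jobs: the number arriving in $[0,Cx]$ is Poisson with mean $\lambda Cx\,\P{S>\eta x}=O(x^{1-\alpha+o(1)})\to0$. Hence the probability of at least $m$ of them is $O(x^{-m(\alpha-1)+o(1)})$, which is $o(x^{-\beta})$ for $m>\beta/(\alpha-1)$. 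On the complementary event they contribute at most $mx$ work, and this is at most $\delta Cx/2$ once $C\ge 2m/\delta$.
\item Small jobs: we must show $\P{\text{small work}>(\rho+\delta/2)Cx}=o(x^{-\beta})$. Apply the Chernoff bound with $\theta=\kappa\log x/x$. The excess term $\lambda Cx\int_0^{\eta x}(e^{\theta s}-1-\theta s)\,dF(s)$ is controlled using $e^{\theta s}\le x^{\kappa\eta}$ together with regular variation. Choosing $\eta$ small relative to $\kappa$ keeps it $o(\log x)$, while the linear gain $\theta\cdot\delta Cx/2=\kappa\delta C\log x/2$ beats $\beta\log x$ for $C$ large.
\end{itemize}
This is exactly the Nagaev-type calculation in \cite{nuyens2008preventing}, and I would import it.

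\textbf{Workload term.} Under SRPT-1 at most one job of original size $>x$ can have remaining size in $(0,x]$ at any time, since such a job must have been in service when it crossed below $x$ and thereafter has priority over larger ones. Thus $\Wstar{x}\le x+V_x$, where $V_x$ is the stationary workload of the M/G/1 queue fed only by jobs of size $\le x$; this uses the fact that SRPT restricted to these jobs is work-conserving on them. By Lindley's representation, $V_x=\sup_{s\ge0}(A_x(s)-s)$, where $A_x(s)$ is the truncated arrival work in $[0,s]$ (time-reversed). Fix a $\theta$ satisfying the small-job Chernoff bound, now applied to all jumps $\le x$, or handled again by the medium/small split. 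The plan is then to discretize $s$ into windows of length $x$, apply a union bound over windows, and sum the exponential bounds. This gives $\P{V_x>\epsilon Cx/2}=o(x^{-\beta})$, because the drift $-(1-\rho)s$ makes the contributions of distant windows summable.

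The main obstacle is the truncated-jump deviation estimate, where one has to balance $\theta$ against the truncation level so that the heavy but bounded jumps do not destroy the Chernoff bound. Carrying the medium/small split uniformly over the unbounded horizon in the workload term needs care: on each window of length $x$ the number of medium jobs is again rare, but the union bound over infinitely many windows must be summed against the negative drift. If this becomes messy, the fallback is to cite the workload bound $\P{V_x>Kx}=o(x^{-\beta})$ established in the proof of \cite{nuyens2008preventing} directly.
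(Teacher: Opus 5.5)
\paragraph{The gap: the reduction $\Wstar{x}\le x+V_x$.}
The one step that fails as you argue it is the reduction $\Wstar{x}\le x+V_x$. Your justification is that SRPT-1, restricted to jobs of original size $\le x$, is work-conserving on them. That is false: a job of original size $>x$ whose remaining size has dropped below theirs preempts them.

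For example, suppose a job of original size $2x$ has been served down to remaining size $x/10$ and is alone, and a job of size $x/2$ arrives. SRPT serves the old job for $x/10$ time units. During that time the work of original-size-$\le x$ jobs stays at $x/2$, while $V_x$, which received the same $x/2$, drains. So that work is not bounded by $V_x$; under the natural coupling it is in fact at least $V_x$. Your chain $\Wstar{x}\le x+(\text{work of original-size-}{\le}x\text{ jobs})\le x+V_x$ therefore breaks at the second inequality.

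The bound $\Wstar{x}\le x+V_x$ is still true, by the busy-period argument the paper uses:
\begin{itemize}
\item On any interval where $\Wstar{x}(t)>0$, SRPT serves only jobs of remaining size $\le x$, so no job crosses below $x$ mid-interval.
\item On such an interval $\Wstar{x}$ drains at rate $1$ and jumps only at arrivals of size $\le x$, while $V_x$ receives the same jumps and drains at rate at most $1$.
\item Hence $\Wstar{x}-V_x$ is non-increasing on the interval, and it is at most $x$ at the interval's start.
\end{itemize}
Your ``at most one crossing job'' observation can also be salvaged. That job can hold up the original-size-$\le x$ work for at most $x$ time per busy period of that work, which gives $\Wstar{x}\le V_x+2x$; this also suffices.

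\paragraph{Comparison with the paper's route.}
With that repaired, your argument is correct. It differs from the paper's mainly in how the tail estimates are obtained.

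The paper compares $\Wstar{x}$ with the workload of the system whose sizes are truncated to $\min(S_i,x)$, and writes that workload via Loynes as $U^1_x=\sup_t[W_A'(t)-t]$. It then observes that both $U^1_x$ and $W_A'(Cx)-(1-\delta)Cx$ are dominated by the single supremum $U^{1-\delta}_x$. The whole event thus reduces to $\P{U^{1-\delta}_x>\delta Cx/4}$, which is handled by citing the Resnick--Samorodnitsky random-walk lemma (Lemma~16 of \cite{nuyens2008preventing}). Your stated fallback is essentially this.

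You instead treat the arrival term and the workload term separately and prove the large-deviation bounds by hand with $\theta=\kappa\log x/x$. This costs length, but it buys three things:
\begin{itemize}
\item it is self-contained;
\item it makes the dependence of $C$ on $\beta$ explicit;
\item it never has to fit the $x$-dependent increments $\min(S_i,x)-(1-\delta)A_i$ to a lemma stated for a fixed sequence.
\end{itemize}
The paper's version is much shorter.

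Two simplifications are available to you.
\begin{itemize}
\item \emph{Borrow the paper's slack trick.} For fixed $\gamma\in(\rho,1-2\epsilon)$ we have $\WAstar{x}{Cx}-\gamma Cx\le\sup_{s\ge0}[A_x(s)-\gamma s]$. So the arrival term is handled by the same window argument as $V_x$, with drift $\gamma-\rho$ in place of $1-\rho$.
\item \emph{Drop the medium/small split.} A single Chernoff bound with all jumps $\le x$ has excess term of order $x^{\kappa-\min\{\alpha-1,1\}+o(1)}$ per window of length $x$. This vanishes when $\kappa<\min\{\alpha-1,1\}$. The per-window bounds then decay geometrically in the window index $k$, with ratio about $x^{-\kappa(1-\rho)}$. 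The small $\kappa$ is compensated by taking $C$ large. This also removes the uniformity concern you raised about summing over infinitely many windows.
\end{itemize}
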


The proof follows a conventional method for single-server systems under SRPT-1 (e.g., see the proof of Lemma 18 in \cite{nuyens2008preventing}). It uses the following lemma, a standard result for random walks (see, e.g., Lemma 16 in \cite{nuyens2008preventing}, which is due to \cite{resnick1999activity}).
\begin{lemma}[Lemma 16 in \cite{nuyens2008preventing}]
    \label{lemma:random walk tail}
    Let $X_i$ be a sequence of i.i.d. random variables with $\E{X_1} < 0$ and $\E{(X_1^+)^p} < \infty$ for some $p > 1$. Let $\mathcal{S}_n(x):=\sum_{i=1}^{n} (\min\{X_i,x\})$. Define $M(x):=\sup_{n} \mathcal{S}_n(x)$. Then, for any $0<\beta < \infty$, there exist constants $k > 0$ such that
    \[
        \P{ M(x) > k x} = o(x^{-\beta}).
    \]
\end{lemma}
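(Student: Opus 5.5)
The plan is a Chernoff/Lundberg exponential-martingale bound with a tilting parameter that shrinks like $\log x / x$. The point of this scaling is that $e^{-\theta k x}$ then becomes a power $x^{-ck}$, and the power can be made as large as we like by taking $k$ large.

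\emph{Step 0 (reductions).} Set $\mu:=-\E{X_1}>0$. Replacing $X_i$ by $\max\{X_i,-L\}$ can only increase $M(x)$. Choosing $L$ large enough keeps $\E{\max\{X_1,-L\}}\le -\mu/2$. So we may assume $X_1\ge -L$, which gives bounded negative parts. Write $Y_i:=\min\{X_i,x\}$. Then $Y_i\le x$ and $\E{Y_1}\le -\mu/2$.

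\emph{Step 1 (moment generating function bound).} Take $\theta=\theta(x)=c\log x/x$ for a fixed $c\in(0,p-1)$. The goal is $\E{e^{\theta Y_1}}\le 1$ for all large $x$. Start from $e^{u}=1+u+R(u)$ and bound the remainder in three regions.
\begin{itemize}
\item On $\{Y_1\le 0\}$: here $|u|\le\theta L$, so $R\le (\theta L)^2/2=O(\theta^2)$.
\item On $\{0<Y_1\le z\}$ with $z:=x/\log x$: here $\theta Y_1\le c$, so $R(\theta Y_1)\le e^{c}(\theta Y_1)^2/2$. Using $\E{(Y_1^+)^2\mathbf 1\{Y_1\le z\}}\le z^{2-p}\E{(X_1^+)^p}$, this contributes $O(\theta^2 z^{2-p})=O((\log x/x)^{p})=o(\theta)$, because $p>1$.
\item On $\{Y_1>z\}$: here $e^{\theta Y_1}\le e^{\theta x}=x^{c}$. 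By Markov, $\P{X_1>z}\le z^{-p}\E{(X_1^+)^p}$, so the contribution is $O(x^{c-p}\log^p x)$. This is $o(\log x/x)=o(\theta)$ precisely because $c<p-1$.
\end{itemize}
Together, $\E{e^{\theta Y_1}}\le 1-\theta\mu/2+o(\theta)<1$ for $x$ large.

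\emph{Step 2 (Lundberg bound and conclusion).} Since $\E{e^{\theta Y_1}}\le1$, the process $e^{\theta \mathcal S_n(x)}$ is a nonnegative supermartingale. Optional stopping at the first passage above $kx$ (or Doob's maximal inequality) then gives
\[\P{M(x)>kx}\le e^{-\theta kx}=x^{-ck}.\]
Choosing $k>\beta/c$ yields $\P{M(x)>kx}=o(x^{-\beta})$, as required.

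The main obstacle is Step 1, namely balancing the exponent $c$ against the moment index $p$. The tilt must be large enough that $e^{-\theta k x}$ is polynomially small, yet small enough that the rare large jumps in $(z,x]$, weighted by $e^{\theta x}=x^{c}$, do not overwhelm the linear drift term $-\theta\mu$. The intermediate cutoff $z=x/\log x$ together with the constraint $c<p-1$ is what makes both error terms $o(\theta)$. If that bookkeeping turns out too tight, the fallback is a ``big jumps'' decomposition: exceeding $kx$ with increments capped at $x$ requires either roughly $k/2$ jumps above $\delta x$ within an $O(x)$ time window, which has probability polynomially small of any order, or a large deviation of the walk with increments truncated at $\delta x$, which the same Chernoff argument handles with more room.
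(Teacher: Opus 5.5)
Your proof is correct. It differs from the paper only in that the paper gives no proof at all. The paper imports the statement verbatim as Lemma 16 of \cite{nuyens2008preventing}, which attributes it to \cite{resnick1999activity}, and uses it as a black box in the proof of Lemma~\ref{lemma: single server SRPT-1 bound}.

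Your exponential-supermartingale argument with tilt $\theta=c\log x/x$ gives a short, self-contained proof. It runs as follows:
\begin{enumerate}
\item Truncating the increments below at $-L$ can only increase $M(x)$.
\item The three-region remainder estimate then gives $\E{e^{\theta Y_1}}\le 1-\theta\mu/2+o(\theta)$.
\item The maximal inequality for nonnegative supermartingales yields $\P{M(x)>kx}\le x^{-ck}$.
\end{enumerate}
So any $k>\beta/c$ with $c<p-1$ works, i.e.\ essentially any $k>\beta/(p-1)$. This explicit dependence on $\beta$ and $p$ is something the citation does not provide.

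There are two bookkeeping points to tidy up.
\begin{itemize}
\item The inequality $\E{(Y_1^+)^2\mathbf{1}\{0<Y_1\le z\}}\le z^{2-p}\E{(X_1^+)^p}$ in your middle region holds only for $p\le 2$. For $p>2$, bound that expectation directly by $\E{(X_1^+)^2}<\infty$, so the contribution is $O(\theta^2)=o(\theta)$. Alternatively, replace $p$ by $\min\{p,2\}$ at the outset.
\item If $\E{X_1}=-\infty$, choose $L$ so that the truncated mean is at most some fixed $-\mu'<0$, rather than $-\mu/2$.
\end{itemize}

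With these adjustments the argument is complete, and the ``big jumps'' fallback you mention is not needed.
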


\begin{proof}[Proof of Lemma \ref{lemma: single server SRPT-1 bound}]
    Define $S_i$ to be the size of the $i$th arrival and $A_i$ to be the $i$th interarrival time. Now define an auxiliary sample path with the same interarrival times ($A_i':=A_i$) but $S_i':=\min(S_i,x)$. Define $W_A'(t)$ to be the total work from jobs in the auxiliary sample path which arrive after time 0 and before time $t$. Then we have that $W_A'(t)\geq \WAstar{x}{t}$. Define $\rho'$ to be the load of the auxiliary sample path, i.e., $\rho' := \frac{\E{S_i'}}{\E{A_i'}}<1$.

    Define $W_x'$ to be the steady-state work in the auxiliary sample path. Now we prove that $W_x' + x \geq \Wstar{x}$: In the original sample path, denote the consecutive period of time where $\Wstar{x}>0$ by ``$x$ busy period''. Then, during each $x$ busy period, $\Wstar{x}$ only increases when a job with size smaller than $x$ arrives, while $W_x'$ increases when every job arrives (if the job size is larger than $x$, $W_x'$ increases by $x$). Hence, we have that the difference $W_x'- \Wstar{x}$ never decreases during an $x$ busy period. Moreover, at the start of each $x$ busy period, $\Wstar{x}\leq x$. This gives the proof of the inequality $W_x' + x \geq \Wstar{x}$.

    Define the auxiliary quantity
    \begin{equation}
        U_x^c := \sup_{t > 0} [W_A'(t) - ct].
    \end{equation}
    By the Loynes construction \cite{loynes1962stability}, $U_x^1$ is the steady-state relevant work in the auxiliary sample path, i.e., $U_x^1 =_{st} W_x'$. Hence we have that $U_x^1 + x \geq_{st}  \Wstar{x}$.

    Pick $\delta < 1-\rho'$ and let $C \geq \frac{12}{\delta}$. Then we have that
    \begin{align*}
        &\P{2x + \Wstar{x} + \WAstar{x}{Cx} > Cx} \\
        &\leq \P{3x + U_x^1 + W_A'(Cx) > Cx} \\
        &= \P{U_x^1 + W_A'(Cx) - (1-\delta)Cx + 3x > \delta Cx} \\
        &\leq \P{U_x^1 > \frac{\delta Cx}{2}} + \P{W_A'(Cx) - (1-\delta)Cx + 3x > \frac{\delta Cx}{2}}.
    \end{align*}
    For the first term, note that $U_x^1 \leq U_x^{1-\delta}$, hence
    \[
        \P{U_x^1 > \frac{\delta Cx}{2}} \leq \P{U_x^{1-\delta} > \frac{\delta Cx}{2}}.
    \]
    For the second term, note that $W_A'(Cx) - (1-\delta)Cx \leq \sup_{t>0}[W_A'(t) - (1-\delta)t] = U_x^{1-\delta}$, hence we have that
    \[
        \P{W_A'(Cx) - (1-\delta)Cx + 3x > \frac{\delta Cx}{2}} \leq \P{U_x^{1-\delta} >\left(\frac{\delta C}{2}-3\right)x} \leq \P{U_x^{1-\delta} > \frac{\delta Cx}{4}}.
    \]
    Hence,
    \begin{equation}
        \P{2x + \Wstar{x} + \WAstar{x}{Cx} > Cx} \leq \P{U_x^{1-\delta} > \frac{\delta Cx}{2}} + \P{U_x^{1-\delta} > \frac{\delta Cx}{4}}\leq 2\P{U_x^{1-\delta} > \frac{\delta Cx}{4}}.
    \end{equation}

    It remains to show $\P{U_x^{1-\delta} > \frac{\delta Cx}{4}} = o(x^{-\beta})$.
    Because the supremum in $U_x^{1-\delta}$ is attained at arrival instants, we can write $U_x^{1-\delta} = \sup_n \sum_{i=1}^{n} [S_i' - (1-\delta)A_i']$. Because $1-\delta > \rho'$, the step sizes $X_i := S_i' - (1-\delta)A_i'$ have negative mean. Moreover, each $S_i'$ is bounded above by $x$, so $\min\{X_i, x\} = X_i$. Applying Lemma \ref{lemma:random walk tail} with $M(x) = U_x^{1-\delta}$, for any $\beta > 0$, there exists a constant $\bar{k} > 0$ such that
    \[
        \P{U_x^{1-\delta} > \bar{k} x} = o(x^{-\beta}).
    \]
    Setting $C = \max\{\frac{4\bar{k}}{\delta},\frac{12}{\delta}\}$ completes the proof.
\end{proof}

\section{Additional simulation: FCFS vs.\ SRPT-$(n-1)$ for small jobs in \threshsplit{d}}
\label{sec:appendix:fcfs-vs-srpt}

As discussed in Sections~\ref{sec:threshsplit} and \ref{sec:discussion:design}, the choice of scheduling policy on the $n-1$ small-job servers in \threshsplit{d} is not essential: because small jobs' sizes are bounded by $d$, any reasonable policy conceivably yields an  exponentially decaying response time tail for small jobs (Lemma~\ref{lemma: light tail}), so the asymptotic tail of \threshsplit{d} matches Theorem~\ref{thm: main} under either FCFS or SRPT-$(n-1)$. In Section~\ref{sec:simulation}, the low-load figures (Section~\ref{sec:simulation:lowload}) use FCFS on the small-job servers, while the high-load figures (Section~\ref{sec:simulation:highload}) use SRPT-$(n-1)$; the reason is that FCFS produces a visibly larger bump in the pre-asymptotic regime at high load. In this appendix, we show the two variants side by side on the three simulation settings of Section~\ref{sec:simulation} to validate both points empirically.

Across all three settings, the asymptotic behavior of the two variants coincides, matching our theoretical prediction. In the pre-asymptotic regime, the two choices give similar curves at low load (Figure~\ref{fig:app:fcfs-vs-srpt-low}), while FCFS exhibits a visibly larger bump at high load (Figures~\ref{fig:app:fcfs-vs-srpt-exp2}~and~\ref{fig:app:fcfs-vs-srpt-exp3}).

\begin{figure}[h]
\centering
\begin{subfigure}[b]{0.45\textwidth}
  \includegraphics[width=\textwidth]{figs/simulations/exp1-thresh.png}
  \caption{FCFS for small jobs}
  \label{subfig:app-low-fcfs}
\end{subfigure}\hfill
\begin{subfigure}[b]{0.45\textwidth}
  \includegraphics[width=\textwidth]{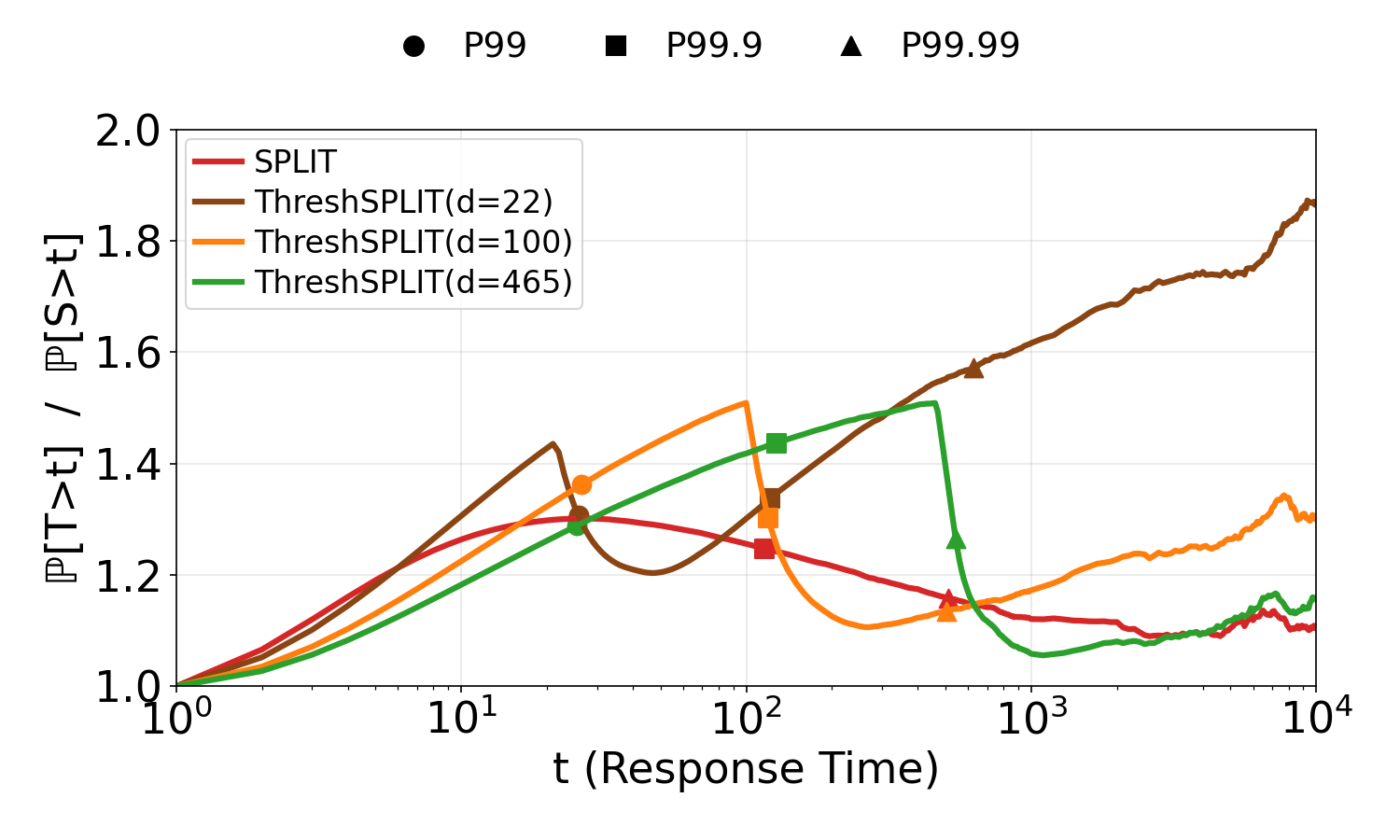}
  \caption{SRPT-$(n-1)$ for small jobs}
  \label{subfig:app-low-srpt}
\end{subfigure}
\caption{\threshsplit{d}, low-load setting ($n=3$, $\rho=0.5$, $\alpha=1.5$).}
\label{fig:app:fcfs-vs-srpt-low}
\end{figure}

\begin{figure}[h]
\centering
\begin{subfigure}[b]{0.45\textwidth}
  \includegraphics[width=\textwidth]{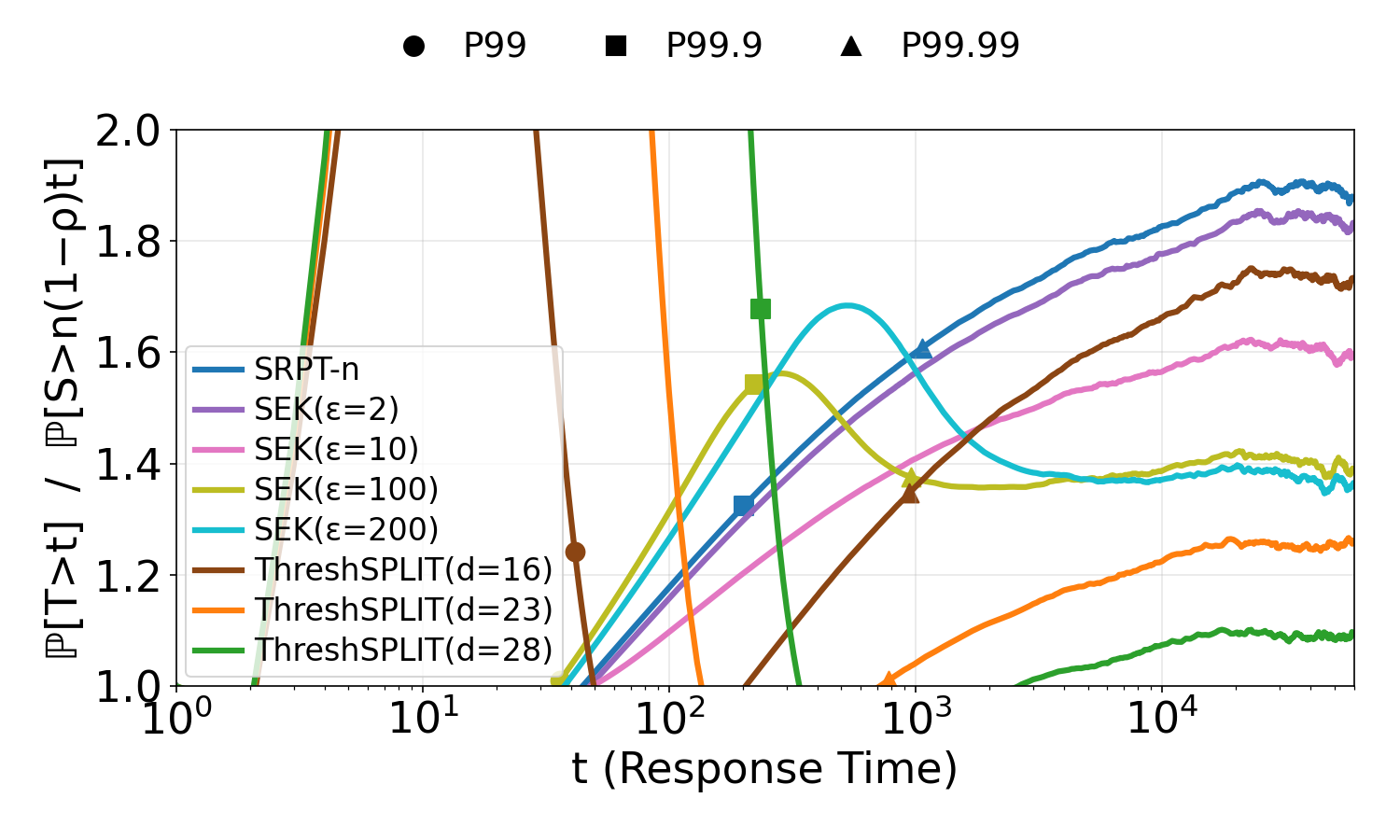}
  \caption{FCFS for small jobs}
  \label{subfig:app-exp2-fcfs}
\end{subfigure}\hfill
\begin{subfigure}[b]{0.45\textwidth}
  \includegraphics[width=\textwidth]{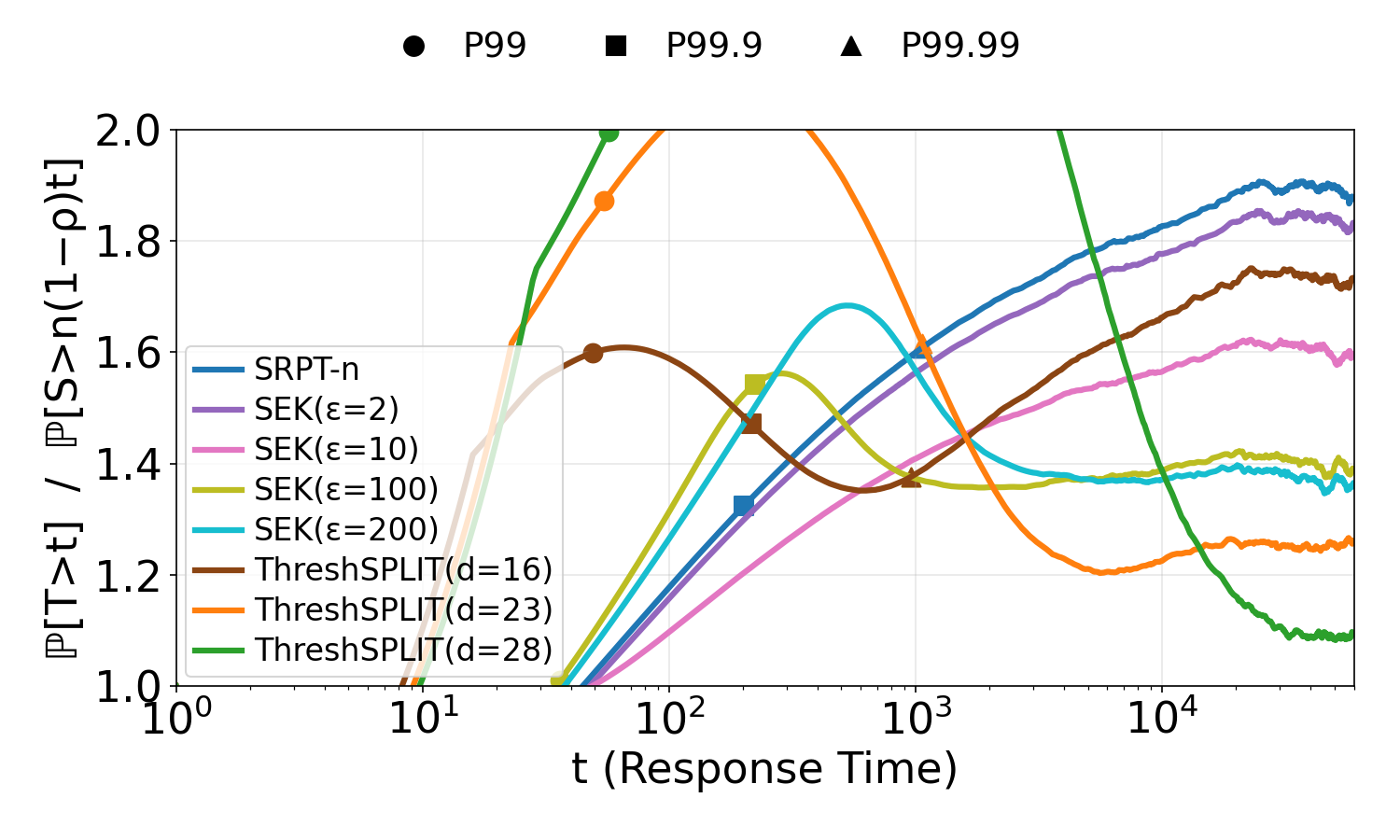}
  \caption{SRPT-$(n-1)$ for small jobs}
  \label{subfig:app-exp2-srpt}
\end{subfigure}
\caption{\threshsplit{d}, high-load setting ($n=3$, $\rho=0.8$, $\alpha=1.5$).}
\label{fig:app:fcfs-vs-srpt-exp2}
\end{figure}

\begin{figure}[h]
\centering
\begin{subfigure}[b]{0.45\textwidth}
  \includegraphics[width=\textwidth]{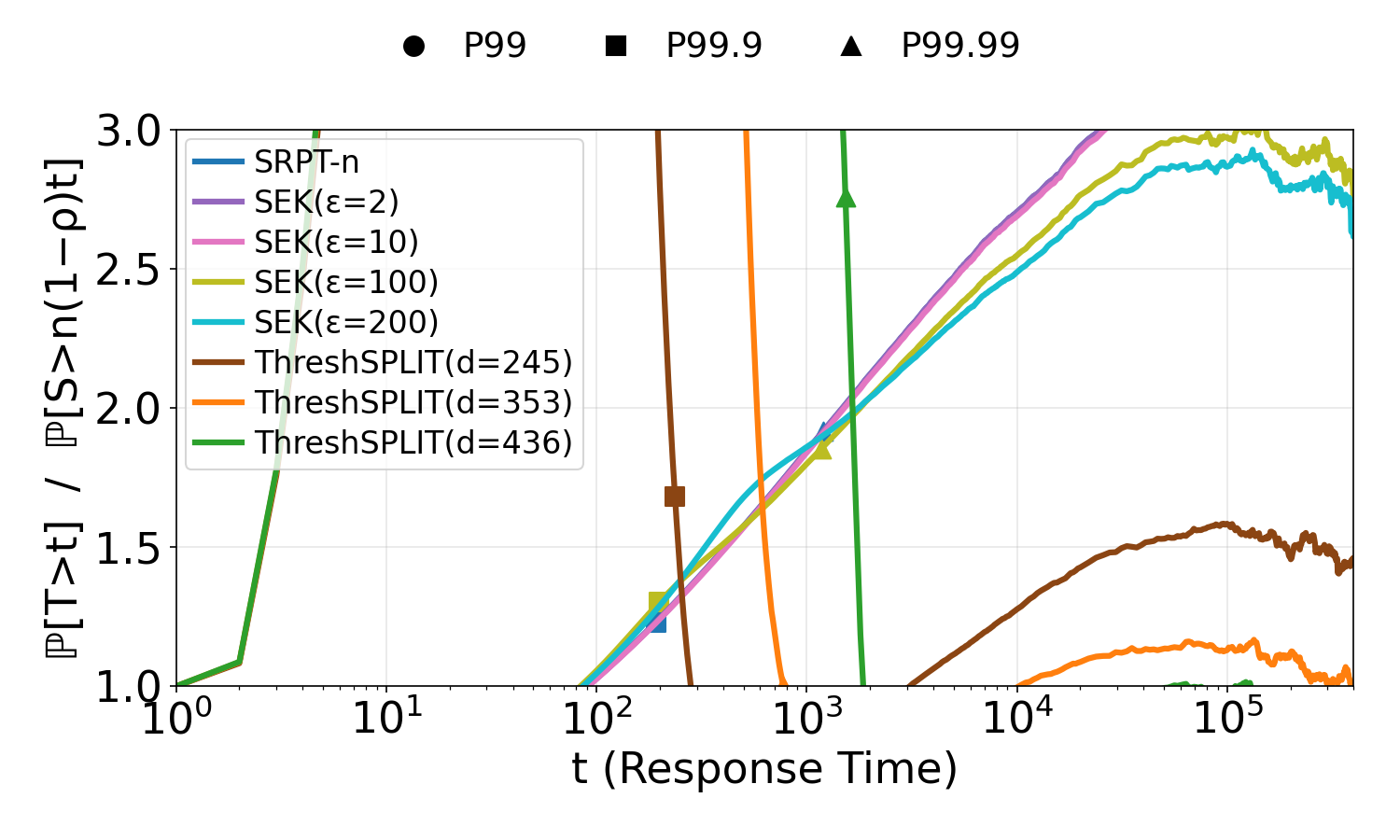}
  \caption{FCFS for small jobs}
  \label{subfig:app-exp3-fcfs}
\end{subfigure}\hfill
\begin{subfigure}[b]{0.45\textwidth}
  \includegraphics[width=\textwidth]{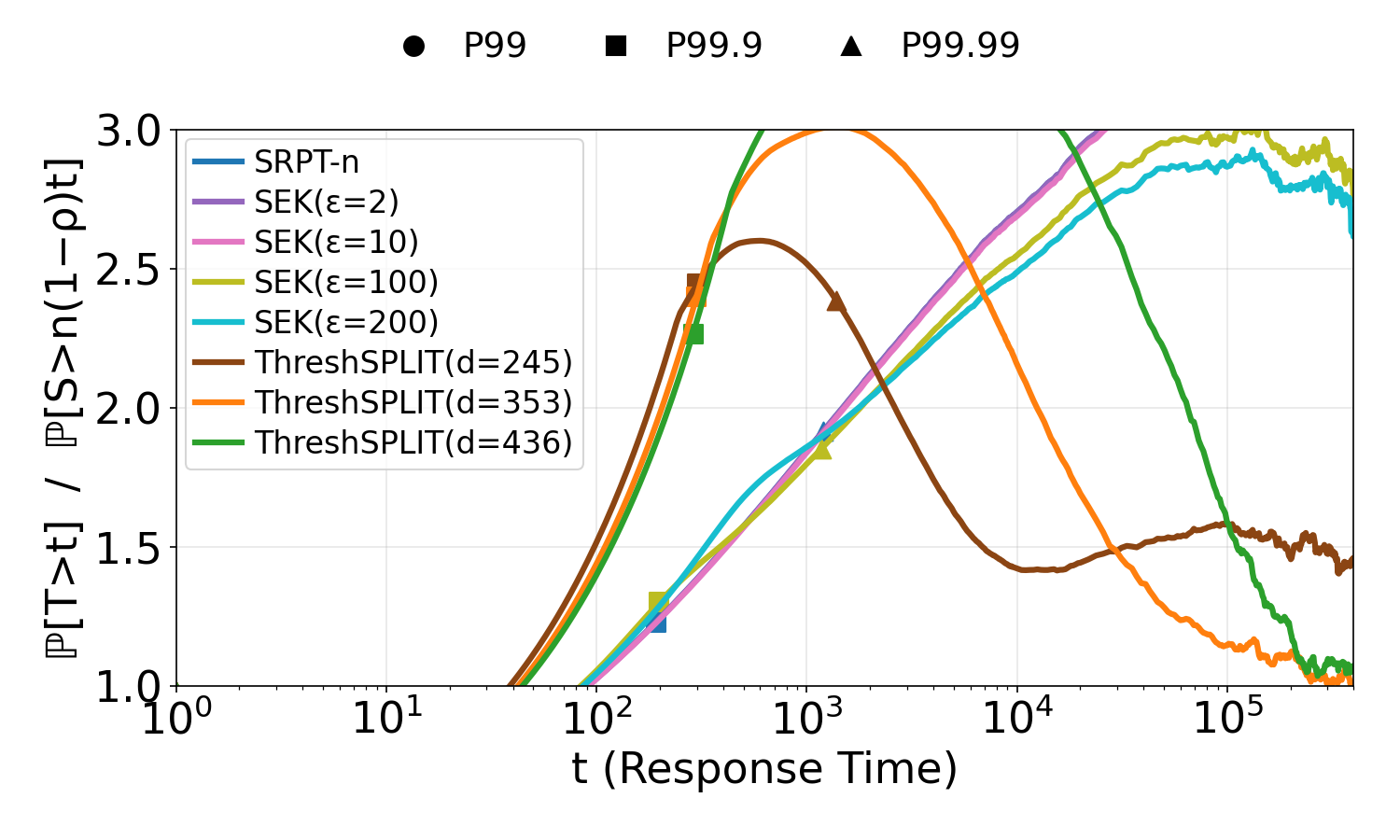}
  \caption{SRPT-$(n-1)$ for small jobs}
  \label{subfig:app-exp3-srpt}
\end{subfigure}
\caption{\threshsplit{d}, higher-load setting ($n=10$, $\rho=0.94$, $\alpha=1.5$).}
\label{fig:app:fcfs-vs-srpt-exp3}
\end{figure}

\section{Simulation figures with SEK policies for different parameters}
\label{sec:appendix:sek-epsilon}

In Section~\ref{sec:simulation}, the figures in the main text show only the SEK-$\epsilon$ curve with the largest $\epsilon$ in our sweep, since this gives the best asymptotic tail among the SEK family. For completeness, this appendix replicates Figures~\ref{subfig:lowload-main}, \ref{subfig:highload-exp2}, \ref{subfig:highload-exp3}, \ref{subfig:alpha1.4}, and \ref{subfig:alpha2} with the full set of SEK-$\epsilon$ curves for several values of $\epsilon$. Across all settings, we observe that SEK-$\epsilon$ improves on SRPT-$n$ as $\epsilon$ grows, but the asymptotic improvement saturates.

\begin{figure}[h]
\centering
\includegraphics[width=0.45\textwidth]{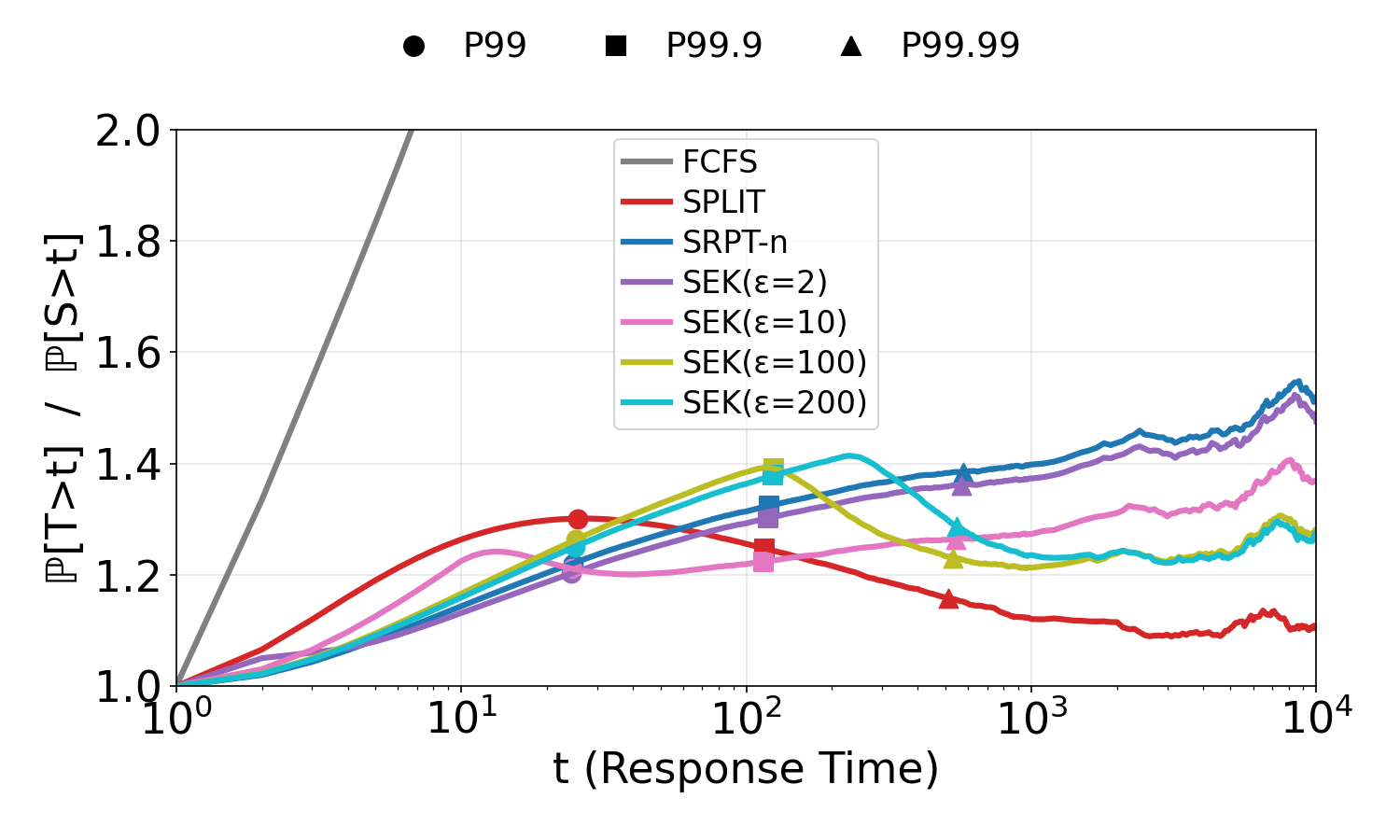}
\caption{Low-load setting ($n=3$, $\rho=0.5$, $\alpha=1.5$); cf.\ Figure~\ref{subfig:lowload-main}.}
\label{fig:app:sek-lowload}
\end{figure}

\begin{figure}[h]
\centering
\begin{subfigure}[b]{0.45\textwidth}
  \includegraphics[width=\textwidth]{figs/simulations/exp2-main.png}
  \caption{$n=3$, $\rho=0.8$; cf.\ Figure~\ref{subfig:highload-exp2}.}
  \label{subfig:app-sek-exp2}
\end{subfigure}\hfill
\begin{subfigure}[b]{0.45\textwidth}
  \includegraphics[width=\textwidth]{figs/simulations/exp3-main.png}
  \caption{$n=10$, $\rho=0.94$; cf.\ Figure~\ref{subfig:highload-exp3}.}
  \label{subfig:app-sek-exp3}
\end{subfigure}
\caption{High-load settings, Pareto job sizes with $\alpha=1.5$.}
\label{fig:app:sek-highload}
\end{figure}

\begin{figure}[h]
\centering
\begin{subfigure}[b]{0.45\textwidth}
  \includegraphics[width=\textwidth]{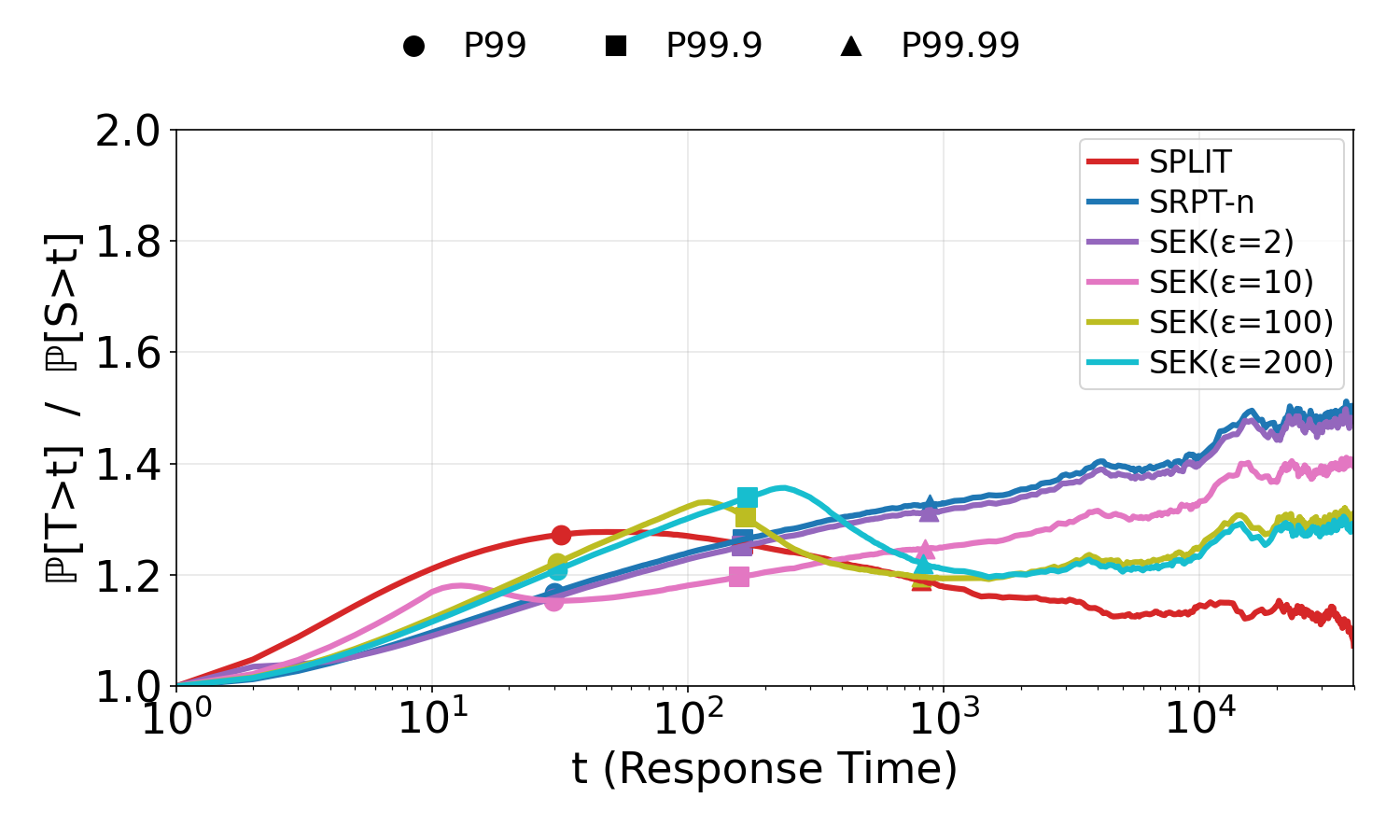}
  \caption{$\alpha=1.4$; cf.\ Figure~\ref{subfig:alpha1.4}.}
  \label{subfig:app-sek-alpha1.4}
\end{subfigure}\hfill
\begin{subfigure}[b]{0.45\textwidth}
  \includegraphics[width=\textwidth]{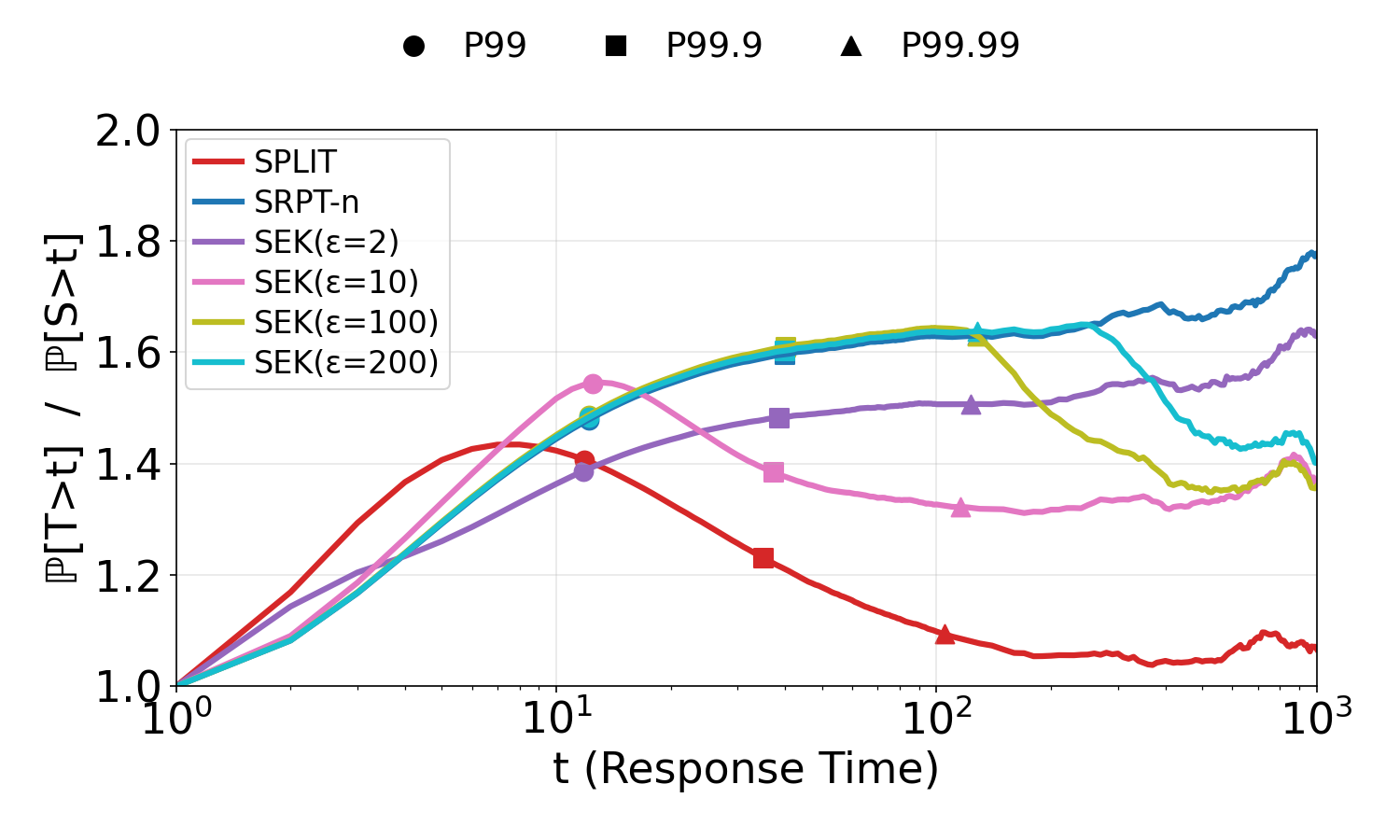}
  \caption{$\alpha=2.0$; cf.\ Figure~\ref{subfig:alpha2}.}
  \label{subfig:app-sek-alpha2}
\end{subfigure}
\caption{Varying Pareto tail index, $n=3$ servers, $\rho=0.5$.}
\label{fig:app:sek-alpha}
\end{figure}